\providecommand{\keywords}[1]
{
 {\small
  \textbf{\textit{Keywords:}}} #1
}
\newtheorem{theorem}{Theorem}[section]
\newtheorem{lemma}[theorem]{Lemma}
\newtheorem{corollary}[theorem]{Corollary}
\theoremstyle{definition}
\newtheorem{remark}[theorem]{Remark}
{\theoremstyle{plain}\newtheorem{assumption}{Assumption}}
\numberwithin{equation}{section}
\newcommand{\Bfrak}{\mathfrak{B}}
\newcommand{\bfrak}{\mathfrak{b}}
\newcommand{\cov}{{\rm Cov}}
\newcommand{\dd}{{\rm d}}
\newcommand{\E}{{\rm E}}
\newcommand{\eps}{\varepsilon}
\newcommand{\Ealg}{\mathcal{E}}
\newcommand{\Falg}{\mathcal{F}}
\newcommand{\Galg}{\mathcal{G}}
\newcommand{\Zalg}{\mathcal{Z}}
\newcommand{\normal}{{\rm N}}
\newcommand{\pr}{{\rm Pr}}
\newcommand{\real}{\mathbb{R}}
\newcommand{\sigalg}{$\sigma$-algebra}
\newcommand{\tr}{{\rm t}}
\newcommand{\Zscr}{\mathscr{Z}}
\newcommand{\Gscr}{\mathscr{G}}
\newcommand{\Fscr}{\mathscr{F}}
\newcommand{\var}{{\rm Var}}
\newcommand{\Xalg}{\mathcal{X}}
\newcommand{\rdd}{{regression discontinuity design}}
\newcommand{\ate}{{\text{\sc ate}}}
\newcommand{\cate}{{\text{\sc cate}}}
\newcommand{\late}{{\text{\sc late}}}
\newcommand{\indep}{\perp \!\!\! \perp}
\newcommand{\abs}[1]{\lvert#1\lvert} 
\newcommand{\norm}[1]{\lVert#1\rVert} 
\title{Regression discontinuity design with\\ right-censored survival data}
\author{Emil Aas Stoltenberg\\\
	\texttt{\small emilstoltenberg@gmail.com} } 
\affil{\normalsize Department of Data Science\\ BI Norwegian Business School, Oslo}	
\date{\today} 
\begin{document}
\maketitle

\begin{abstract} {\small In this paper the regression discontinuity design is adapted to the survival analysis setting with right-censored data, studied in an intensity based counting process framework. In particular, a local polynomial regression version of the Aalen additive hazards estimator is introduced as an estimator of the difference between two covariate dependent cumulative hazard rate functions. Large-sample theory for this estimator is developed, including confidence intervals that take into account the uncertainty associated with bias correction. As is standard in the causality literature, the models and the theory are embedded in the potential outcomes framework. Two general results concerning potential outcomes and the multiplicative hazards model for survival data are presented.}
\end{abstract} \hspace{10pt}

\keywords{Asymptotic bias; bias correction; causality; censoring; confounding; counting processes; hazard rate functions; intensity processes; martingales; multiplicative hazard model; treatment effects.}

\section{Introduction}\label{sec::intro} The {\rdd} is a widely used technique for causal inference in economics, political science, and sociology (see~\citet{van2008regression}, \citet{imbens2008regression}, \citet{lee2010regression}, and \citet{cattaneo2022regression} for recent reviews). In the {\rdd}, possible confounders are {`}controlled for{'} by exploiting that units are assigned to treatment based on whether their value of an observed covariate is above or below some known cut-off, the idea being that units with values of this observed covariate just above the known cut-off are similar to the subjects with values just below the cut-off.

For situations where the observed outcomes are noncensored random variables, the inference theory, building on that of local polynomial regression, is well developed (see e.g., \citet{fan1996local}, \citet{hahn2001identification}, and \citet{calonico2014robust,calonico2014supplement}, and the reviews above). In this paper we extend the {\rdd} to the survival analysis setting. In particular, we study the {\rdd} applied to right-censored survival data in an intensity based counting process framework (see \citet{ABGK93}, and \citet{ABG08,aalen2010history}). 
The survival analysis models we study are all instances of the broad class of multiplicative intensity models, meaning that the modelling and inference revolves around the hazard rate function. 

The setup is as follows: On a probability space $(\Omega,\mathscr{H},\pr)$, let $Z \in \real$ be an observed covariate; $X\in \{0,1\}$ a treatment indicator; $U$ a vector of unobserved confounders; $\widetilde{T}^0 \geq 0$ the potential outcome for the non-treated, and $\widetilde{T}^1 \geq 0$ the potential outcome under treatment (for potential outcomes theory, see, for example,~\citet{holland1986statistics}, \citet{morgan2015counterfactuals}, \citet{imbens2015causal}, and \citet{imbens2020potential}\footnote{\citet{imbens2020potential} is an excellent and very readable article comparing the potential outcomes theory to the directed acyclic graph theory associated with Judea Pearl. See~\citep{pearl2009causality} for the canonical treatise on DAGs, and \citet{pearl2018why} or~\citet{pearl2016causal} for more accessible accounts.}). Since $\widetilde{T}^0$ and $\widetilde{T}^1$ take values on the positive half of the real line, we refer to these two random variables as {\it potential lifetimes}. That we are in a regression discontinuity setting, means that treatment is determined by the value of an observed covariate. We take $X = I\{Z \geq z_0\}$ for some known cut-off $z_0$, where $I\{A\}$ is the indicator function of the event $A$.\footnote{What we describe here is the {\it sharp} regression discontinuity design. In contrast, a {\it fuzzy} regression discontinuity design occurs when the probability of receiving treatment does not jump from zero to one at the cut-off, rather $z \mapsto \pr(X = 1\mid Z = z)$ has a point of discontinuity at the cut-off. This paper deals only with the sharp regression discontinuity design.} The two potential lifetimes $\widetilde{T}^0$ and $\widetilde{T}^1$ are assumed to stem from distributions with hazard rate functions $\alpha_0(t,Z,U)$ and  $\alpha_1(t,Z,U)$, respectively, meaning that 
\begin{equation}
\lim_{\eps \to 0}\frac{1}{\eps}\,\pr(\widetilde{T}^g \in [t,t + \eps) \mid \widetilde{T}^g  \geq t, W = w , U = u) 
= \alpha_g(t,z,u), \quad\text{for $g = 0,1$}.  
\notag
\end{equation}
In analogy with the average treatment effect (the {\ate}) and conditional or local average treatment effects ({\cate}s or {\late}s) (see, e.g., \citet{imbens2015causal}), the causal estimands of the present paper are all defined in terms $\alpha_0(t,Z,U)$ and  $\alpha_1(t,Z,U)$ when the confounder $U$ is averaged out. Specifically, with 
\begin{equation}
\theta(t,z) = \E\,\{\alpha_1(t,Z,U) - \alpha_0(t,Z,U) \mid Z = z\},
\label{eq::theta_z0}
\end{equation}
the main estimand in this paper is
\begin{equation}
\Theta(t,z) = \int_0^t \theta(s,z) \,\dd s,
\notag
\end{equation}
evaluated in the cut-off $z = z_0$. The estimands $\theta(t,z_0)$ and $\Theta(t,z_0)$ are, we contend, natural survival analysis counterparts of the classical estimand in the standard {\rdd}, namely the average treatment effect at the cut-off, see, for example, Eq.~(2.1) in \citet[p.~617]{imbens2008regression}. 

The paper proceed as follows. In Section~\ref{sec::model_and_estimands} we provide a general presentation of the potential outcomes framework as it applies to right-censored survival data studied in a counting process framework. In Section~\ref{sec::data_and_assumptions}, we narrow in on the {\rdd} and present a lemma that is key to making the {\rdd} feasible in the present setting. Section~\ref{subsec::estimands} contains a discussion of various causal estimands, and assumptions under which they may be identified. In Section~\ref{sec::rdd_in_surv} we lay out the assumptions made about the model generating the data, and present a special case of the covariate-localised Aalen additive hazards estimator. Section~\ref{subsec::large_sample} contains large-sample theory for the general version of the estimator, with results on variance estimation and bias correction in Sections~\ref{subsec::variance_estimation} and \ref{sec::bias_correction}, respectively. 

\subsection{Potential outcomes and right-censored data}\label{sec::model_and_estimands}
Let $(\widetilde{T}^0,\widetilde{T}^1,X,Z,U,C)$ be random variables on the probability space $(\Omega,\mathscr{H},\pr)$. Here, $\widetilde{T}^0$ and $\widetilde{T}^1$ are the potential lifetimes corresponding to whether a subject is not treated or treated, respectively; $X \in \{0,1\}$ is an indicator of treatment; $Z$ is a vector of observed covariates; $U$ is a vector of unobserved possible confounders;\footnote{A confounder is a covariate that is associated with the outcome {\it and} with the treatment variable of interest. In the multiplicative intensity models studied here, this means that a covariate, to be a confounder, must have an effect on the hazard rate {\it and} be correlated with the treatment variable. If a covariate only satisfy one of these two conditions, then it is not a confounder.} and $C \geq 0$ is a censoring variable. We use $g$ as an index for untreated ($g = 0$) and treated ($g = 1$), and write, for example, $\widetilde{T}^g$ to indicate one of the two potential lifetimes. The units under study are observed over the time interval $[0,\tau]$, where $\tau < \infty$. Assume that, conditionally on $(Z,U)$, the potential lifetimes stem from distributions with hazard rate functions $\alpha_{0}(t,Z,U)$ and $\alpha_{1}(t,Z,U)$.\footnote{Here we model the unconfoundedness assumption $(\widetilde{T}^0,\widetilde{T}^1)\indep X \mid (Z,U)$ directly. See~\citet[Eq.~(2)]{yadlowsky2018bounds}. In the {\rdd} this assumption is trivially satisfied \citet[p.~618]{imbens2008regression}.}

What distinguishes the potential outcomes theory when right-censoring is present from when there is no censoring, is that with no censoring only one of the two potential lifetimes is observed for the same unit, while for right-censored data {\it at most} one of the potential lifetimes is actually observed for the same unit. This means that when there is no censoring we observe 
\begin{equation}
\widetilde{T} = X\widetilde{T}^1 + (1 - X)\widetilde{T}^0, 
\notag
\end{equation}
while, when the data are right-censored we only observe $\widetilde{T}$ if it is smaller than the censoring time $C$, that is
\begin{equation}
T = \min( \widetilde{T}, C) =  XT^1 + (1 - X)T^0, 
\notag
\end{equation}
where $T^0 = \min(\widetilde{T}^0,C)$ and $T^1 = \min(\widetilde{T}^1,C)$ are the possibly right-censored potential lifetimes. Throughout the paper, we assume that the two potential lifetimes are conditionally independent given the covariates, which we express by
\begin{equation}
\widetilde{T}^0\indep\widetilde{T}^1\mid (X,Z,U).
\label{eq::indep_lifetimes}
\end{equation} 
and we work under the assumption of {\it random right-censoring} (see \citet[Example~III.25, p.~143]{ABGK93}), which means that $(\widetilde{T}^0,\widetilde{T}^1)$ is independent of the censoring time $C$  given $(X,Z,U)$, with symbols
\begin{equation}
(\widetilde{T}^0,\widetilde{T}^1) \indep C \mid (X,Z,U).
\label{eq::indep_censoring}
\end{equation}
Two potential lifetimes in turn leads to two indicators of noncensoring, 
\begin{equation}
\delta^0 = I\{\widetilde{T}^0 \leq C\},\quad\text{and} \quad\delta^1 = I\{\widetilde{T}^1 \leq C\}, 
\notag
\end{equation}
and to two potential counting and potential at-risk processes 
\begin{equation}
N^g(t) = I\{T^g \leq t,\delta^g = 1 \}\quad \text{and}\quad Y^g(t) = I\{T^g \geq t\},
\quad \text{for $g = 0,1$}.
\notag
\end{equation}
Let $\Xalg = \sigma(X,Z,U)$, define the filtrations $\mathscr{E}_{t}^g =  \sigma(\{N^g(s),Y^g(s)\}_{s\leq t})$ for $g = 0,1$, and set    
\begin{equation}
\Gscr_t^g = \mathscr{E}_{t}^g \vee \Xalg,\quad \text{for $g = 0,1$}.
\notag
\end{equation}
It is assumed that $X$, $Z$, and $U$ are realised at time zero, meaning that $\Gscr_0^g = \Xalg$. With respect to $\Gscr_t^0$ and $\Gscr_t^1$ we have, using the assumptions in~\eqref{eq::indep_lifetimes} and \eqref{eq::indep_censoring} (see \citet[Ch.~III.2.2, pp.~138--141]{ABGK93}), two {`}potential{'} local square integrable martingales $M^0$ and $M^1$, respectively, given by
\begin{equation}
M^{g}(t) = N^g(t) - \int_0^t Y^g(s)\alpha_{g}(s,Z,U)\,\dd s,\quad \text{for $g = 0,1$}.
\label{eq::potential_MGs}
\end{equation}
This ends our description of our basic modelling assumptions. The reader familiar with the counting process approach to, and martingale methods in, survival analysis will see that the above is nothing more than what one gets the when taking the potential outcomes framework seriously and applying it to the standard intensity based counting process approach to survival analysis. In fact, the presentation so far is just a slight notational and semantic reformulation of the theory presented in Chapter~III.2 of \citet{ABGK93}. It is important to notice that the modelling undertaken up to this point has taken place in the two potential worlds, so to speak, culminating in the {`}potential world{'} martingales of~\eqref{eq::potential_MGs}. We now proceed to the consequences of this model for quantities of this world, that is, the observed quantities. Denote the observed counting process 
\begin{equation}
N(t) = XN^1(t) + (1 - X)N^0(t), 
\label{eq::obs_count}
\end{equation}
and the observed at-risk process
\begin{equation}
Y(t) = XY^1(t) + (1 - X)Y^0(t).
\label{eq::obs_atrisk}
\end{equation} 
Let $\Ealg_{t} = \sigma(\{N(s),Y(s)\}_{s\leq t})$ be the filtration generated by the observable counting and at-risk processes, and let $\Xalg^{\rm obs} = \sigma(X,Z)$ be the {\sigalg} generated by the observed covariates. Define the filtrations
\begin{equation}
\Galg_{t} = \Ealg_t\vee \Xalg,\quad \text{and}
\quad \Falg_{t} = \Ealg_t\vee \Xalg^{\rm obs}, 
\label{eq::GalgFalgdef}
\end{equation}
where, as above, $\Galg_{0} = \Xalg$ and $\Falg_0 = \Xalg^{\rm obs}$. The next lemma says that the intensity process of the observed counting process $N$ with respect to $\Galg_t$ takes the form it it intuitively should take. 

\begin{lemma}\label{lemma::composite_intensity} Assume that~\eqref{eq::indep_lifetimes} and~\eqref{eq::indep_censoring} hold. The $\Galg_{t}$-intensity of the observed counting process $N = XN^1 + (1 -X)N^0$ is $Y(t) \{X \alpha_1(t,Z,U) + (1 - X) \alpha_0(t,Z,U) \}$. In particular, 
\begin{equation}
I_{X = g}M^g(t) = I_{X = g}\{ N^g(t) - \int_0^t Y^g(s)\alpha_{g}(s,Z,U)\,\dd s \}, 
\label{eq::Galg_MGs}
\end{equation}
is a locally square integrable martingale with respect to $\Galg_t$.
\end{lemma}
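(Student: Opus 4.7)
The plan is to first prove the martingale assertion \eqref{eq::Galg_MGs} for each $g \in \{0,1\}$, and to derive the $\Galg_t$-intensity of $N$ as a consequence. Since $X \in \{0,1\}$ satisfies $X(1-X) = 0$ and $X^2 = X$, the definitions \eqref{eq::obs_count}, \eqref{eq::obs_atrisk}, and \eqref{eq::potential_MGs} give the identity
\begin{equation}
I_{X=1}M^1(t) + I_{X=0}M^0(t) = N(t) - \int_0^t Y(s)\{X\alpha_1(s,Z,U) + (1-X)\alpha_0(s,Z,U)\}\,\dd s.
\notag
\end{equation}
Once each summand on the left is known to be a $\Galg_t$-martingale, the right-hand side is too; and because $X$, $Y$, $Z$, and $U$ are $\Galg_t$-measurable with $Y$ left-continuous, the integrand $Y(s)\{X\alpha_1 + (1-X)\alpha_0\}$ is $\Galg_t$-predictable, identifying it as the $\Galg_t$-intensity of $N$.

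To establish that $I_{X=g}M^g$ is a $\Galg_t$-martingale, I would exploit the elementary observation that on the event $\{X = g\}$ one has $N(\cdot) = N^g(\cdot)$ and $Y(\cdot) = Y^g(\cdot)$, so the trace $\sigma$-algebras agree: $\Galg_s \cap \{X = g\} = \Gscr_s^g \cap \{X = g\}$. Consequently, for any $A \in \Galg_s$ there exists a $\Gscr_s^g$-measurable random variable $H_s$ with $I_A I_{X=g} = I_{X=g} H_s$ almost surely, and since $I_{X=g}$ itself is $\Xalg \subseteq \Gscr_s^g$-measurable, the product $I_{X=g}H_s$ is $\Gscr_s^g$-measurable. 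Because $M^g$ is a $\Gscr_t^g$-martingale, as recorded in \citet[Ch.~III.2.2]{ABGK93} under the unconfoundedness assumption \eqref{eq::indep_lifetimes} and the random right-censoring assumption \eqref{eq::indep_censoring}, the tower property yields
\begin{equation}
\E[I_A I_{X=g}\{M^g(t) - M^g(s)\}] = \E[I_{X=g}H_s\,\E[M^g(t) - M^g(s) \mid \Gscr_s^g]] = 0.
\notag
\end{equation}
Since $A \in \Galg_s$ was arbitrary, this gives $\E[I_{X=g}\{M^g(t) - M^g(s)\} \mid \Galg_s] = 0$, as required.

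The main obstacle, though not a deep one, is verifying the trace-$\sigma$-algebra identity $\Galg_s \cap \{X = g\} = \Gscr_s^g \cap \{X = g\}$ rigorously; this is handled by a monotone class argument on cylinder-set generators of the path spaces of $N$ and $N^g$, using that $\Xalg$ is a common factor in both filtrations. Local square integrability then transfers from $M^g$ to $I_{X=g}M^g$: any localizing sequence $\{\tau_n\}$ for $M^g$ with respect to $\Gscr_t^g$ works for $I_{X=g}M^g$ with respect to $\Galg_t$ after the same trace-argument, and the predictable quadratic variation is computed as $\langle I_{X=g}M^g\rangle(t) = I_{X=g}\int_0^t Y^g(s)\alpha_g(s,Z,U)\,\dd s$, which equals $\int_0^t I_{X=g} Y(s)\alpha_g(s,Z,U)\,\dd s$ on the full space and is finite on bounded time intervals by the usual integrability of the cumulative hazard.
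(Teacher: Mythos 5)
Your proof is correct, but it takes a genuinely different route from the paper's. The paper works top-down through intensities: it forms the joint filtration $\Gscr_t = \Gscr_t^0 \vee \Gscr_t^1$, notes that the $\Gscr_t$-intensity of $N^g$ is still $Y^g(t)\alpha_g(t,Z,U)$ (this is where \eqref{eq::indep_lifetimes} enters), multiplies by the $\Gscr_0$-measurable indicator $I_{X=g}$, and then projects down to $\Galg_t \subset \Gscr_t$ via the innovation theorem, observing that $\E\{I_{X=g}Y^g(t)\alpha_g(t,Z,U)\mid\Galg_{t-}\}$ is its own conditional expectation because $I_{X=g}Y(t)$ is $\Galg_t$-predictable and $U$ is $\Galg_0$-measurable. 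You instead verify the $\Galg_t$-martingale property of $I_{X=g}M^g$ directly from the $\Gscr_t^g$-martingale property of $M^g$, via the trace identity $\Galg_s\cap\{X=g\}=\Gscr_s^g\cap\{X=g\}$ and the tower property, and only then read off the intensity of $N$ from the algebraic identity and predictability of the integrand. Your route is more self-contained: it never needs the joint filtration $\Gscr_t^0\vee\Gscr_t^1$ nor the innovation theorem, at the cost of the (sketched but entirely routine) monotone-class verification of the trace identity; it also makes transparent exactly where adaptedness of $I_{X=g}M^g(s)$ comes from, namely $N^g=N$ and $Y^g=Y$ on $\{X=g\}$. The paper's route is shorter given the cited machinery and sets up the pattern reused in Lemma~\ref{lemma::barMG}, where the innovation-theorem projection onto $\Falg_t$ genuinely changes the intensity. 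Two small points to tighten: the tower-property step uses that $M^g$ is a true (not merely local) martingale on $[0,\tau]$, which holds here since $\E\,N^g(\tau)\le 1$ bounds the compensator in expectation; and a $\Gscr_t^g$-localizing sequence need not consist of $\Galg_t$-stopping times, so it is cleaner to localize $I_{X=g}M^g$ with its own $\Galg_t$-adapted continuous compensator $I_{X=g}\int_0^{\cdot}Y^g(s)\alpha_g(s,Z,U)\,\dd s$ (or simply invoke, as the paper does, that counting-process martingales are automatically locally square integrable).
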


\begin{proof} Define $\Gscr_t = \Gscr_t^0 \vee \Gscr_t^1$. It follows directly from the theory on independent right-censoring \citep[pp.~138--140]{ABGK93} that the $\Gscr_t$-intensity of $N^g(t)$ is also $Y^g(t)\alpha_g(t,Z,U)$. Since $I_{X=g}$ is $\Gscr_{t}$ measurable for all $t$, the $\Gscr_t$-intensity of $I_{X=g}N^g(t)$ is $I_{X=g}Y^g(t)\alpha_g(t,Z,U)$ (this is a consequence of Theorem~II.3.1 in \citet[p.~71]{ABGK93}, considering $I_{X=g}$ as a constant process in $t$). Since $\Galg_{t}\subset \Gscr_t$, the innovation theorem (see \citet[Theorem~3.4, p.~706]{aalen1978nonparametric} or \citet[p.~80]{ABGK93}) entails that the $\Galg_t$-intensity process, $\lambda_g^{\Galg}(t)$ say, of $I_{X=g}N^g$ is
\begin{equation}
\lambda_g^{\Galg}(t) = \E\,\{ I_{X=g}Y^g(t) \alpha_g(t,Z,U) \mid \Galg_{t-}\}.
\notag
\end{equation}
Since $I_{X=g}Y^g(t) = I_{X=g}Y(t)$, in terms of the observed at-risk process in~\eqref{eq::obs_atrisk}; $I_{X = g}$ is $\Galg_t$-measurable for all $t$, and $Y(t)$ is predictable with respect to $\Galg_{t}$,   
\begin{equation}
\begin{split}
\lambda_g^{\Galg}(t) & = \E\,\{ I_{X=g}Y^g(t) \alpha_g(t,Z,U) \mid \Galg_{t-}\}\\
& = \E\,\{ I_{X=g}Y(t) \alpha_g(t,Z,U) \mid \Galg_{t-}\}
=  I_{X=g}Y(t) \alpha_g(t,Z,U).
\end{split}
\notag
\end{equation}
This entails that $I_{X=g} M^g(t) = I_{X = g}\{N^g(t) - \int_0^t \lambda_g^{\Galg}(s)\,\dd s\}$ is a $\Galg_t$ martingale, and, because it is a counting process martingale, it is locally square integrable (see \citet[p.~73]{ABGK93}). Since $X$ is $\Galg_0$-measurable ($\Galg_0 = \Xalg$) and does not depend on time, $XM^1 + (1 - X)M^0$ is also a locally square integrable martingale.   
\end{proof}
An important, though rather intuitive, thing to note in the preceding lemma is that it does {\it not} say that $M^g$ is a $\Galg_t$-martingale. Instead, the lemma says that $I_{X = g}M^g$ is a $\Galg_t$-martingale. This is intuitive because $\Galg_t$ is generated by the observables $N$ and $Y$ in~\eqref{eq::obs_count} and \eqref{eq::obs_atrisk}, respectively, and if $X=0$, for example, then the path of $N^1$ is certainly not observable.  
In the next section we specialise the potential outcomes model for right-censored survival data to the regression discontinuity setting.

\subsection{The regression discontinuity setting}\label{sec::data_and_assumptions}
We retain the definitions and assumptions from the previous section, with the following two exceptions: To conform with the {\rdd}, we require $Z \in \real$ and set $X = I\{Z \geq z_0\}$ for some known cut-off $z_0$. Henceforth, we often refer to $Z$ as the {\it forcing variable}. Notice also that $\sigma(X)$ is in included in $\sigma(Z)$, with the consequence that $\Xalg = \sigma(Z,U)$ and $\Xalg^{\rm obs} = \sigma(Z)$, and thus $\Galg_t = \Ealg_t \vee \sigma(Z,U)$ and $\Falg_t = \Ealg_t \vee \sigma(Z)$ (compare with~\eqref{eq::GalgFalgdef}). 

In order to make what follows clear, we now take a short detour via the standard regression discontinuity design, as presented, for example, in~\citet{imbens2008regression} or \citet[Section~2]{calonico2014robust}. Let $y^0$ and $y^1$ be two real valued potential outcomes,\footnote{We use lowercase letters to distinguish these two potential outcomes from the potential at-risk processes introduced in Section~\ref{sec::model_and_estimands}.} $Z \in \real$ the forcing variable, and $X = I\{Z \geq z_0\}$ the treatment indicator. The observed outcome is $y = Xy^1 + (1 - X)y^0$, and the estimand of interest is $\tau_{\rm srd} = \E\,( y^1 - y^0 \mid Z = z_0)$, called the average treatment effect at the cut-off (or threshold). To estimate $\tau_{\rm srd}$, the limits $\lim_{z \uparrow z_0}\E\,( y \mid Z = z)$ and $\lim_{z \downarrow z_0}\E\,( y \mid Z = z)$ are estimated using local polynomial regressions to the left and to the right of the cut-off, respectively. 
Since the roles of the hazard rate functions $\alpha_0(t,Z,U)$ and $\alpha_1(t,Z,U)$ in multiplicative intensity models are analogoues to those of the conditional expectations $\E\,(y^0 \mid Z)$ and $\E\,(y^1 \mid Z)$ in the standard {\rdd}, we would like to work with hazard rate functions that only depend on the forcing variable (and time). More to the point, if $\E\,\{\alpha_g(t,Z,U) \mid \sigma(Z)\}$ was the hazard rate function of $I_{X = g} N^g$ with respect to the filtration of observables, then the standard local polyonimial regression theory would be straighforward to mimic. This is not quite the case, but nearly, in a sense made precise by the next lemma.


\begin{lemma}\label{lemma::A4} Let $T,Z,U,N,Y$, $\Ealg_t, \Falg_t$, and $\Galg_t$ be as defined in Section~\ref{sec::data_and_assumptions}. Assume that $Z$ has density $f_Z(z)$ with support $[z_1,z_2]$, and that $f_Z(z)$ is bounded above and below on $[z_1,z_2]$, and that $z \mapsto \pr(T \geq \tau \mid Z = z)$ is bounded below on $[z_1,z_2]$. For each $t \in [0,\tau]$, define 
\begin{equation} 
\mathcal{Z}_t = \sigma(Y(t))\vee \sigma(Z),
\notag
\end{equation}
and let $\xi_t$ be a non-negative $\Galg_t$-adapted and almost surely left-continuous process, with $\E\,\sup_{t \leq \tau}\abs{\xi_t}^2\, < \infty$. Then the left-continuous modifications of the processes 
\begin{equation}
\E\,(\xi_t \mid \Falg_{t-})\quad\text{and}\quad I_{T \geq t}\E\,(\xi_t \mid \Zalg_t) + I_{T < t}\E\,(\xi_t \mid \Falg_{t-}), 
\notag
\end{equation}
are indistinguishable.
\end{lemma}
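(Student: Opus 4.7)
The main observation is that on the event $\{T\geq t\}$ the observed counting and at-risk paths satisfy $N(s)=0$ and $Y(s)=1$ for every $s<t$, so the past observable information collapses to a function of $Z$ alone. This is what makes the replacement of $\Falg_{t-}$ by $\Zalg_t$ possible on the survivor set.

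\emph{Step 1 (trace $\sigma$-algebra equality).} Because $Y$ is left-continuous, $Y(t)=\lim_{s\uparrow t}Y(s)$ is $\Ealg_{t-}$-measurable, hence $\{T\geq t\}=\{Y(t)=1\}\in\Falg_{t-}\cap\Zalg_t$. On $\{T\geq t\}$ every set in $\Ealg_{t-}$ is either empty or equal to $\{T\geq t\}$, so the trace on $\{T\geq t\}$ of $\Falg_{t-}=\Ealg_{t-}\vee\sigma(Z)$ coincides with that of $\sigma(Z)$. Since $Y(t)=1$ on $\{T\geq t\}$, the trace on $\{T\geq t\}$ of $\Zalg_t=\sigma(Y(t))\vee\sigma(Z)$ also coincides with that of $\sigma(Z)$. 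Hence the two traces agree: every set in $\Zalg_t$ admits a representative in $\Falg_{t-}$ modulo the complement of $\{T\geq t\}$, and vice versa.

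\emph{Step 2 (pointwise equality).} I would next deduce that for each fixed $t$,
\begin{equation}
I_{T\geq t}\,\E(\xi_t\mid\Falg_{t-}) = I_{T\geq t}\,\E(\xi_t\mid\Zalg_t)\quad\text{a.s.}
\notag
\end{equation}
Both sides vanish on $\{T<t\}$, since $I_{T\geq t}\in\Falg_{t-}\cap\Zalg_t$. On $\{T\geq t\}$ the left-hand side is measurable with respect to the common trace $\sigma$-algebra from Step 1. For $C\in\Zalg_t$ choose $C'\in\Falg_{t-}$ with $C\cap\{T\geq t\}=C'\cap\{T\geq t\}$; the defining property of conditional expectation with respect to $\Falg_{t-}$ then yields
\begin{equation}
\E(I_C I_{T\geq t}\xi_t) = \E(I_{C'}I_{T\geq t}\xi_t) = \E(I_{C'}I_{T\geq t}\,\E(\xi_t\mid\Falg_{t-})) = \E(I_C I_{T\geq t}\,\E(\xi_t\mid\Falg_{t-})),
\notag
\end{equation}
which identifies $I_{T\geq t}\E(\xi_t\mid\Falg_{t-})$ as a version of $I_{T\geq t}\E(\xi_t\mid\Zalg_t)$. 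Adding $I_{T<t}\E(\xi_t\mid\Falg_{t-})$ to both sides yields the claimed identity at the level of random variables.

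\emph{Step 3 (indistinguishability).} Fixed-$t$ equality is upgraded to indistinguishability of the left-continuous modifications via a standard density argument: outside the countable union of exceptional null sets indexed by $t\in[0,\tau]\cap\mathbb{Q}$, the two left-continuous paths agree on a dense set, and left-continuity forces agreement on all of $[0,\tau]$. The hypothesis $\E\sup_{t\leq\tau}\abs{\xi_t}^2<\infty$ together with the left-continuity of $\xi$ is what guarantees the existence of the left-continuous versions of both conditional expectation processes, through dominated convergence applied along sequences $t_n\uparrow t$.

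The main obstacle is Step 1: one must check carefully that $Y(t)$ itself already lies in $\Falg_{t-}$ (via left-continuity of $Y$) and that the collapse of $\Ealg_{t-}$ on $\{T\geq t\}$ is to a \emph{trivial} trace, not merely a small one. Everything else is bookkeeping: once the two traces coincide, the replacement of $\Falg_{t-}$ by $\Zalg_t$ on the survivor set is forced by the uniqueness of conditional expectations.
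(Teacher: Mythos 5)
Your Steps 1 and 2 are essentially correct and reproduce, in the language of trace $\sigma$-algebras, the paper's argument for the \emph{modification} half of the claim: the paper builds a $\pi$-system $\Pi_{t-}^{\Falg}$ generating $\Falg_{t-}$, observes that $\{T\geq t\}\cap C\in\Zalg_t$ for every generator $C$ (because $\{T\geq t\}\subset\{T\geq s\}$ for $s\leq t$ and $\{T\geq t\}\subset\{N(s)=0\}$ for $s<t$), and closes with Dynkin's lemma. Your observation that the traces of $\Falg_{t-}$ and of $\Zalg_t$ on $\{T\geq t\}$ both collapse to the trace of $\sigma(Z)$ is the same mechanism, and the fixed-$t$ identity $I_{T\geq t}\E(\xi_t\mid\Falg_{t-})=I_{T\geq t}\E(\xi_t\mid\Zalg_t)$ a.s.\ follows as you say.

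The genuine gap is Step 3, which you dismiss as bookkeeping but which is where essentially all of the paper's work, and all of the remaining hypotheses, are spent. First, ``dominated convergence along sequences $t_n\uparrow t$'' does not produce a left-continuous version of $\E(\xi_t\mid\Falg_{t-})$: both the integrand \emph{and the conditioning $\sigma$-algebra} vary with $t$. The paper must split $\abs{\E(\xi_s\mid\Falg_{s-})-\E(\xi_t\mid\Falg_{t-})}$ into $\E(\abs{\xi_s-\xi_t}\mid\Falg_{s-})$ plus $\abs{\E(\xi_t\mid\Falg_{s-})-\E(\xi_t\mid\Falg_{t-})}$ and control the second term by regularising the uniformly integrable martingale $\eta_s=\E(\xi_t\mid\Falg_s)$ and identifying its left limit $\eta_{s-}$ with $\E(\xi_t\mid\Falg_{s-})$ via predictable stopping times. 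Second, and more seriously, $t\mapsto\Zalg_t$ is \emph{not} a filtration, so no martingale regularisation is available for $I_{T\geq t}\E(\xi_t\mid\Zalg_t)$; the paper instead approximates $Z$ by dyadic simple functions $Z_n$, shows that each $I_{T\geq t}\E(\xi_t\mid\Zalg_t^{n})$ is a.s.\ left-continuous (being a finite sum of ratios $\E\,\xi_tI_{T\geq t}I_{G_{n,k}}/\E\,I_{T\geq t}I_{G_{n,k}}$), and then needs the convergence $\E(\xi_t\mid\Zalg_t^{n})\to\E(\xi_t\mid\Zalg_t)$ to be \emph{uniform in $t$} so that left-continuity survives the limit. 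That uniformity is obtained by a Markov-plus-Borel--Cantelli estimate which uses exactly the hypotheses your argument never touches: $f_Z$ bounded below, $z\mapsto\pr(T\geq\tau\mid Z=z)$ bounded below, and $\E\sup_{t\leq\tau}\abs{\xi_t}^2<\infty$ (via H{\"o}lder). The fact that these assumptions play no role in your proposal is the clearest sign that it is incomplete; and you have misplaced the difficulty, since your ``main obstacle'' (Step 1) is the easy half. The density-of-rationals argument you invoke is sound only \emph{after} left-continuous versions of both processes have been constructed, and constructing them is the actual content of the proof.
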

\begin{proof} The proof of this lemma is in Appendix~\ref{subsec::proofA4}. 
\end{proof}
Our application of this lemma is when $\xi_t$ is one of the hazard rate functions $\alpha_{0}(t,Z,U)$ or $\alpha_{1}(t,Z,U)$. The importance of this lemma derives from the fact that $Y(t)\E\,(\xi_t \mid \Falg_{t-})$ is, at first sight, a complicated function as it may depend on the forcing variable $Z$ as well as the paths of $N(t)$ and $Y(t)$. But an implication of Lemma~\ref{lemma::A4}, since $Y(t)$ is left-continuous, is that  
\begin{equation}
\pr\{ Y(t)\E\,(\xi_t \mid \Falg_{t-}) = Y(t) \E\,(\xi_t \mid \Zalg_t),\; \text{for all $t \in [0,\tau]$}\} =  1,
\notag
\end{equation}
and since $\Zalg_t = \{\{T \geq t\},\{T < t\},\emptyset,\Omega\}\vee \sigma(Z)$, the conditional expectation $Y(t) \E\,(\xi_t \mid \Zalg_t)$ is a function of the forcing variable only (see, for example, Lemma~1.13, p.~7 and the discussion on p.~106 in \citet{kallenberg2002foundations}).  This motivates defining the functions,\footnote{ That $\E\,(\xi_t \mid \Zalg_t) = \E\,\{I_{T \geq t}\xi_t \mid \sigma(Z)\}/\pr\{T \geq t\mid \sigma(Z)\} + \E\,\{I_{T < t}\xi_t \mid \sigma(Z)\}/\pr\{T < t\mid \sigma(Z)\}$ almost surely, can be proved following the steps in Ex.~34.4 of \citet[p.~455]{billingsley1995probability}.}    
\begin{equation}
\bar{\alpha}_{g}(t,z)  = \frac{ \E\,\{\alpha_g(t,Z,U)Y(t)\mid Z = z	\}	}{\pr\{T \geq t \mid Z = z\}	},\quad\text{for $g = 0,1$}.
\label{eq::def_alpha0}
\end{equation} 
Notice that if $\alpha_g(t,Z,U) \leq g(Z,U)$ for some $g$ such that $\E\,g(Z,U) < \infty$, then we can pass the derivative under the intergal sign in $\partial/\partial t\,\E\,\{S_g(t,Z,U) \mid \sigma(Z)\} = \E\,\{\partial/\partial tS_g(t,Z,U) \mid \sigma(Z)\}$,\footnote{The dominated convergence theorem extends to conditional expectations, see, e.g.~\citet[Lemma~2.4.3, p.~57]{cohen2015stochastic}} and consequently we have the relation $I_{X = g}\bar{\alpha}_g(t,z) =  - I_{X = g}(\partial \bar{S}_g(t,z)/\partial t)/ \bar{S}_g(t,z)$. We summarise the above in a lemma that is used repeatedly in the remainder of the paper.

\begin{lemma}\label{lemma::barMG} For $g = 0,1$ define 
\begin{equation}
\bar{M}^g(t) = I_{X = g} \{ N^g(t) - \int_0^t Y^g(s)\bar{\alpha}_g(s,Z)\,\dd s\}. 
\notag
\end{equation}
Then $\bar{M}^g(t)$ are local square integrable martingales with respect to the filtration $\Falg_t$.
\end{lemma}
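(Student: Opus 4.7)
The plan is to identify $\bar{M}^g$ as the Doob--Meyer compensation of the counting process $I_{X=g}N^g$ relative to the observer's filtration $\Falg_t$. This will be done in two steps: first reduce the intensity from $\Galg_t$ to $\Falg_t$ via the innovation theorem, then use Lemma~\ref{lemma::A4} to rewrite the resulting conditional expectation as a function of the forcing variable alone.

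First I would note that since $X = I\{Z\geq z_0\}$, the indicator $I_{X=g}$ is $\sigma(Z)\subset \Falg_0$-measurable and constant in $t$. By Lemma~\ref{lemma::composite_intensity}, the $\Galg_t$-intensity of $I_{X=g}N^g$ is $I_{X=g}Y(t)\alpha_g(t,Z,U)$, and since $\Falg_t \subset \Galg_t$ the innovation theorem \citep[p.~80]{ABGK93} gives the $\Falg_t$-intensity
\[
\lambda_g^{\Falg}(t) = \E\,\{I_{X=g}Y(t)\alpha_g(t,Z,U) \mid \Falg_{t-}\} = I_{X=g}Y(t)\,\E\,\{\alpha_g(t,Z,U) \mid \Falg_{t-}\},
\]
after pulling out the $\Falg_0$-measurable factor $I_{X=g}$ and the $\Falg_t$-predictable factor $Y(t)$.

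Next I would apply Lemma~\ref{lemma::A4} with $\xi_t = \alpha_g(t,Z,U)$, which is $\Galg_t$-adapted and may be assumed to satisfy that lemma's regularity hypotheses. The lemma says that $\E\,(\xi_t \mid \Falg_{t-})$ and $I_{T\geq t}\E\,(\xi_t \mid \Zalg_t) + I_{T<t}\E\,(\xi_t \mid \Falg_{t-})$ are indistinguishable; multiplying by $Y(t) = I_{T\geq t}$ kills the $\{T<t\}$ branch, leaving $Y(t)\,\E\,(\xi_t \mid \Zalg_t)$. Using the explicit decomposition of $\E\,(\cdot \mid \Zalg_t)$ recorded in the footnote just before the present lemma, the surviving branch simplifies to $\E\,\{Y(t)\alpha_g(t,Z,U) \mid \sigma(Z)\}/\pr(T\geq t \mid Z)$, which is precisely $\bar\alpha_g(t,Z)$ by definition~\eqref{eq::def_alpha0}. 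Hence $\lambda_g^{\Falg}(t) = I_{X=g}Y(t)\bar\alpha_g(t,Z)$, and since $I_{X=g}Y^g(t) = I_{X=g}Y(t)$, the integrand in the definition of $\bar{M}^g$ coincides with this compensator. The process $\bar{M}^g$ is therefore the compensated counting process of a counting process in the filtration $\Falg_t$, so it is a local martingale and in fact locally square integrable \citep[p.~73]{ABGK93}.

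The main obstacle I anticipate is a clean verification of the hypotheses of Lemma~\ref{lemma::A4} for $\xi_t = \alpha_g(t,Z,U)$: one must check that this process is left-continuous and satisfies $\E\,\sup_{t\leq\tau}\alpha_g(t,Z,U)^2 < \infty$. The first property should follow from continuity of $t\mapsto \alpha_g(t,z,u)$, a standard regularity assumption on hazards, and the second would be maintained as an implicit moment assumption on the conditional hazards throughout the paper.
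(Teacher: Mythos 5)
Your proposal is correct and follows essentially the same route as the paper's proof: Lemma~\ref{lemma::composite_intensity} for the $\Galg_t$-intensity, the innovation theorem to pass to $\Falg_t$, and Lemma~\ref{lemma::A4} to identify $Y(t)\E\{\alpha_g(t,Z,U)\mid\Falg_{t-}\}$ with $Y(t)\bar\alpha_g(t,Z)$, then the standard compensated-counting-process conclusion. Your explicit flagging of the regularity hypotheses of Lemma~\ref{lemma::A4} for $\xi_t=\alpha_g(t,Z,U)$ is a point the paper leaves implicit, but it does not change the argument.
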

\begin{proof} From Lemma~\ref{lemma::composite_intensity} we have that $I_{X = g}Y^g(t) \alpha_g(t,Z,U)$ is the $\Galg_t$-intensity of $I_{X = g}N^g(t)$. By the innovation theorem (references above), the $\Falg_t$-intensity $\lambda^{\Falg}(t)$ of $I_{X = g}N^g(t)$ is $I_{X = g}Y^g(t) \E\,\{\alpha_g(t,Z,U)\mid \Falg_{t-}\}$. Since $I_{X=g}Y^g(t) = I_{X=g}Y(t)$, Lemma~\ref{lemma::A4} gives that $\lambda^{\Falg}(t)$ and $I_{X=g}Y^g(t) \bar{\alpha}_g(t,Z)$ are indistinguishable. Thus, $I_{X=g}\int_0^tY^g(s) \bar{\alpha}_g(s,Z)\,\dd s$ is the compensator of the counting process $I_{X = g}N^g(t)$, and $\bar{M}^g(t)$ are locally square integrable martingales \citep[pp.~72--74]{ABGK93}.
\end{proof}

\subsection{Estimands}\label{subsec::estimands}
If the difference $\alpha_{1}(t,z,u) - \alpha_{0}(t,z,u)$ is functionally independent of $u$, meaning that $\alpha_{1}(t,Z,Y) - \alpha_{0}(t,Z,U)$ is a $\sigma(Z)$-measurable random variable, then 
\begin{equation}
\bar{\alpha}_1(t,z) - \bar{\alpha}_0(t,z) = \theta(t,z),
\label{eq::easy_eq}
\end{equation}
where $\bar{\alpha}_g(t,z)$ for $g = 0,1$ are defined in~\eqref{eq::def_alpha0} and $\theta(t,z)$ is defined in~\eqref{eq::theta_z0}. The functional independence assumption just introduced is crucial for whether $\theta(t,z_0)$ is identifiable or not. 
We state the functional independence assumption here for easy reference, note, however, that it is not assumed throughout. 
\begin{assumption}\label{assumption::func_indep} The difference $\alpha_1(t,Z,U) - \alpha_0(t,Z,U)$ is functionally independent of the confounder $U$.
\end{assumption}
If this assumption is dropped, we are only able to estimate the average treatment effect at the cut-off {\it among those at-risk}, a quantity we denote $\theta_{\rm risk}(t,z_0)$, it is 
\begin{equation}
\theta_{\rm risk}(t,z) = \E\,\{\alpha_{1}(t,Z,U)  - \alpha_{0}(t,Z,U)\mid Z = z, Y(t) = 1\},
\label{eq::theta_z0_risk}
\end{equation}
with cumulative $\Theta_{\rm risk}(t,z) = \int_0^t \theta_{\rm risk}(s,z)\,\dd s$.\footnote{By the definition in~\eqref{eq::theta_z0_risk} we mean that $\theta_{\rm risk}(t,Z)$ is the $\sigma(Z)$-measurable function such that $ Y(t)\E\,\{\alpha_{1}(t,Z,U)  - \alpha_{0}(t,Z,U)\mid \mathcal{Z}_t\} = Y(t)\theta_{\rm risk}(t,Z)$ almost surely.} The assumption of the difference $\alpha_1(t,Z,U) - \alpha_0(t,Z,U)$ being functionally independent of the confounder is crucial for whether we are estimating $\theta(t,z_0)$ or the at-risk version $\theta_{\rm risk}(t,z_0)$, but is otherwise immaterial to the theory developed in the subsequent sections. 

\section{The data, the assumptions, and the estimator}\label{sec::rdd_in_surv}
In this section, we first elaborate on the assumptions made about what are in statistics and econometrics jargon, respectively, called the true model or the data generating process. Subsequently, in Section~\ref{subsec::covariate_localised_Aalen_estimator} we introduce a special case of our estimator, and provide some theory for this special case. The general large-sample theory is deferred to Section~\ref{subsec::large_sample}. 

\subsection{Data and assumptions}\label{sec::new_data_and_assumptions}
Let $(\widetilde{T}_i^0,\widetilde{T}_i^1,Z_i,U_i,C_i),\, i = 1,\ldots,n$ be independent replicates of $(\widetilde{T}^0,\widetilde{T}^1,Z,U,C)$, where this latter is as described in Sections~\ref{sec::model_and_estimands} and \ref{sec::data_and_assumptions}. In particular, the forcing variable $Z \in \real$ and $X = I\{Z \geq z_0\}$ for a known cut-off $z_0$. This entails that $(N_1,Y_1),\ldots,(N_n,Y_n)$ are independent replicates of $(N,Y)$ with $N(t) = XN^1(t) + (1 - X)N^0(t)$ and $Y(t) = XY^1(t) + (1 - X)Y^0(t)$ as defined in Section~\ref{sec::model_and_estimands}. As above, we assume that the processes $N_i$ and $Y_i$ are observed over the finite time interval $[0,\tau]$. The filtrations $\Galg_t$ and $\Falg_t$ are now $\Galg_t = \Ealg_t \vee \sigma(Z_i,U_i,\,i = 1,\ldots,n)$ and $\Falg_t =   \Ealg_t \vee \sigma(Z_i,\,i = 1,\ldots,n)$, with $\Ealg_t^g = \sigma(\{(N_i(s),Y_i(s)\}_{s \leq t},\,i = 1,\ldots,n)$. For $g = 0,1$, denote $M_1^g,\ldots,M_n^g$ the independent replicates of the martingale $M^g$ in~\eqref{eq::potential_MGs}. These are orthogonal local square integrable martingales with respect to the filtration $\Gscr_t^g$. Similarly, $I_{X_1 = g}M_1^{g}(t),\ldots,I_{X_n = g}M_n^{g}(t)$ are orthogonal local square integrable martingales with respect to the filtration $\Galg_t$ (see Lemma~\ref{lemma::composite_intensity}), and $I_{X_1 = g}\bar{M}_1^{g}(t),\ldots,I_{X_n = g}\bar{M}_n^{g}(t)$ are orthogonal local square integrable martingales with respect to the filtration $\Falg_t$ (see Lemma~\ref{lemma::barMG}).

The survival functions associated with the hazard rates $\alpha_{0}(t,Z,U)$ and $\alpha_{1}(t,Z,U)$ are denoted $S_{g}(t,Z,U) = \exp\{- \int_0^t \alpha_{g}(s,Z,U)\,\dd s\}$ for $g = 0,1$, and we set  
\begin{equation}
\bar{S}_{g}(t,z) = \E\,\{S_{g}(t,Z,U) \mid Z = z\},\quad \quad\text{for $g = 0,1$}.
\label{eq::def_Splusmin}
\end{equation}
Let $y_g(t,z)$ to be the conditional expectations of the at-risk process $Y^g(t)$ given $Z = z$. Using the assumption in~\eqref{eq::indep_censoring}, these functions are  
\begin{equation}
y_g(t,z) = \E\,\{Y^g(t) \mid Z = z\} = \{1 - H(t)\}\bar{S}_{g}(t,z),\quad \text{for $g = 0,1$},
\label{eq::smallyg}
\end{equation}
where $H(t)$ is the distribution function of the censoring variable $C$. Without further mention, the following is assumed throughout the paper
\begin{assumption}\label{assumption::indep} The potential lifetimes and the censoring variable are independent given the covariates, that is $\widetilde{T}^0\indep \widetilde{T}^1 \indep C \mid (Z,U)$. 
\end{assumption}
\begin{assumption}\label{assumption::lifetimes} Conditionally on $Z = z$ and $U=u$, the potential lifetimes stem from distributions with hazard rate functions $\alpha_0(t,z,u)$ and $\alpha_1(t,z,u)$ are continuous in $t$ for all $(z,u)$. 
\end{assumption}
\begin{assumption}\label{assumption::censoring} The censoring time $C$ stems from a distribution with a continuous distribution function $H(t)$ that is such that $H(\tau) < 1$. 
\end{assumption} 
\begin{assumption}\label{assumption::potential_params} For some $\kappa_0 > 0$ the following hold on an interval $(z_0 - \kappa_0,z_0 + \kappa_0)$ around the cut-off $z_0$:
\begin{itemize}\itemsep-0.2em
\item[{\rm (a)}] The density $f_Z(z)$ is continuous and bounded away from zero;
\item[{\rm (b)}] The conditional hazards $\bar{\alpha}_0(t,z)$ and $\bar{\alpha}_1(t,z)$, defined in~\eqref{eq::def_alpha0}, are $S$ times continuously differentiable in $z$ for all $t \in [0,\tau]$.
\end{itemize}
\end{assumption}
\begin{assumption}\label{assumption::kernel} For some $\kappa > 0$, the kernel function $k \colon [0,\kappa] \to \real$ is bounded and nonnegative, zero outside its support, and positive and continuous on $(0,\kappa)$. 
\end{assumption}
Assumption~\ref{assumption::indep} gives the martingale representation in~\eqref{eq::potential_MGs}, and is the key to Lemma~\ref{lemma::composite_intensity}, and thereby also to Lemma~\ref{lemma::barMG}. Assumption~\ref{assumption::lifetimes} is standard in survival analysis, and is equivalent to requiring absolute continuity of the survival functions. The assumption also entails that the conditional hazards $\bar{\alpha}_0(t,z)$ and $\bar{\alpha}_1(t,z)$ are continuous in $t$ for all $z$. Assumption~\ref{assumption::censoring} is needed to ensure that $y_g(t,z)$ is bounded below (a fact that is, for example, used in the proof of Theorem~\ref{theorem::Falg_clt}). Assumption~\ref{assumption::potential_params} allows for the estimation of the parameters of interest using a regression discontinuity design. For a discussion of the analogue of this assumption in the standard regression discontinuity design, see~\citet{hahn2001identification}, and \citet[Assumption~2.1, p.~618]{imbens2008regression}. In the next section a special case of our main estimator is presented.

\subsection{A covariate-localised Aalen estimator}\label{subsec::covariate_localised_Aalen_estimator}
The estimator we propose for $\Theta(t,z_0)$ is a weighted version of the Aalen additive hazards estimator (\citet{Aalen80,Aalen89,Aalen93}, \citet[Ch.~4.2]{ABG08}, and \citet[Ch.~VII.4]{ABGK93}).\footnote{Throughout this section we assume, for notational convenience, that Assumption~\ref{assumption::func_indep} holds. If this assumption does not hold, then all the theory of this section translates directly to the estimation of $\Theta_{\rm risk}(t,z_0)$.} The idea is to fit local polynomial regression models in the vicinity of the cut-off $z_0$. Thus, {`}local{'} here refers to an interval on the real line on which the forcing variable $Z$ takes its values, and does {\it not} refer to the time axis.  Specifically, the conditional expectations $\bar{\alpha}_{0}(t,z)$ and $\bar{\alpha}_{1}(t,z)$, as defined in~\eqref{eq::def_alpha0}, are approximated on intervals to the left and to the right of $z_0$ by $p$th order local polynomials. Since, for $g = 0,1$, the conditional expectation $\bar{\alpha}_{g}(t,z)$ may be approximated by
\begin{equation}
\bar{\alpha}_{g}(t,z_0)
+ \bar{\alpha}_{g}^{(1)}(t,z_0)(z - z_0)
+ \frac{\bar{\alpha}_{g}^{(2)}(t,z_0)}{2!}(z - z_0)^2
+ \cdots 
+ \frac{\bar{\alpha}_{g}^{(p)}(t,z)}{p!}(z - z_0)^{p},
 \notag
\end{equation}
where $\bar{\alpha}_{g}^{(\nu)}(t,z)$ is the $\nu$th derivative of $\bar{\alpha}_{g}(t,z)$ with respect to $z$, the local polynomial regression version of the Aalen additive hazards estimator that we introduce, is an estimator of $\int_0^t \bar{\alpha}_{g}^{(\nu)}(s,z)/\nu!\,\dd s$ for $\nu = 0,1,\ldots,p$. We start by presenting the local linear estimator, and then move on to general results for $p$th order local polynomial estimators in Section~\ref{subsec::large_sample}. 

For a bandwidth $h > 0$ and a kernel function $K_h(u) = K(u/h)/h$ with $K(u) = k(-u)I\{u < 0\} + k(u)I\{u \geq 0\}$,\footnote{In principle, different kernels could be used on either side of the cut-off. For simplicity, we employ the same kernel on both sides as this does not affect the theory.} where $k$ is a function satisfying Assumption~\ref{assumption::kernel}, the local linear estimator $(\widehat{B}_{g,1},\widehat{B}_{g,1}^{(1)})$ is, for $g = 0,1$, given by 
\begin{equation}
\begin{pmatrix}
\dd\widehat{B}_{g,1}(t,h)\\
\dd\widehat{B}_{g,1}^{(1)}(t,h)
\end{pmatrix} 
= G_{g,1,n}(t,h)^{-1} n^{-1}\sum_{i=1}^n I_{X_i = g} K_h(Z_i - z_0)
\begin{pmatrix}
1\\
Z_i - z_0
\end{pmatrix}
\,\dd N_i(t),
\notag
\end{equation}
with 
\begin{equation}
G_{g,1,n}(t,h) = n^{-1}\sum_{i=1}^n I_{X_i = g}K_h(Z_i - z_0)Y_i(t)
\begin{pmatrix}
1\\
Z_i - z_0
\end{pmatrix}
\begin{pmatrix}
1\\
Z_i - z_0
\end{pmatrix}^{\tr}.
\notag
\end{equation}
This is seen to be a version the Aalen additive hazards estimator estimator with kernel weights on the forcing variable. An estimator for $\Theta(t,z_0)$ is then given by
\begin{equation}
\widehat{\Theta}(t,h) = \widehat{B}_{1,1}(t,h) - \widehat{B}_{0,1}(t,h).
\label{eq::estimator1}
\end{equation}
For $t$ fixed, this estimator is simply the difference between the intercepts of two local linear regression to the left and to the right of the cut-off. In other words, it is the analogue of the most common estimator of the average treatment effect at the cut-off in the standard regression discontinuity design, see, for example, the estimator $\widehat{\tau}_{\rm srd}$ in~\citet[pp.~623--625]{imbens2008regression}.

Under the assumptions of Lemma~\ref{lemma::general_bias} below, the bias of the estimator $\widehat{\Theta}(t,h)$ can be described by  
\begin{equation}
\E\,(\dd \widehat{\Theta}(t,h)\mid \Falg_{t-})/\dd t = \theta(t,z_0)
+ h^2\, {\rm bias}(t) + o_p(h^2),
\label{eq::Thetahat_expectation1}
\end{equation}
with
\begin{equation}
{\rm bias}(t) = e_{1,0}^{\tr} \Gamma_1^{-1}\vartheta_{1,2}
\frac{\bar{\alpha}_{1}^{(2)}(t,z_0)
- \bar{\alpha}_{0}^{(2)}(t,z_0)}{2}.
\notag
\end{equation}
Here $\bar{\alpha}_{g}^{(2)}(s,z),\,g = 0,1$ are the unknown second derivates of the conditional expectations defined in~\eqref{eq::def_alpha0}; while $e_{1,0} = (1,0)^{\tr}$, and $\Gamma_1$, and $\vartheta_{1,2}$ are quantities that only depend on the chosen kernel, and need not be estimated from the data. They are
\begin{equation}
\Gamma_{1} = \int_0^1 K(u)
\begin{pmatrix}
1 & u \\
u & u^2\\
\end{pmatrix}
\,\dd u,
\quad \text{and}
\quad 
\notag \vartheta_{1,2} = \int_0^1 K(u) u^2 
\begin{pmatrix}
1  \\
u \\
\end{pmatrix}
\,\dd u.
\end{equation} 
Here and elsewhere in the paper, the notation is borrowed from \citet{calonico2014robust,calonico2014supplement}. If the bandwidth minimising the mean squared error is chosen, namely $h_n = cn^{-1/5}$ for some constant $c > 0$, then, conditionally on the filtration $\Falg_t$ (here Lemma~\ref{lemma::barMG} is invoked), we have process convergence in the space of c{\`a}dl{\`a}g functions on $[0,\tau]$, 
\begin{equation}
n^{2/5}\{\widehat{\Theta}(\cdot,h) - \Theta(\cdot,z_0)\} \Rightarrow c^2 \int_0^{\cdot}{\rm bias}(s)\,\dd s + c^{-1/2} e_{1,0}^{\tr}(\bar{M}_{1,1} - \bar{M}_{0,1}), 
\label{eq::clt_illustration}
\end{equation}
as $n \to \infty$, where $\bar{M}_{0,1}$ and $\bar{M}_{1,1}$ are independent bivariate Gaussian martingales with variation processes 
\begin{equation}
\langle \bar{M}_{g,1},\bar{M}_{g,1}\rangle_t = \frac{1}{f_Z(z_0)} 
\int_0^t \frac{\bar{\alpha}_g(s,z_0)}{y_g(s,z_0)}\,\dd s \, 
\Gamma_{1}^{-1}\vartheta_{1,2}\Gamma_{1}^{-1}, \quad \text{for $g = 0,1$}.
\notag
\end{equation}
The general version of this result is the content of Corollary~\ref{corollary::Falg_Thetaconv}. A consistent estimator of $\langle \bar{M}_{g,1},\bar{M}_{g,1}\rangle_t$ is introduced in Section~\ref{subsec::variance_estimation}.

To avoid or to get rid of the bias term that appears on the right hand side of~\eqref{eq::clt_illustration}, two approaches are discussed in this paper. First, one can choose a bandwidth $h_n$ such that $nh_n^{5}\to 0$, meaning that the bandwidth must be {`}smaller{'} than the mean squared error optimal one. Second, the bias term may be removed by subtracting off a consistent estimate of the bias term. The standard approch is the following (see, e.g.,~\citet[Section~4.4]{fan1996local}): Suppose that ${\rm Bias}_n(t,b_n)$ is consistent for $\int_0^t {\rm bias}(s)\,\dd s$ (uniformly in $t$) as $n \to \infty$ and the so-called pilot bandwidth $b_n \to 0$, then (see Corollary~\ref{corollary::bias_correction}) 
\begin{equation}
n^{2/5}\{\widehat{\Theta}(\cdot,h_n) - \Theta(\cdot,z_0) - h_n^2 {\rm Bias}_n(\cdot,b_n)\} \Rightarrow  c^{-1/2} e_{1,0}^{\tr}(\bar{M}_{1,1} - \bar{M}_{0,1}), 
\notag
\end{equation}
as $n \to \infty$, provided, among other things, that $h_n/b_n \to 0$. This latter condition ensures that the variability of the bias correction estimation disappears, meaning that $\bar{M}_{0,1}$ and $ \bar{M}_{1,1}$ are the Gaussian martingales from~\eqref{eq::clt_illustration} (the variation process does not change). As pointed out in the influential paper \citet{calonico2014robust}, $h_n/b_n$ is never zero in finite samples, and therefore, the variability associated with the bias correction ought to be accounted for in the limiting distribution. In the present paper, results of this type are presented in Section~\ref{sec::bias_correction}.

\section{General theory}\label{subsec::large_sample}
In this section we first consider the general $p$th order local polynomial regression estimator of the $\nu$th derivative function $\int_0^t \bar{\alpha}_g^{(\nu)}(s,z_0)\,\dd s$, and derive a representation for the bias of this estimator. Next, in Section~\ref{subsec::limiting_normality}, we present two central limit theorems for this estimator. Throughout this section, Assumption~\ref{assumption::func_indep} (the functional independence assumption) is, for notational convenience, assumed to hold. If this assumption does not hold, all subsequent results are true with $\theta(t,z_0)$ replaced by $\theta_{\rm risk}(t,z_0)$ (see the discussion in Section~\ref{subsec::estimands}). Moreover, $g = 0,1$ is the index used to indicate the potential outcomes, or statistics depending on these, for non-treated and treated, respectively. Since the theory we develop is the same on both sides of the cut-off, a result concerning a quantity with subscript $g$ means that it applies for both $g = 0$ and $g = 1$. Most of the proofs of the claims made in the present section are deferred to the appendices. 
 
\subsection{The estimator and its bias}
For an integer $p \geq 1$, let $r_p(x) = (1,x,x^2,\ldots,x^p)^{\tr}$, and define the estimator $\widehat{B}_{g,p}$ by 
\begin{equation}
\dd \widehat{B}_{g}(t,h) = J_{n,h}(t)G_{g,p,n}(t,h)^{-1}\frac{1}{n}\sum_{i=1}^n I_{X_i = g}K_h(Z_i-z_0)r_p(Z_i-z_0)\,\dd N_i(t),
\label{eq::general_estimator1}
\end{equation} 
with
\begin{equation}
G_{g,p,n}(t,h) = \frac{1}{n}\sum_{i=1}^n I_{X_i = g}K_h(Z_i - z_0 )Y_i(t)r_{p}(Z_i-z_0)r_p(Z_i-z_0)^{\tr},
\notag
\end{equation} 
and 
\begin{equation}
J_{n,h}(t) = I\{\text{$G_{0,p,n}(t,h)$ and $G_{1,p,n}(t,h)$ are positive definite}\}.
\notag
\end{equation} 
If $J_{n,h}(t) = 0$, we take $\dd \widehat{B}_{g,p}(t,h) = 0$ for $g = 0,1$. Let $H_p(h) = {\rm diag}(1,h^{-1},\ldots,h^{-p})$. Using that $H_p(h)r_p(z) = r_p(z/h)$ and $r_p(z) = H_{p}(h)^{-1}r_p(z/h)$, the estimator above can be expressed as 
\begin{equation}
\dd \widehat{B}_{g,p}(t,h) = J_{n,h}(t)H_p(h)\Gamma_{g,p,n}(t,h)^{-1}\frac{1}{n}\sum_{i=1}^nI_{X_i = g}K_h(Z_i-z_0)r_p\big(\frac{Z_i-z_0}{h}\big)\,\dd N_i(t),
\notag
\end{equation}
with 
\begin{equation}
\Gamma_{g,p,n}(t,h) = \frac{1}{n}\sum_{i=1}^n 
I_{X_i = g}K_h(Z_i -z_0 )Y_i(t) r_p\big(\frac{Z_i-z_0}{h}\big)r_p\big(\frac{Z_i-z_0}{h}\big)^{\tr},
\label{eq::myGamma}
\end{equation}
and, since $G_{g,p,n}(t,h)$ is positive definite if and only if $\Gamma_{g,p,n}(t,h)$ is positive definite, $J_{n,h}(t) = I\{\text{$\Gamma_{0,p,n}(t,h)$ and $\Gamma_{1,p,n}(t,h)$ are positive definite}\}$. Let $e_{p,\nu}$ be the $(p+1)$-dimensional column vector with its $(\nu+1)$th element equal to $1$, and all other elements equal to zero, for example, $e_{1,0} = (1,0)^{\tr}$, $e_{2,1} = (0,1,0)^{\tr}$, $e_{3,2} = (0,0,1,0)^{\tr}$, and so on. Denote $\bar{\alpha}_{g}^{(\nu)}(t,z)$ the $\nu$th partial derivative of $\bar{\alpha}_{g}(t,z)$ with respect to $z$, so $\bar{\alpha}_{g}^{(0)}(t,z) = \bar{\alpha}_{g}(t,z)$. With this notation, 
\begin{equation}
\widehat{A}_{g,p}^{(\nu)}(t,h) = e_{p,\nu}^{\tr}\nu!\widehat{B}_{g,p}(t,h),\quad \text{for $g = 0,1$},
\label{eq::one_side_estimator}
\end{equation}
are estimators for $\int_0^t \bar{\alpha}_0^{(\nu)}(s,z_0)\,\dd s$ and $\int_0^t \bar{\alpha}_1^{(\nu)}(s,z_0)\,\dd s$, respectively. An estimator for the $\nu$th derivative of $\Theta(t,z) = \int_0^t \theta(s,z)\,\dd s$ with respect to $z$ and evaluated in $z_0$ (see~\eqref{eq::theta_z0}), based on a $p$th order polynomial regression estimator, is then 
\begin{equation}
\widehat{\Theta}_p^{(\nu)}(t,h) = \widehat{A}_{1,p}^{(\nu)}(t,h) - \widehat{A}_{0,p}^{(\nu)}(t,h). 
\label{eq::general_estimator}
\end{equation}
Before we state the bias lemma, we also need the following quantities,
\begin{equation}
\Gamma_{p} = \int_{0}^1 K(u) r_p(u)r_p(u)^{\tr}\,\dd u,\quad  \vartheta_{p,q} = \int_{0}^1  K(u)u^q r_p(u)\,\dd u,
\label{eq::RSplus}
\end{equation}
and 
\begin{equation}
\Psi_p = \int_0^1 K(u)^2 r_p(u)r_p(u)^{\tr}\,\dd u.
\notag
\end{equation}
The matrices $\Gamma_p$ and $\Psi_p$ are of dimension $(p+1)\times(p+1)$, while $\vartheta_{p,q}$ are $(p+1)$-dimensional column vectors. It will be assumed throughout that the kernel $K$ is chosen so that $\Gamma_p$, for all relevant $p$, is positive definite. This entails that the $\Gamma_{p}^{-1}$ is also positive definite, and that $\norm{\Gamma_{p}}$ and $\norm{\Gamma_{p}^{-1}}$ are finite, with $\norm{\cdot}$ here denoting the matrix norm. The $\Falg_t$-compensator of $I_{X= g}N^g(t)$, namely $I_{X=g}\int_0^tY^g(s)\bar{\alpha}_g(s,Z)\,\dd s$, is absolutely continuous. This entails that the $\Falg_t$-compensator of $\widehat{B}_{g,p}^{(\nu)}(t,h)$ is also absolutely continuous, the derivative of this compensator therefore exists, and we denote these derivatives $\E\,(\dd \widehat{B}_{g,p}^{(\nu)}(t,h) \mid \Falg_{t-})/\dd t$, and $\E\,(\dd \widehat{A}_{g,p}^{(\nu)}(t,h) \mid \Falg_{t-})/\dd t = e_{p,\nu}^{\tr}\nu!\E\,(\dd \widehat{B}_{g,p}^{(\nu)}(t,h) \mid \Falg_{t-})/\dd t$. Introduce the column vectors of partial derivatives 
\begin{equation}
\beta_{g,p}(t) = \{\bar{\alpha}_{g}(t,z_0),\bar{\alpha}_{g}^{(1)}(t,z_0)/1!,\bar{\alpha}_{g}^{(2)}(t,z_0)/2!,\ldots,\bar{\alpha}_{g}^{(p)}(t,z_0)/p!\}^{\tr}. 
\notag
\end{equation}
and the vector valued functions
\begin{equation}
\bfrak_{g,p}(t) =  \Gamma_{p}^{-1}\vartheta_{p,p+1}
\frac{\bar{\alpha}^{(p+1)}_g(t,z_0)}{(p+1)!}\,\dd s,\quad \text{and}
\quad 
\Bfrak_{g,p}(t) = \int_0^t \bfrak_{g,p}(s)\,\dd s. 
\label{eq::mu_bias}
\end{equation}
We can now state the bias lemma.
\begin{lemma}\label{lemma::general_bias}{\sc (Bias)} The condtions of Lemma~\ref{lemma::A4} hold, and 
Assumption~\ref{assumption::potential_params}(b) holds with $S \geq p+2$. Then
\begin{equation} 
\E\,(\dd\widehat{B}_{g,p}(t,h)\mid \Falg_{t-})/\dd t   = J_{n,h}(t)\beta_{g,p}^{(\nu)}(t,z_0)
+ H_{p}(h)h^{p+1}\bfrak_{g,p}(t) + H_p(h)O_p(h^{p+2}),
\notag
\end{equation} 
uniformly in $t \in [0,\tau]$, as $nh \to \infty$ and $h \to 0$.
\end{lemma}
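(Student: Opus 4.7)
The plan is to start from the representation of $\dd\widehat{B}_{g,p}(t,h)$ in~\eqref{eq::general_estimator1} and substitute the $\Falg_t$-compensator density of $I_{X_i=g}N_i(t)$, which by Lemma~\ref{lemma::barMG} is $I_{X_i=g}Y_i(t)\bar{\alpha}_g(t,Z_i)$. This gives
\begin{equation*}
\E(\dd\widehat{B}_{g,p}(t,h)\mid\Falg_{t-})/\dd t
= J_{n,h}(t)H_p(h)\Gamma_{g,p,n}(t,h)^{-1}\Lambda_{g,p,n}(t,h),
\end{equation*}
where $\Lambda_{g,p,n}(t,h) = n^{-1}\sum_{i=1}^n I_{X_i=g}K_h(Z_i-z_0)r_p((Z_i-z_0)/h)Y_i(t)\bar{\alpha}_g(t,Z_i)$. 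Next, under Assumption~\ref{assumption::potential_params}(b) with $S\geq p+2$, I would Taylor-expand with Lagrange remainder
\begin{equation*}
\bar{\alpha}_g(t,Z_i)
= \beta_{g,p}(t)^{\tr} r_p(Z_i-z_0)
+ \frac{\bar{\alpha}_g^{(p+1)}(t,z_0)}{(p+1)!}(Z_i-z_0)^{p+1}
+ R_i(t,h),
\end{equation*}
with $|R_i(t,h)|\leq C|Z_i-z_0|^{p+2}$ uniformly in $t\in[0,\tau]$ on the kernel support $\{|Z_i-z_0|\leq \kappa h\}$, where $C$ does not depend on $t$ by the continuity of $\bar\alpha_g^{(p+2)}$ in $(t,z)$.

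The key algebraic step uses $r_p(Z_i-z_0)=H_p(h)^{-1}r_p((Z_i-z_0)/h)$: the degree-$\leq p$ Taylor piece contributes $\Gamma_{g,p,n}(t,h)H_p(h)^{-1}\beta_{g,p}(t)$ to $\Lambda_{g,p,n}(t,h)$, so that after multiplication by $J_{n,h}(t)H_p(h)\Gamma_{g,p,n}(t,h)^{-1}$ it collapses cleanly to $J_{n,h}(t)\beta_{g,p}(t)$, the first term of the stated expansion. The $(p+1)$th-order term, after pulling out $h^{p+1}$, contributes
\begin{equation*}
H_p(h)h^{p+1}\Gamma_{g,p,n}(t,h)^{-1}\widetilde\vartheta_{g,p,n}(t,h)\,\frac{\bar{\alpha}_g^{(p+1)}(t,z_0)}{(p+1)!},
\end{equation*}
with $\widetilde\vartheta_{g,p,n}(t,h)=n^{-1}\sum_i I_{X_i=g}K_h(Z_i-z_0)Y_i(t)((Z_i-z_0)/h)^{p+1}r_p((Z_i-z_0)/h)$. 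Substituting $u=(z-z_0)/h$ and using continuity of $f_Z$, $y_g$ in $z$ near $z_0$ (Assumptions~\ref{assumption::lifetimes}, \ref{assumption::censoring}, \ref{assumption::potential_params}(a)), a standard kernel computation gives
\begin{equation*}
\E\,\Gamma_{g,p,n}(t,h)\to f_Z(z_0)y_g(t,z_0)\Gamma_p,\qquad
\E\,\widetilde\vartheta_{g,p,n}(t,h)\to f_Z(z_0)y_g(t,z_0)\vartheta_{p,p+1},
\end{equation*}
with variances of order $(nh)^{-1}\to 0$. The continuous-mapping argument for the inverse then produces the claimed $H_p(h)h^{p+1}\bfrak_{g,p}(t)$ term, and the Lagrange remainder, bounded by $Ch^{p+2}|(Z_i-z_0)/h|^{p+2}$ on the kernel support, contributes $H_p(h)O_p(h^{p+2})$ by an identical kernel calculation.

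The main obstacle will be upgrading pointwise-in-$t$ convergence of $\Gamma_{g,p,n}(t,h)$ and $\widetilde\vartheta_{g,p,n}(t,h)$ to convergence \emph{uniformly} in $t\in[0,\tau]$, as required by the statement. I would exploit that $t\mapsto Y_i(t)$ is monotone and bounded and that $(t,z)\mapsto y_g(t,z)$ is continuous, reducing the problem to a Glivenko--Cantelli-type bracketing argument over a finite grid of $t$-values combined with an equicontinuity estimate. This uniformity is also what guarantees that $J_{n,h}(t)\to 1$ uniformly in $t$ with probability tending to one: by Assumption~\ref{assumption::potential_params}(a) and $H(\tau)<1$, the limit $f_Z(z_0)y_g(t,z_0)\Gamma_p$ is positive definite and bounded away from zero uniformly in $t\in[0,\tau]$, so the event $\{J_{n,h}(t)=1 \text{ for all } t\in[0,\tau]\}$ occurs with probability tending to one, and on that event the representation above yields the claimed expansion. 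The Taylor step and the algebraic reduction of the degree-$\leq p$ contribution are routine by comparison.
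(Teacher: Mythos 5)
Your proposal is correct and follows essentially the same route as the paper's proof: substitute the $\Falg_t$-compensator from Lemma~\ref{lemma::barMG}, Taylor-expand $\bar{\alpha}_g(t,Z_i)$ to order $p+1$ with a Lagrange remainder, collapse the degree-$\le p$ part via $r_p(Z_i-z_0)=H_p(h)^{-1}r_p((Z_i-z_0)/h)$, and identify the bias via the uniform-in-$t$ limits of $\Gamma_{g,p,n}$ and $\vartheta_{g,p,p+1,n}$. The uniformity mechanism you sketch (finite $t$-grid plus monotonicity of $Y_i(t)$ and continuity of $y_g$) is exactly what the paper's Lemma~\ref{lemma::uniform_convergence_lemma} and Lemma~\ref{lemma::prob_limits} carry out.
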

\begin{proof} The proof is in Appendix~\ref{subsec::bias_lemma}. 
\end{proof}
\begin{remark}
This lemma is a relative of Theorem 3.1 in \citet[p.~62]{fan1996local}. The difference between $p - \nu$ odd and $p - \nu$ even discussed in relation to that theorem, does not apply here, however. This is becase the kernel $k$ is not symmetric to the left nor to the right of the cut-off. 
\end{remark}

\subsection{Limiting normality}\label{subsec::limiting_normality}
The central limit theorems coming up concern the sequences 
\begin{equation}
\widehat{\Theta}_{p}^{(\nu)}(t,h) - \int_0^t \{\bar{\alpha}_1^{(\nu)}(s,z_0) - \bar{\alpha}_0^{(\nu)}(s,z_0)\}\,\dd s,
\notag
\end{equation}
properly normalised. Since $\widehat{\Theta}_{p}^{(\nu)}(t,h) = \widehat{A}_{1,p}^{(\nu)}(t,h) - \widehat{A}_{0,p}^{(\nu)}(t,h)$, and the theory for both sides of the cut-off are the same, we consider only $\widehat{A}_{g,p}^{(\nu)}(t,h) - \int_0^t \bar{\alpha}_g^{(\nu)}(s,z_0)\, \dd s$ for a generic $g = 0,1$. These sequences are
\begin{equation}
\widehat{A}_{g,p}^{(\nu)}(t,h) - \int_0^t \bar{\alpha}_g^{(\nu)}(s,z_0)\,\dd s
= e_{p,\nu}^{\tr}\nu!\{\widehat{B}_{g,p}(t,h) - \int_0^t\beta_{g,p}(s)\,\dd s\},
\label{eq::estimator_minus_estimand}
\end{equation}  
and can be decomposed in different ways depending on which filtration we analyse it with respect to. From Lemma~\ref{lemma::composite_intensity} we have that $I_{X=g}M_i^g(t) = I_{X=g}\{N_i^g(t) - \int_0^tY_i^g(s)\alpha_g(s,Z_i,U_i)\,\dd s\}$ are martingales with respect to the filtration $\Galg_t = \Ealg_t \vee \Xalg$, and from Lemma~\ref{lemma::barMG} that $\bar{M}_i^g(t) = I_{X_i = g}\{N_{i}^g(t) - \int_0^t Y_i^g(s) \bar{\alpha}(s,Z_i)\,\dd s\}$ are martingales with respect to the filtration $\Falg_{t} = \Ealg_t \vee \Xalg^{\rm obs}$ of observables. Recall that $\Xalg^{\rm obs} \subset \Xalg$, and that the confounders are not measurable with respect to $\Falg_t$ (see Sections~\ref{sec::data_and_assumptions} and~\ref{sec::new_data_and_assumptions} for the definitions). This, in turn, leads to two decompositions of the sequence in~\eqref{eq::estimator_minus_estimand}, and, consequently, to two different central limit theorems for this sequence. Write,
\begin{equation}
\begin{split}
& \widehat{B}_{g,p}(t,h) - \int_0^t\beta_{g,p}(s,z_0)\,\dd s\\
& \qquad\quad = \widehat{B}_{g,p}(t,h) - \int_{0}^t \E\,\{\dd\widehat{B}_{g,p}(s,h)\mid \Falg_{s-}\} 
+ H_p(h){\rm Bias}_{g,p,n}(t,h),
\end{split}
\label{eq::Falg_decomp}
\end{equation}
where ${\rm Bias}_{g,p,n}(t,h)$ is, using Lemma~\ref{lemma::general_bias},
\begin{equation}
\begin{split}
{\rm Bias}_{g,p,n}(t,h)  & = \int_0^t \E\,\{\dd \widehat{B}_{g,p}(s,h)\mid \Falg_{s-}\} 
- \int_0^t \beta_{g,p}(s)\,\dd s\\
& = \int_0^t \{1 - J_{n,h}(s)\}\beta_{g,p}(s,z_0) \,\dd s
+ h^{p+1} \Bfrak_{g,p}(t) + O_p(h^{p+2}).
\end{split}
\label{eq::Bias}
\end{equation} 
The first term on the right in~\eqref{eq::Falg_decomp} can be written  
\begin{equation}
\widehat{B}_{g,p}(t,h) - \int_{0}^t \E\,\{\dd\widehat{B}_{g,p}(s,h)\mid \Falg_{s-}\} = H_p(h)\bar{M}_{g,p,n}(t,h),  
\notag
\end{equation}
where $\bar{M}_{g,p,n}(t,h)$ is a martingale with respect to the filtration $\Falg_t$, namely
\begin{equation}  
\begin{split}
& \bar{M}_{g,p,n}(t,h) = \int_0^t J_{n,h}(s)\Gamma_{g,p,n}(s,h)^{-1}\\
& \qquad \qquad \qquad  \times \frac{1}{n}\sum_{i=1}^nI_{X_i = g}K_h(Z_i-z_0)r_p\big(\frac{Z_i - z_0}{h}\big)\,\dd \bar{M}_i^g(s),
\end{split}
\label{eq::pdim_MGplus1} 
\end{equation} 
where $\bar{M}_i^g$ are the $\Falg_t$-martingales of Lemma~\ref{lemma::barMG}.

\begin{theorem}\label{theorem::Falg_clt}{\sc ($\Falg_t$-clt)} The condtions of Lemma~\ref{lemma::A4} hold. As $nh \to \infty$ and $h \to 0$
\begin{equation}
(nh)^{1/2} (\bar{M}_{0,p,n}(\cdot,h),\bar{M}_{1,p,n}(\cdot,h))
\Rightarrow (\bar{M}_{0,p},\bar{M}_{1,p}),
\notag
\end{equation}
where $\bar{M}_{0,p}$ and $\bar{M}_{1,p}$ are orthogonal mean zero Gaussian martingales with variation processes 
\begin{equation}
\langle \bar{M}_{g,p},\bar{M}_{g,p}\rangle_{t}
= \frac{1}{f_Z(z_0)} \int_0^t \frac{\bar{\alpha}_{g}(s,z_0)}{y_{g}(s,z_0)}\,\dd s \,
\Gamma_p^{-1}\Psi_p\Gamma_p^{-1}.
\label{eq::qv_barMG}
\end{equation}
\end{theorem}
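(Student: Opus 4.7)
The natural tool is Rebolledo's martingale central limit theorem (see \citet[Theorem~II.5.1]{ABGK93}), applied to the stacked vector-valued process $((nh)^{1/2}\bar{M}_{0,p,n}(\cdot,h),(nh)^{1/2}\bar{M}_{1,p,n}(\cdot,h))$. First I would confirm this is an $\Falg_t$-local square integrable martingale: in the stochastic integral representation~\eqref{eq::pdim_MGplus1}, the integrand $J_{n,h}(s)\Gamma_{g,p,n}(s,h)^{-1}\cdot \frac{1}{n}I_{X_i=g}K_h(Z_i-z_0)r_p((Z_i-z_0)/h)$ depends on $s$ only through the left-continuous processes $Y_i(s)$ and is therefore $\Falg_t$-predictable; combined with Lemma~\ref{lemma::barMG}, the stochastic integral is a local square integrable martingale.

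The orthogonality of $\bar{M}_{0,p}$ and $\bar{M}_{1,p}$ in the limit is already present at every finite $n$: the predictable cross-variation of $\bar{M}_{0,p,n}$ and $\bar{M}_{1,p,n}$ vanishes identically because each summand contains the factor $I_{X_i=0}I_{X_i=1}=0$. For the diagonal blocks, direct computation using $\dd \langle \bar{M}_i^g\rangle_s = I_{X_i=g}Y_i(s)\bar{\alpha}_g(s,Z_i)\,\dd s$ gives
\begin{equation}
\langle (nh)^{1/2}\bar{M}_{g,p,n}\rangle_t = \int_0^t J_{n,h}(s)\Gamma_{g,p,n}(s,h)^{-1}\,nh\,V_{g,p,n}(s,h)\,\Gamma_{g,p,n}(s,h)^{-1}\,\dd s,\notag
\end{equation}
with $V_{g,p,n}(s,h) = n^{-2}\sum_{i=1}^n I_{X_i=g}K_h(Z_i-z_0)^2 Y_i(s)\bar{\alpha}_g(s,Z_i)\,r_p((Z_i-z_0)/h)r_p((Z_i-z_0)/h)^{\tr}$. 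The crux is then a uniform-in-$s$ law of large numbers: by the kernel change of variables $u=(Z_i-z_0)/h$ and the continuity of $f_Z$, $\bar{\alpha}_g(s,\cdot)$, and $y_g(s,\cdot)$ at $z_0$ (Assumptions~\ref{assumption::lifetimes}--\ref{assumption::potential_params}), one shows $\Gamma_{g,p,n}(s,h)\xrightarrow{\pr} y_g(s,z_0)f_Z(z_0)\Gamma_p$ and $nh\,V_{g,p,n}(s,h)\xrightarrow{\pr} y_g(s,z_0)\bar{\alpha}_g(s,z_0)f_Z(z_0)\Psi_p$. Assembling the pieces, and using $y_g(s,z_0)=\{1-H(s)\}\bar{S}_g(s,z_0)$, which is bounded away from zero on $[0,\tau]$ by Assumption~\ref{assumption::censoring}, delivers the limit~\eqref{eq::qv_barMG}. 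Uniformity in $s$ is obtained by exploiting that the only $s$-dependence enters through the monotone left-continuous $Y_i(s)$, reducing matters to pointwise convergence plus continuity in $s$ of the deterministic limit; in particular $\pr(J_{n,h}(s)=1$ for all $s\le\tau)\to 1$.

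The Lindeberg (jump-size) condition is straightforward: each jump of $\bar{M}_{g,p,n}(t,h)$ is of order $(nh)^{-1}$ in norm, because on $\{J_{n,h}=1\}$ the factor $\|\Gamma_{g,p,n}(s,h)^{-1}\|$ is $O_p(1)$, while $K_h(Z_i-z_0)\|r_p((Z_i-z_0)/h)\|$ is $O(h^{-1})$ on the support of $K$, and the summand carries a $1/n$ factor. Consequently the jumps of $(nh)^{1/2}\bar{M}_{g,p,n}$ are $O_p((nh)^{-1/2})\to 0$, so the Lindeberg condition holds vacuously for any fixed threshold. Rebolledo's theorem then yields joint weak convergence in $D[0,\tau]$ to the stated pair of orthogonal Gaussian martingales. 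The main technical obstacle I anticipate is the uniform-in-$s$ control of the matrix averages $\Gamma_{g,p,n}(s,h)$ and $V_{g,p,n}(s,h)$, since $\Gamma_{g,p,n}(s,h)^{-1}$ enters the time integrand and one must handle the inverse uniformly; I would attack this entrywise, using the monotone decay of $Y_i(s)$ together with bracketing arguments for kernel-weighted empirical means to lift pointwise convergence to uniform convergence on $[0,\tau]$.
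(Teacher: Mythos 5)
Your proposal is correct and follows essentially the same route as the paper: a Rebolledo-type martingale central limit theorem applied to the stochastic-integral representation~\eqref{eq::pdim_MGplus1}, orthogonality read off at finite $n$ from $I_{X_i=0}I_{X_i=1}=0$, the limiting variation identified through uniform-in-$s$ convergence of $\Gamma_{g,p,n}(s,h)$ and of the kernel-squared average to $y_g(s,z_0)\bar{\alpha}_g(s,z_0)f_Z(z_0)\Psi_p$ (the paper's Lemma~\ref{lemma::prob_limits} together with Lemma~\ref{lemma::invertible_to_limit}), and a Lindeberg condition that holds because the jumps of the rescaled martingale are $O_p((nh)^{-1/2})$. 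The only organizational difference is that the paper first replaces the random weight $J_{n,h}(s)\Gamma_{g,p,n}(s,h)^{-1}$ by its deterministic limit $\widetilde{\Gamma}_{g,p}(s)^{-1}$, disposes of the resulting remainder martingale via Lenglart's inequality, and then verifies the Lindeberg condition for the cleaned-up integrand with a purely deterministic bound, whereas you keep the random matrix inside the integrand and control it on the event $\{J_{n,h}\equiv 1\}$; both are valid, and your version requires exactly the uniform control of $\Gamma_{g,p,n}(\cdot,h)^{-1}$ that you correctly identify as the main technical obstacle.
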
 
\begin{proof} Note first that $nh\langle \bar{M}_{0,p,n}(\cdot,h),\bar{M}_{1,p,n}(\cdot,h)\rangle_t = 0$ for all $t$ and $n$, and is therefore zero in the limit. Assume, without loss of generality, that $z_0 = 0$. Define $\widetilde{\Gamma}_{1,p}(t) = y_{1}(t,z_0)f_Z(z_0)\Gamma_{p}$ and $\widetilde{\Gamma}_{0,p}(t) = y_{0}(t,z_0)f_Z(z_0)H_p(-1)\Gamma_{p}H_p(-1)$, and set $D_{g,p,n}(t,h) = J_{n,h}(t)\Gamma_{g,p,n}(t,h)^{-1} - \widetilde{\Gamma}_{g,p}(t)^{-1}$ for $g = 0,1$. Then 
\begin{equation}
\bar{M}_{g,p,n}(t,h) = \xi_{g,p,n}(t,h) + r_{g,p,n}(t,h), 
\notag
\end{equation}
say, where,
\begin{equation}
\xi_{g,p,n}(t,h) = \int_0^t \widetilde{\Gamma}_{g,p}(s)^{-1}n^{-1}\sum_{i=1}^nI_{X_i=g}K_h(Z_i)r_p(Z_i/h)\,\dd \bar{M}_i^g(s),
\label{eq::xi_process}
\end{equation}
and
\begin{equation}
r_{g,p,n}(t,h) = \int_0^t D_{g,p,n}(s,h)n^{-1}\sum_{i=1}^nI_{X_i=g}K_h(Z_i)r_p(Z_i/h)\,\dd \bar{M}_i^g(s).
\notag
\end{equation}
For any $\nu = 0,1,\ldots,p$, the predictable quadratic variation of $e_{p,\nu}^{\tr} r_{g,p,n}(t,h)$ with respect to $\Falg_t$ is, using Lemma~\ref{lemma::simple_ineq} in the appendix, 
\begin{equation}
\begin{split}
&e_{p,\nu}^{\tr}\langle r_{g,p,n}(\cdot,h),r_{g,p,n}(\cdot,h)\rangle_{t}e_{p,\nu}\\
& = \frac{1}{(nh)^2}\sum_{i=1}^nI_{X_i = g}K(Z_i/h)^2
\int_0^t e_{p,\nu}^{\tr}D_{g,p,n}(s,h)r_p(Z_i/h)r_p(Z_i/h)^{\tr}D_{g,p,n}(s,h)e_{p,\nu}\\ 
& \qquad \qquad \qquad \qquad \times Y_i^g(s)\bar{\alpha}_g(s,Z_i)\,\dd s\\
& \leq \frac{1}{(nh)^2}\sum_{i=1}^nI_{X_i = g}K(Z_i/h)^2
\int_0^t \norm{D_{g,p,n}(s,h)}^2\norm{r_p(Z_i/h)}^2\,Y_i^g(s)\bar{\alpha}_g(s,Z_i)\,\dd s,
\end{split}
\notag
\end{equation}
Since $K(z/h)\,\norm{r_p(z/h)}^2 = K(z/h)\sum_{\nu=0}^p (z/h)^{2\nu} \leq K(z/h)\sum_{\nu=0}^p \kappa^{2\nu}$ and $K(u)$ is bounded (see Assumption~\ref{assumption::kernel}),
\begin{equation}
\begin{split}
& e_{p,\nu}^{\tr}\langle r_{g,p,n}(\cdot,h),r_{g,p,n}(\cdot,h)\rangle_{t}e_{p,\nu}  \\
& \qquad \quad \lesssim \sup_{t \in [0,\tau]}\norm{D_{g,p,n}(t,h)}^2
\frac{1}{(nh)^2} \sum_{i=1}^n K(Z_i/h) \int_0^{\tau} Y_i^g(s)\bar{\alpha}_g(s,Z_i)\,\dd s.
\end{split}
\notag
\end{equation} 
From Lemma~\ref{lemma::prob_limits}(i)--(iii) combined with Lemma~\ref{lemma::invertible_to_limit} in the appendix, we have that $\sup_{t\in[0,\tau]}\norm{D_{g,p,n}(t,h)} = o_p(1)$ as $nh \to \infty$ and $h \to 0$. For $h \kappa \leq \kappa_0$ (see Assumption~\ref{assumption::potential_params}), $\bar{\alpha}_g(t,z)$ is bounded, thus $(nh)^{-1} \sum_{i=1}^n K(Z_i/h) \int_0^{\tau} Y_i^g(s)\bar{\alpha}_g(s,Z_i)\,\dd s\lesssim (nh)^{-1} \sum_{i=1}^n K(Z_i/h) = O_p((nh)^{-1/2})$, from which  $e_{p,\nu}^{\tr}\langle r_{g,n,p}(\cdot,h),r_{g,n,p}(\cdot,h)\rangle_{\tau}e_{p,\nu}=o_p((nh)^{-3/2})$. Therefore, for $\nu = 0,\ldots,p$,  $\sup_{t\in[0,\tau]}\abs{(nh)^{1/2}e_{p,\nu}^{\tr}r_{g,p,n}(t,h)}\, = o_p(1)$, by Lenglart{'}s inequality \citep[Lemma~I.3.30, p.~35]{jacod2003limit}. The predictable quadratic variation of $(nh)^{1/2}\xi_{g,p,n}$ with respect to $\Falg_{t}$ is 
\begin{equation} 
\langle \xi_{g,p,n}(\cdot,h),\xi_{g,p,n}(\cdot,h)\rangle_t
= \frac{1}{n}\int_0^t \widetilde{\Gamma}_{g,p}(s)^{-1} \bar{\Psi}_{g,p,n}(s,h) \widetilde{\Gamma}_{g,p}(s)^{-1}\,\dd s,
\notag
\end{equation} 
where $\bar{\Psi}_{g,p,n}(t,h) = n^{-1}\sum_{i=1}^n I_{X_i = g} K_h(Z_i)^2 r_{p}(Z_i/h)r_{p}(Z_i/h)^{\tr} Y_i^g(t)\bar{\alpha}_g(t,Z_i)$. By Lemma~\ref{lemma::prob_limits}(viii) in the appendix $h\bar{\Psi}_{1,p,n}(t,h) \to_p y_1(t,z_0)\bar{\alpha}_1(t,z_0)f_{Z}(z_0)\Psi_p$ and $h\bar{\Psi}_{0,p,n}(t,h) \to_p y_0(t,z_0)\bar{\alpha}_0(t,z_0)f_{Z}(z_0)H_p(-1)\Psi_pH_p(-1)$ uniformly in $t$ as $nh \to \infty$ and $h \to 0$. We conclude that 
\begin{equation}
\begin{split}
& \frac{1}{nh}\int_0^t\widetilde{\Gamma}_{g,p}(s)^{-1}h\bar{\Psi}_{g,p,n}(s,h)\widetilde{\Gamma}_{g,p}(s)^{-1} \,\dd s \\
& \qquad\qquad = \frac{1}{nh}\big\{\frac{1}{f_Z(z_0)} \int_0^t \frac{\bar{\alpha}_{g}(s,z_0)}{y_{g}(s,z_0)}\,\dd s \,\Gamma_{p}^{-1}\Psi_{p}\Gamma_{p}^{-1} + o_p(1)\big\},
\end{split}
\label{eq::barMGvar}
\end{equation}
as $nh \to \infty$ and $h \to 0$, for all $t$. Next, we show that $(nh)^{1/2}\xi_{g,p,n}$ satisfies a Lindeberg condition. Write $H_{g,p,i}^n(t,h) = (nh)^{-1/2}\widetilde{\Gamma}_{g,p}(s)^{-1}I_{X_i = g}K(Z_i/h)r_p(Z_i/h)$ so that $\xi_{g,p,n}(t,h) = \sum_{i=1}^n\int_0^t H_{g,p,i}^n(s,h)\,\dd \bar{M}_i^g(s)$. Note that for each $i$ and $t$ and $\nu$,
\begin{equation}
\abs{e_{p,\nu}^{\tr}H_{g,p,i}^n(t,h)}^2\, 
\leq (nh)^{-1/2}\,\norm{\widetilde{\Gamma}_{g,p}(t)^{-1}}^2 \norm{K(Z_i/h)r_p(Z_i/h)}^2,
\notag
\end{equation}
by Lemma~\ref{lemma::simple_ineq} in the appendix. Now, $\norm{K(Z_i/h)r_p(Z_i/h)}\, \leq K_{\max}(\sum_{\nu = 0 }^{p}\kappa^{2\nu})^{1/2}$, where $K_{\max} = \sup_{u}K(u)$ is bounded by Assumption~\ref{assumption::kernel}. Furthermore, $\norm{\widetilde{\Gamma}_{g,p}(s)^{-1} }\, = \abs{y_g(s,z_0)f_Z(z_0)}^{-1}\,\norm{\Gamma_{p}^{-1}}$, by Assumption~\ref{assumption::censoring} we have that $y_g(t,z_0) \geq y_g(\tau,z_0) = \{1 - H(\tau)\}\bar{S}_g(\tau,z_0) > 0$, and by Assumption~\ref{assumption::potential_params}(a) that $f_{Z}(z_0) > 0$. Therefore, 
\begin{equation}
\sup_{s\in [0,\tau]} \norm{\widetilde{\Gamma}_{g,p}(s)^{-1} } \leq \abs{y_g(\tau,z_0)f_Z(z_0)}^{-1}\,\norm{\Gamma_{p}^{-1}} \eqqcolon C. 
\notag
\end{equation} 
This shows that 
\begin{equation}
\max_{\nu = 0,\ldots,p}\sup_{t \in [0,\tau]} \abs{e_{p,\nu}^{\tr}H_{g,p,i}^n(t,h)}\,
\leq \frac{C K_{\max}(\sum_{\nu = 0 }^{p}\kappa^{2\nu})^{1/2}}{(nh)^{1/2}},
\notag
\end{equation}
and, consequently, for all $\nu = 0,\ldots,p$, we have that for any $\eps > 0$,  
\begin{equation}
\int_0^t \sum_{i=1}^n \,\abs{e_{p,\nu}^{\tr}H_{g,p,i}^n(s,h)}^2\, I\{\abs{e_{p,\nu}^{\tr}H_{g,p,i}^n(s,h)}\, \geq \eps\}Y_i^g(s)\bar{\alpha}_g(s,Z_i)\,\dd s \overset{p}\to 0 ,
\notag
\end{equation} 
for all $t \in [0,\tau]$ as $nh \to \infty$. That is, the Lindeberg condition holds. By the Rebolledo type central limit theorem in  \citet[Theorem~I.2, p.~1115]{andersen1982cox}, this entails that $(nh)^{1/2}\xi_{g,p,n} \Rightarrow \bar{M}_{g,p}$ as $nh \to \infty$ and $h \to 0$. Since
\begin{equation}
(nh)^{1/2}\bar{M}_{g,p,n} = (nh)^{1/2}\xi_{g,p,n} + (nh)^{1/2}r_{g,p,n}, 
\notag
\end{equation}
and $(nh)^{1/2}r_{g,p,n} \to_p 0$ uniformly in $t$, Lemma~VI.3.31 in \citet[p.~352]{jacod2003limit} yields $(nh)^{1/2}\bar{M}_{g,p,n} \Rightarrow \bar{M}_{g,p}$ as $nh \to \infty$ and $h \to 0$. Because $\bar{M}_{0,p,n}$ and $\bar{M}_{1,p,n}$ are orthogonal, joint convergence follows.
\end{proof}
From the above theorem it is seen that the predictable quadratic variation of $\bar{M}_{g,p,n}$ is $\langle \bar{M}_{p,n}(\cdot,h),\bar{M}_{p,n}(\cdot,h)\rangle_t
= O_p((nh)^{-1})$, so, in particular 
\begin{equation}
e_{p,\nu}^{\tr}H_p(h)\langle \bar{M}_{p,n}(\cdot,h),\bar{M}_{p,n}(\cdot,h)\rangle_tH_p(h)e_{p,\nu}
= O_p(n^{-1}h^{-2\nu-1}),
\notag
\end{equation} 
Moreover, from Lemma~\ref{lemma::general_bias} the bias is seen to be 
\begin{equation}
e_{p,\nu}^{\tr}H_{p}(h){\rm Bias}_{g,p,n}(t,h){\rm Bias}_{g,p,n}(t,h)^{\tr}H_{p}(h)e_{p,\nu} = O_p\big(h^{2(p+1-\nu)}).
\notag
\end{equation} %
This shows that the mean squared error optimal bandwidth is $h = c n^{1/(2p+3)}$ for some constant $c > 0$.
Combining the bias lemma with Theorem~\ref{theorem::Falg_clt}, we get the following corollary. 

\begin{corollary}\label{corollary::Falg_Thetaconv} The condtions of Lemma~\ref{lemma::A4} hold. As $nh \to \infty$ and $h \to 0$, if $nh^{2p + 3} \to c $ for a constant $c \geq 0$, then 
\begin{equation} 
(nh^{2\nu + 1})^{1/2}\{\widehat{A}_{g,p}^{(\nu)}(\cdot,h)
- \int_0^{\cdot}\bar{\alpha}_g^{(\nu)}(s,z_0)\,\dd s \} 
\Rightarrow e_{p,\nu}^{\tr}\nu!(\bar{M}_{g,p} + c^{1/2}\Bfrak_{g,p}), 
\notag
\end{equation} 
for $g = 0,1$, with $\Bfrak_{g,p}$ as defined in~\eqref{eq::mu_bias}.
\end{corollary}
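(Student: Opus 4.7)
The plan is to unravel the decomposition (\ref{eq::Falg_decomp}) and feed in the two ingredients already at hand: the bias representation of Lemma~\ref{lemma::general_bias} and the martingale central limit theorem of Theorem~\ref{theorem::Falg_clt}. Using (\ref{eq::estimator_minus_estimand}) together with (\ref{eq::Falg_decomp}) I can write
\begin{equation}
\widehat{A}_{g,p}^{(\nu)}(t,h)-\int_0^t\bar{\alpha}_g^{(\nu)}(s,z_0)\,\dd s
= e_{p,\nu}^{\tr}\nu!\,H_p(h)\bigl\{\bar{M}_{g,p,n}(t,h)+{\rm Bias}_{g,p,n}(t,h)\bigr\}.
\notag
\end{equation}
Since $e_{p,\nu}^{\tr}H_p(h)=h^{-\nu}e_{p,\nu}^{\tr}$, multiplying by the rate $(nh^{2\nu+1})^{1/2}$ yields the key algebraic identity $(nh^{2\nu+1})^{1/2}h^{-\nu}=(nh)^{1/2}$, so after normalisation the left-hand side equals $(nh)^{1/2}e_{p,\nu}^{\tr}\nu!\{\bar{M}_{g,p,n}(\cdot,h)+{\rm Bias}_{g,p,n}(\cdot,h)\}$.

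For the stochastic part, Theorem~\ref{theorem::Falg_clt} gives directly $(nh)^{1/2}\bar{M}_{g,p,n}(\cdot,h)\Rightarrow \bar{M}_{g,p}$ in the Skorokhod space on $[0,\tau]$, and the constant linear map $e_{p,\nu}^{\tr}\nu!(\cdot)$ preserves weak convergence by the continuous mapping theorem. For the bias part I invoke the expansion (\ref{eq::Bias}) from Lemma~\ref{lemma::general_bias}:
\begin{equation}
{\rm Bias}_{g,p,n}(t,h)=\int_0^t\{1-J_{n,h}(s)\}\beta_{g,p}(s,z_0)\,\dd s+h^{p+1}\Bfrak_{g,p}(t)+O_p(h^{p+2}).
\notag
\end{equation}
After multiplication by $(nh)^{1/2}$ the middle term becomes $(nh^{2p+3})^{1/2}\Bfrak_{g,p}(t)\to c^{1/2}\Bfrak_{g,p}(t)$ by hypothesis, uniformly in $t$ since $\Bfrak_{g,p}$ is deterministic and continuous. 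The remainder term contributes $(nh)^{1/2}h^{p+2}=h\cdot(nh^{2p+3})^{1/2}=o_p(1)$ since $h\to0$.

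The one slightly fiddly step is showing that $(nh)^{1/2}\int_0^t\{1-J_{n,h}(s)\}\beta_{g,p}(s,z_0)\,\dd s\to_p 0$ uniformly in $t$. Here I would argue that $\Gamma_{g,p,n}(t,h)$ converges in probability, uniformly in $t$, to the positive definite limit $\widetilde{\Gamma}_{g,p}(t)/(y_g(t,z_0)f_Z(z_0))$ (this is what Lemma~\ref{lemma::prob_limits}(i)--(iii) and Lemma~\ref{lemma::invertible_to_limit} were used for in the proof of Theorem~\ref{theorem::Falg_clt}); consequently $\pr(J_{n,h}(s)=1\text{ for all }s\in[0,\tau])\to 1$, so that the exceptional-set integral is eventually identically zero with probability approaching one, and the deterministic bound $\sup_t\|\beta_{g,p}(t,z_0)\|\tau$ controls it before then.

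Finally, by Slutsky's theorem (applied in $D[0,\tau]$, e.g. via Lemma~VI.3.31 of Jacod and Shiryaev as used in the proof of Theorem~\ref{theorem::Falg_clt}), adding the deterministic drift $c^{1/2}\Bfrak_{g,p}$ to the weakly converging martingale term preserves the convergence, giving
\begin{equation}
(nh^{2\nu+1})^{1/2}\bigl\{\widehat{A}_{g,p}^{(\nu)}(\cdot,h)-\int_0^{\cdot}\bar{\alpha}_g^{(\nu)}(s,z_0)\,\dd s\bigr\}\Rightarrow e_{p,\nu}^{\tr}\nu!(\bar{M}_{g,p}+c^{1/2}\Bfrak_{g,p}),
\notag
\end{equation}
which is the claim. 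The main obstacle, if any, is the uniform-in-$t$ handling of $J_{n,h}$ and of the $O_p(h^{p+2})$ remainder from the bias expansion; these are routine but need to be invoked carefully so that the convergence is in $D[0,\tau]$ rather than merely pointwise.
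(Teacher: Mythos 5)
Your proposal is correct and follows essentially the same route as the paper's own proof: the decomposition~\eqref{eq::Falg_decomp} combined with Lemma~\ref{lemma::general_bias} for the bias, Theorem~\ref{theorem::Falg_clt} for the martingale part, the almost-sure (or with-probability-tending-to-one) vanishing of the $J_{n,h}$ term via Lemma~\ref{lemma::prob_limits}(iii), and the Jacod--Shiryaev Slutsky-type lemma to add the deterministic drift. The handling of the rates, the $O_p(h^{p+2})$ remainder, and the uniformity in $t$ all match the paper's argument.
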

\begin{proof} By Lemma~\ref{lemma::general_bias} and the decomposition in~\eqref{eq::Falg_decomp}, 
\begin{equation}
\begin{split}
& \widehat{A}_{g,p}^{(\nu)}(t,h)
- \int_0^t \bar{\alpha}_g^{(\nu)}(s,z_0)\,\dd s
= e_{p,\nu}^{\tr}\nu! h^{-\nu} \bar{M}_{g,p,n}(t,h) + e_{p,\nu}^{\tr}\nu! h^{p+1 - \nu} \bfrak_{g,p}(t)\\
& \qquad\qquad\qquad + \int_0^t \{J_{n,h}(s) - 1\}\bar{\alpha}_g(s,z_0)\,\dd s + O_p(h^{p + 2 - \nu}).
\end{split}
\notag
\end{equation}
Upon multiplying $ \widehat{A}_{g,p}^{(\nu)}(t,h)
- \int_0^t \bar{\alpha}_g^{(\nu)}(s,z_0)\,\dd s$  by $(nh^{2\nu + 1})^{1/2}$, the first term on the right is $e_{p,\nu}^{\tr}\nu! (nh)^{1/2}\bar{M}_{g,p,n}(t,h)$. By Theorem~\ref{theorem::Falg_clt} and the Cram{\'e}r--Wold device, $e_{p,\nu}^{\tr}\nu! (nh)^{1/2}\bar{M}_{g,p,n}(\cdot,h)\Rightarrow e_{p,\nu}^{\tr}\nu! \bar{M}_{g,p}$ as $nh \to \infty$ and $h \to 0$, where $\bar{M}_{g,p}$ is the Gaussian martingale of said theorem. For the second term on the right, i.e.,~the bias term, $e_{p,\nu}^{\tr}\nu! (nh^{2p+3})^{1/2}\bfrak_{g,p}(t)\to e_{p,\nu}^{\tr}\nu! c^{1/2}\bfrak_{g,p}(t)$ as $nh^{2p+3} \to c \geq 0$, uniformly in $t$ since $\bfrak_{g,p}(t)$ is bounded, see Assumption~\ref{assumption::potential_params}(b). For the third term on the right, we can, by Lemma~\ref{lemma::prob_limits}(iii) in the appendix, find $n_0$ and $h_0$ such that $\sup_{t \in [0,\tau]}\abs{J_{n,h}(t) - 1}\, = 0$ a.s., for all $n \geq n_0$ and $h \leq h_0$. Thus, for all $n \geq n_0$ and $h \leq h_0$, $(nh^{2\nu + 1})^{1/2}\sup_{t\in [0,\tau]}\abs{\int_0^t \{J_{n,h}(s) - 1\}\bar{\alpha}_g(s,z_0)\,\dd s} \, \leq (nh^{2\nu + 1})^{1/2}\sup_{t\in [0,\tau]}\{J_{n,h}(s) - 1\}\int_0^{\tau}\bar{\alpha}_g(s,z_0)\,\dd s = 0$ a.s., using that $\bar{\alpha}_g(t,z_0)\geq 0$. Finally, $(nh^{2\nu + 1})^{1/2}O_p(h^{p+2 - \nu}) = O_p((nh^{2p+5})^{1/2}) \to 0$ as $nh^{2p+3} \to c$, and this convergence is uniform in $t$ by Lemma~\ref{lemma::general_bias}. Since $\bfrak_{g,p}(t)$ is continuous in $t$, the claim follows from Proposition~VI.3.17 and Lemma~VI.3.31 in \citet[pp.~350--352]{jacod2003limit}. 
\end{proof} 
The two results above, Theorem~\ref{theorem::Falg_clt} and Corollary~\ref{corollary::Falg_Thetaconv}, pertain to the sequences of $\Falg_t$-martingales $H_p(h)\bar{M}_{g,p,n} = \widehat{B}_{g,p}(t,h) - \int_{0}^t \E\,\{\dd\widehat{B}_{g,p}(s,h)\mid \Falg_{s-}\}$. Considering these martingales essentially means that, in the central limit theorem, we average out the confounders but condition on the forcing variable.\footnote{There is a parallel here to the difference between the observed information and the Fisher information in likelihood inference for regression models, where, in the latter, the covariates are averaged out.} We now turn to a central limit theorem relative to the filtration $\Galg_t = \Ealg_t \vee \Xalg$, that is, the filtration with respect to which the confounder is measurable. Because the $\Falg_t$-martingales $\bar{M}_{g,p,n}$ analysed above are {\it not} martingales with respect to $\Galg_t$, these results are slightly more involved. Consider the decomposition 
\begin{equation}
\begin{split}
& \widehat{B}_{g,p}(t,h) - \int_0^t\beta_{g,p}(s,z_0)\,\dd s \\
& \qquad \quad = H_p(h) \{M_{g,p,n}(t,h) +  L_{g,p,n}(t,h) +  {\rm Bias}_{g,p,n}(t,h)\},
\end{split}
\label{eq::Galg_decomp}
\end{equation}
where ${\rm Bias}_{g,p,n}(t,h)$ is as defined in~\eqref{eq::Bias}; $M_{g,p,n}(t,h)$ is the $\Galg_t$-martingale 
\begin{equation}  
M_{g,p,n}(t,h) = \int_0^t J_{n,h}(s)\Gamma_{g,p,n}(s,h)^{-1}\frac{1}{n}\sum_{i=1}^nI_{X_i = g}K_h(Z_i-z_0)r_p\big(\frac{Z_i - z_0}{h}\big)\,\dd M_i^g(s);
\notag
\end{equation}
see Lemma~\ref{lemma::composite_intensity}, and $L_{g,p,n}(t,h)$ is
\begin{equation}  
L_{g,p,n}(t,h) = \int_0^t J_{n,h}(s)\Gamma_{g,p,n}(s,h)^{-1}Q_{g,p,n}(s,h)\,\dd s,
\notag
\end{equation}
where $Q_{g,p,n}(s,h)$ is the average of i.i.d.~random variables given by 
\begin{equation}
Q_{g,p,n}(s,h) = \frac{1}{n}\sum_{i=1}^nI_{X_i=g}K_h(Z_i-z_0)r_p\big(\frac{Z_i-z_0}{h}\big)Y_i^g(s)\Delta_{g,i}(s), 
\label{eq::Qplus} 
\notag
\end{equation}
with $\Delta_{g,i}(t)$ being the difference between true hazard and the conditional hazards defined in~\eqref{eq::def_alpha0}, that is 
\begin{equation}
\Delta_{g,i}(t) = \alpha_g(t,Z_i,U_i) - \bar{\alpha}_g(t,Z_i), \quad \text{for $i = 1,\ldots,n$}.
\notag
\end{equation}
It turns out that the $\Galg_t$-martingales $M_{g,p,n}$ have the same limiting variance as the $\Falg_t$-martingales $\bar{M}_{g,p,n}$, however, the sequence $L_{g,p,n}$ is of the same order as the martingales, and thus contribute to the asymptotic variance of the estimator sequence. One might therefore view the variance stemming from $L_{g,p,n}$ as extra variance induced by not averaging out the confounder $U$. For the next theorem, in addition to Assumptions~\ref{assumption::indep}--\ref{assumption::kernel}, we impose a Lipschitz condition and a boundedness condition on the hazard rate functions of the potential lifetimes, as well as on the conditonal hazard rate functions. These assumptions are likely much stronger than necessary.

\begin{assumption}\label{assumption::Galg_clt_extra} There are constants $\ell_g$, $\bar{\ell}_g$, $\alpha_{g,\max}$, and $\bar{\alpha}_{g,\max}$ for $g = 0,1$ such that $\alpha_g(t,z,u) \leq \alpha_{g,\max}$, $\alpha_g(t,z) \leq \bar{\alpha}_{g,\max}$ for all $t,z,u$, and $\abs{\alpha_g(t,z,u) - \alpha_g(s,z,u)}\, \leq \ell_g\, \abs{t  - s }$ for all $t,s \in [0,\tau]$ and all $z,u$, and $\abs{\bar{\alpha}_g(t,z) - \bar{\alpha}_g(s,z)}\, \leq \bar{\ell}_g\, \abs{t  - s }$ for all $t,s \in [0,\tau]$ and all $z$.
\end{assumption}

\begin{theorem}{\sc($\Galg$-clt)}\label{theorem::clt1} Assumption~\ref{assumption::potential_params}(b)
holds with $S \geq p+2$, Assumption~\ref{assumption::Galg_clt_extra} is in force, and the distribution function $H(t)$ of $C$ has density $h(t)$ that is bounded on $[0,\tau]$. If $nh^{2p+3} \to c$ for some $c \geq 0$ as $nh \to \infty$ and $h \to 0$, then
 \begin{equation}
(nh^{2\nu + 1})^{1/2}\{\widehat{A}_{g,p}^{(\nu)}(\cdot,h) - \int_0^{\cdot}\bar{\alpha}_g^{(\nu)}(s,z_0)\,\dd s\} \Rightarrow 
e_{p,\nu}^{\tr}\nu!(M_{g,p} + L_{g,p} + c^{1/2}  \Bfrak_{g,p}),
\notag
\end{equation}
where $M_{g,p}$ and $L_{g,p}$ are $(p+1)$-dimensional mean zero Gaussian processes, with $(M_{0,p},L_{0,p})$ and $(M_{1,p},L_{1,p})$ independent, and finite-dimensional distributions characterised by 
\begin{equation}
\var\,M_{g,p}(t)
= \frac{1}{f_{Z}(z_0)}\int_0^t\frac{\bar{\alpha}_{g}(s,z_0)}{y_{g}(s,z_0)}\,\dd s\,\Gamma_{p}^{-1}\Psi_{p}\Gamma_p^{-1},
\notag
\end{equation}
and 
\begin{equation}
\cov\{M_{g,p}(s),L_{g,p}(t) \}
= -\frac{1}{f_Z(z_0)}\int_0^t \int_0^{u \wedge s} \frac{c_g(x,u \vee s,z_0)}{y_g(x,z_0)y_{g}(u,z_0)}\, \dd x\,\Gamma_{p}^{-1}\Psi_{p}\Gamma_p^{-1},
\notag
\end{equation}
and
\begin{equation}
\cov\{L_{g,p}(s),L_{g,p}(t) \}
= \frac{1}{f_Z(z_0)} \int_0^s \int_0^t \frac{y_g(u\vee v,z_0)c_g(u,v,z_0)}{y_g(u,z_0)y_g(v,z_0)}\,\dd u \,\dd v \,\Gamma_{p}^{-1}\Psi_{p}\Gamma_p^{-1},
\notag
\end{equation}
where $c_{g}(u,s,z) = \E\,(\Delta_{g}(u)\Delta_{g}(s)\mid Z = z,Y(u\vee s ) = 1)$; and the bias term $\Bfrak_{g,p}$ is as defined in~\eqref{eq::mu_bias}. 
\end{theorem}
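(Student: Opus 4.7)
The plan is to exploit the exact algebraic identity $M_{g,p,n}(t,h) + L_{g,p,n}(t,h) = \bar{M}_{g,p,n}(t,h)$, which follows from equating the two compensator decompositions of $N_i^g$ from Lemmas~\ref{lemma::composite_intensity} and~\ref{lemma::barMG}: namely, $\dd\bar{M}_i^g(u) = I_{X_i=g}\,\dd M_i^g(u) + I_{X_i=g}Y_i^g(u)\Delta_{g,i}(u)\,\dd u$. Combined with the decomposition~\eqref{eq::Galg_decomp} and the bias lemma, Corollary~\ref{corollary::Falg_Thetaconv} already delivers the convergence of $(nh^{2\nu+1})^{1/2}\{\widehat{A}_{g,p}^{(\nu)}(\cdot,h) - \int_0^{\cdot}\bar{\alpha}_g^{(\nu)}(s,z_0)\,\dd s\}$ to $e_{p,\nu}^{\tr}\nu!(\bar{M}_{g,p} + c^{1/2}\Bfrak_{g,p})$. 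Theorem~\ref{theorem::clt1} then refines this limit into a Gaussian sum $M_{g,p} + L_{g,p}$ with the stated covariance structure, and the task reduces to establishing the joint convergence of $(nh)^{1/2}(M_{g,p,n},L_{g,p,n})$ and identifying the joint covariance kernel.

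For the marginal convergence of $M_{g,p,n}$, I would follow the proof of Theorem~\ref{theorem::Falg_clt} essentially verbatim, with $M_{g,p,n}$ now playing the role of a $\Galg_t$-martingale (Lemma~\ref{lemma::composite_intensity}). Replace $J_{n,h}\Gamma_{g,p,n}^{-1}$ by $\widetilde{\Gamma}_{g,p}^{-1}$ with uniformly $o_p(1)$ error controlled by Lenglart's inequality together with Lemmas~\ref{lemma::prob_limits} and~\ref{lemma::invertible_to_limit}. The predictable quadratic variation now features $Y_i^g(s)\alpha_g(s,Z_i,U_i)$ rather than $Y_i^g(s)\bar{\alpha}_g(s,Z_i)$, but by the tower property and the defining identity~\eqref{eq::def_alpha0}, $\E\,\{Y_i^g(s)\alpha_g(s,Z_i,U_i)\mid Z_i\} = y_g(s,Z_i)\bar{\alpha}_g(s,Z_i)$, so in the limit the variance of $M_{g,p}$ coincides with that of $\bar{M}_{g,p}$. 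The Lindeberg condition follows from the uniform bound $\alpha_g \leq \alpha_{g,\max}$ in Assumption~\ref{assumption::Galg_clt_extra}, and the Rebolledo--Andersen--Gill CLT applies as before.

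For the joint limit, after substituting $\widetilde{\Gamma}_{g,p}^{-1}$ both $M_{g,p,n}$ and $L_{g,p,n}$ are i.i.d.\ sums across $i$ indexed by the same underlying sample $(Z_i,U_i,\widetilde{T}_i^g,C_i)$, so finite-dimensional convergence follows from the multivariate Lindeberg--Feller CLT applied to the pair. Tightness for $L_{g,p,n}$ in the Skorokhod space follows from $L^2$-bounds on increments using the Lipschitz and boundedness hypotheses of Assumption~\ref{assumption::Galg_clt_extra}; tightness for $M_{g,p,n}$ is inherited from its martingale structure. Independence of $(M_{0,p},L_{0,p})$ and $(M_{1,p},L_{1,p})$ follows from the disjointness of $\{X_i=0\}$ and $\{X_i=1\}$ in the indicator factor $I_{X_i=g}$ of each summand.

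The main obstacle will be identifying the cross-covariance $\cov\{M_{g,p}(s),L_{g,p}(t)\}$. Via the identity $L = \bar{M} - M$ this equals $\cov(M,\bar{M}) - \var(M)$, which reduces to the single-summand expectation $\E\,[\int_0^s A(x)\,\dd M_i^g(x)\cdot \int_0^t B(u)\,\dd\bar{M}_i^g(u)\mid Z_i,U_i]$ where $A,B$ are deterministic given $(Z_i,U_i)$. I would handle this via the decomposition $Y_i^g(u) = 1 - N_i^g(u-) - C_i^g(u-)$, the It\^o isometry for $\int A\,\dd M_i^g$, and the conditional independence $\widetilde{T}_i^g \indep C_i \mid (Z_i,U_i)$ from Assumption~\ref{assumption::indep}; the extra hypothesis that the censoring distribution has a bounded density is what controls the contribution of $C_i^g$. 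The minus sign in the statement and the inner integration range $\int_0^{u\wedge s}$ emerge from the martingale orthogonality that forces the inner time variable to precede both $s$ and $u$. The variance of $L_{g,p}$ is then obtained either from $\var(\bar{M}) - 2\cov(M,\bar{M}) + \var(M)$ or directly by i.i.d.\ summation using $Y_i^g(u)Y_i^g(v) = Y_i^g(u\vee v)$ and the definition of $c_g$, with the change of variables $z = z_0 + hu$ in the kernel integrals producing the $\Gamma_p^{-1}\Psi_p\Gamma_p^{-1}$ factor in the limit.
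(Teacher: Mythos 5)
Your proposal is correct in outline and shares the paper's architecture: the decomposition~\eqref{eq::Galg_decomp} into $M_{g,p,n}+L_{g,p,n}+{\rm Bias}$, the marginal CLT for the $\Galg_t$-martingale part by rerunning the proof of Theorem~\ref{theorem::Falg_clt} with $\alpha_g(t,Z_i,U_i)$ in place of $\bar{\alpha}_g(t,Z_i)$ and identifying the variance via $\E\{Y^g(s)\alpha_g(s,Z,U)\mid Z\}=y_g(s,Z)\bar{\alpha}_g(s,Z)$, finite-dimensional convergence of the pair from the multivariate i.i.d.\ CLT after replacing $J_{n,h}\Gamma_{g,p,n}^{-1}$ by its deterministic limit, and the conclusion via fidis plus tightness. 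Where you genuinely diverge is in the tightness of the drift component: the paper's most technical step (Lemma~\ref{lemma::Qclt}) establishes $C$-tightness of the integrand process $Q_{g,p,n}$ itself through a Billingsley product-of-squared-increments criterion, and only then transfers to $L_{g,p,n}$ via the integral-mapping Lemma~\ref{lemma::integralmapping_CLT}; you instead bound increments of $L_{g,p,n}$ directly, and this works more cheaply --- Cauchy--Schwarz gives $\E\,\norm{(nh)^{1/2}\{\widetilde{L}(t)-\widetilde{L}(s)\}}^2\lesssim \abs{t-s}^2$ because $\sup_u nh\,\E\,\norm{Q_{g,p,n}(u,h)}^2=O(1)$, and the remainder from the random matrix is killed by an $L^1$ bound on $\int_0^\tau(nh)^{1/2}\norm{Q}\,\dd u$ times $\sup_t\norm{D_{g,p,n}}=o_p(1)$. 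The price of your shortcut is only that you do not obtain weak convergence of $Q_{g,p,n}$ as a process, which the paper's route delivers as a by-product. One correction: the bounded density of $H$ does not enter the cross-covariance computation to ``control the contribution of $C_i^g$''; in the paper it is consumed, together with the Lipschitz conditions of Assumption~\ref{assumption::Galg_clt_extra}, in the increment bounds $\abs{y_g(r,z)-y_g(s,z)}\leq(h_{\max}+\alpha_{g,\max})\abs{r-s}$ needed for the tightness of $Q_{g,p,n}$ --- a step your route largely bypasses. Your covariance bookkeeping via $L=\bar{M}-M$ is a relabelling of the paper's direct computation of the i.i.d.\ summand covariances (one still ends up evaluating $\E\,[\int_0^sA\,\dd M_1^g\cdot\int_0^tB\,Y_1^g(u)\Delta_{g,1}(u)\,\dd u]$), so both land on the same $u\wedge s$ integration range and minus sign.
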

\begin{proof} The proof is in Appendix~\ref{app::clt_proof}.  
\end{proof}

\begin{remark} There is an interesting affinity between Theorem~\ref{theorem::clt1} of the present paper and Theorem~2.1 in~\citet[p.~361]{hjort1992inference}. Hjort{'}s central limit theorem pertains to parametric hazard rate functions under model misspecification. In that theorem, an extra term corresponding to the $L_{g,p,n}$ appears. This term is due to the difference $\alpha(t,\theta_{\rm lf}) - \alpha_{0}(t)$ between the parametric model $\alpha(t,\theta)$, evaluated in the least false parameter value $\theta_{\rm lf}$, and the true hazard $\alpha_{0}(t)$, that is the hazard under which $N(t) - \int_0^t Y(s) \alpha(s)\,\dd s$ is a martingale. In our case it is the difference $\alpha_{g}(t,Z_i,U_i) - \bar{\alpha}_{g}(t,Z_i)$ that cause the extra term to appear, but this difference is a close nonparametric relative of the difference studied by Hjort.  The misspecification in Theorem~\ref{theorem::clt1} occurs because the $\Falg$-hazard $\bar{\alpha}_{g}(t,Z_i)$ is a form of model misspecifiation when the analysis takes place with respect to the filtration $\Galg_t$.
\end{remark}
For the results in the remainder of the paper, we use the $\Falg$-central limit theorem in Theorem~\ref{theorem::Falg_clt}. There are two reasons for this. First, estimating the limiting variance of the martingale is more straightforward than estimating the variances and covariances appearing in Theorem~\ref{theorem::clt1}. Second, the common approach in the regression discontinuity literature is to condition on the forcing variable, and average out all other covariates (see the discussion in the Section~\ref{sec::intro}).

\subsection{Variance estimation}\label{subsec::variance_estimation} The probability limit of the process $nh\langle \bar{M}_{g,p,n}(\cdot,h),\bar{M}_{g,p,n}(\cdot,h)\rangle_t$ is $\langle \bar{M}_{g,p},\bar{M}_{g,p}\rangle_t$, for which an expression is given in \eqref{eq::qv_barMG}. Consistent estimators for the variance processes $e_{p,\nu}^{\tr}H_p(h_n)\langle \bar{M}_{g,p},\bar{M}_{g,p}\rangle_tH_p(h_n)e_{p,\nu}$ can be developed along the lines of the standard variance
estimator for the variance of the estimator of the cumulative regression coefficients in the Aalen additive hazards model (see, for example, \citet[Section 2.1]{hjort2021partly}). One such estimator is  
\begin{equation}
\begin{split}
& V_{g,p,n}(t,h) = \frac{h}{n}\sum_{i=1}^n I_{X_i=g} K_h(Z_i-z_0)^2 
\int_0^t J_{n,h}(s)\Gamma_{g,p,n}(s,h)^{-1} \\
& \qquad\qquad \qquad\qquad \times r_{p}\big((Z_i - z_0)/h\big)r_{p}\big((Z_i - z_0)/h\big)^{\tr}
\Gamma_{g,p,n}(s,h)^{-1}\,\dd N_i(s).
\end{split}
\label{eq::variance_estimator}
\end{equation}

\begin{lemma}\label{lemma::var_consistency} The condtions of Lemma~\ref{lemma::A4} hold. As $nh \to \infty$ and $h \to 0$, 
\begin{equation}
V_{g,p,n}(t,h) \overset{p}\to \langle \bar{M}_{g,p},\bar{M}_{g,p}\rangle_t
\notag
\end{equation}
for each $t \in [0,\tau]$. 
\end{lemma}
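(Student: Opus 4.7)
The plan is to split $V_{g,p,n}(t,h)$ into a martingale piece plus a predictable piece by using the decomposition $I_{X_i=g}\,dN_i(s)=d\bar{M}_i^g(s)+I_{X_i=g}Y_i^g(s)\bar{\alpha}_g(s,Z_i)\,ds$ coming from Lemma~\ref{lemma::barMG}. Write $V_{g,p,n}(t,h)=V_{g,p,n}^{(1)}(t,h)+V_{g,p,n}^{(2)}(t,h)$, where
\begin{equation}
V_{g,p,n}^{(1)}(t,h)=\frac{h}{n}\sum_{i=1}^nK_h(Z_i-z_0)^2\int_0^t J_{n,h}(s)\Gamma_{g,p,n}(s,h)^{-1}r_p\big(\tfrac{Z_i-z_0}{h}\big)r_p\big(\tfrac{Z_i-z_0}{h}\big)^{\tr}\Gamma_{g,p,n}(s,h)^{-1}\,\dd\bar{M}_i^g(s),
\notag
\end{equation}
and $V_{g,p,n}^{(2)}(t,h)$ is defined analogously but with $\dd\bar{M}_i^g(s)$ replaced by $I_{X_i=g}Y_i^g(s)\bar{\alpha}_g(s,Z_i)\,\dd s$.

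The first step is to show that $V_{g,p,n}^{(1)}(t,h)\to_p 0$ uniformly on $[0,\tau]$, entrywise. Since $\bar{M}_1^g,\ldots,\bar{M}_n^g$ are orthogonal $\Falg_t$-martingales, each entry of $V_{g,p,n}^{(1)}(t,h)$ is an $\Falg_t$-martingale whose predictable quadratic variation carries a factor $(h/n)^2K_h(Z_i-z_0)^4=h^{-2}n^{-2}K(Z_i/h)^4$. Invoking the uniform bounds on $\Gamma_{g,p,n}^{-1}$, $r_p$, $K$, and $\bar{\alpha}_g$ that were established in the proof of Theorem~\ref{theorem::Falg_clt} (together with Lemma~\ref{lemma::prob_limits}), the predictable variation of each entry is bounded by a constant times $n^{-2}h^{-2}\sum_{i=1}^nI_{X_i=g}K(Z_i/h)^4=O_p(1/(nh))$. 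Lenglart's inequality then gives $\sup_{t\leq\tau}|\,\text{entry of }V_{g,p,n}^{(1)}(t,h)\,|=o_p(1)$, hence $\sup_{t\leq\tau}\|V_{g,p,n}^{(1)}(t,h)\|=o_p(1)$.

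The second step is to identify the limit of $V_{g,p,n}^{(2)}(t,h)$. Interchanging the sum and the integral,
\begin{equation}
V_{g,p,n}^{(2)}(t,h)=\int_0^t J_{n,h}(s)\Gamma_{g,p,n}(s,h)^{-1}\,h\bar{\Psi}_{g,p,n}(s,h)\,\Gamma_{g,p,n}(s,h)^{-1}\,\dd s,
\notag
\end{equation}
with $\bar{\Psi}_{g,p,n}(s,h)$ as defined in the proof of Theorem~\ref{theorem::Falg_clt}. The uniform convergence $\sup_{s\in[0,\tau]}\|J_{n,h}(s)\Gamma_{g,p,n}(s,h)^{-1}-\widetilde{\Gamma}_{g,p}(s)^{-1}\|=o_p(1)$ (Lemmas~\ref{lemma::prob_limits} and~\ref{lemma::invertible_to_limit} in the appendix) together with the uniform convergence of $h\bar{\Psi}_{g,p,n}(s,h)$ to its limit---both invoked in the proof of Theorem~\ref{theorem::Falg_clt}---let me pass these limits inside the integral by dominated convergence. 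A direct computation, using $\widetilde{\Gamma}_{g,p}(s)^{-1}$ together with the limit of $h\bar{\Psi}_{g,p,n}(s,h)$, reproduces $\frac{1}{f_Z(z_0)}\int_0^t\frac{\bar{\alpha}_g(s,z_0)}{y_g(s,z_0)}\,\dd s\,\Gamma_p^{-1}\Psi_p\Gamma_p^{-1}$, which is precisely $\langle\bar{M}_{g,p},\bar{M}_{g,p}\rangle_t$ by~\eqref{eq::qv_barMG}. Combining the two steps gives the claim.

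I expect the main obstacle to be the first step: the weighting $K_h^2$ (rather than $K_h$) inflates the $h^{-1}$ in the predictable variation and demands care when tracking powers of $h$. Once the $O_p(1/(nh))$ bound on the predictable variation is secured, the Lenglart argument is routine, and the second step reduces to reusing probability limits that were already derived in the proof of Theorem~\ref{theorem::Falg_clt}.
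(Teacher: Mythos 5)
Your proposal is correct and is essentially the paper's own argument: the paper likewise writes $V_{g,p,n}(t,h)=nh\langle\bar{M}_{g,p,n}(\cdot,h),\bar{M}_{g,p,n}(\cdot,h)\rangle_t+r_{g,p,n}(t,h)$, where the compensator piece is exactly your $V^{(2)}_{g,p,n}$ (whose limit was already identified in the proof of Theorem~\ref{theorem::Falg_clt}) and the remainder is exactly your $V^{(1)}_{g,p,n}$, killed by the same predictable-quadratic-variation bound of order $O_p((nh)^{-1})$ plus Lenglart's inequality. The bookkeeping of the powers of $h$ that you flag as the main obstacle works out just as you anticipate, since $K^4\lesssim K$ and $(nh)^{-1}\sum_iK(Z_i/h)=O_p(1)$.
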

\begin{proof} Assume that $z_0 = 0$.
Using the martingale decomposition in Lemma~\ref{lemma::barMG}, we see that $V_{g,p,n}(t,h) = nh  \langle \bar{M}_{g,p,n}(\cdot,h),\bar{M}_{g,p,n}(\cdot,h)\rangle_t + r_{g,p,n}(t,h)$. For $\nu = 0,\ldots,p$, the $\nu$th element of $r_{g,p,n}(t,h)$ is   
\begin{equation}
e_{p,\nu}^{\tr}r_{g,p,n}(t,h) = \frac{h}{n}\sum_{i=1}^n I_{X_i=g} K_h(Z_i)^2 
\int_0^tJ_{n,h}(s) \{e_{p,\nu}^{\tr}\Gamma_{g,p,n}(s,h)^{-1} r_{p}(Z_i/h) \}^{2}\,\dd \bar{M}_i^g(s).
\notag
\end{equation}
From Lemma~\ref{lemma::simple_ineq} we get the bound
\begin{equation}
\big(e_{p,\nu}^{\tr}\Gamma_{g,p,n}(s,h)^{-1} r_{p}(Z_i/h) \big)^{2}
\leq \norm{\Gamma_{g,p,n}(s,h)^{-1}}^2\norm{r_{p}(Z_i/h)}^2, 
\notag
\end{equation}
so the predictable quadratic variation of $e_{p,\nu}^{\tr}r_{g,p,n}(t,h)$, relative to $\Falg_t$, is 
\begin{equation}
\begin{split}
& e_{p,\nu}^{\tr}\langle r_{g,p,n}(\cdot,h),r_{g,p,n}(\cdot,h) \rangle_t e_{p,\nu}^{\tr}\\ 
& \; = 
\frac{h^2}{n^2}\sum_{i=1}^n I_{X_i=g} K_h(Z_i)^4 
\int_0^t J_{n,h}(s)\{e_{p,\nu}^{\tr}\Gamma_{g,p,n}(s,h)^{-1} r_{p}(Z_i/h) \}^{4}Y_i^g(s)\bar{\alpha}_g(s,Z_i)\,\dd s\\
& \; \leq \frac{1}{(nh)^2}\sum_{i=1}^n I_{X_i=g} K(Z_i/h)^4 
\int_0^t \norm{\Gamma_{g,p,n}(s,h)^{-1}}^4\norm{r_{p}(Z_i/h)}^4Y_i^g(s)\bar{\alpha}_g(s,Z_i)\,\dd s, 
\end{split}
\notag
\end{equation}
By Lemma~\ref{lemma::prob_limits}(i)--(iii) and Lemma~\ref{lemma::invertible_to_limit} in the appendix $\norm{J_{n,h}(s)\Gamma_{1,p,n}(s,h)^{-1}}$ converges in probability to $\norm{y_g(s,z_0)f_Z(z_0)\Gamma_{1}(s)^{-1}}$ and $\norm{J_{n,h}(s)\Gamma_{0,p,n}(s,h)^{-1}}$ converges in probability to $\norm{y_g(s,z_0)f_Z(z_0)H_p(-1)\Gamma_{1}(s)^{-1}H_p(-1)}$, uniformly in $s \in [0,\tau]$ as $nh \to \infty$ and $h \to 0$. Both limits are bounded by $\abs{y_g(\tau,z_0)f_Z(z_0)}^{-1}\,\norm{\Gamma_{p}^{-1}}$ and $K(z/h) \norm{r_{p}(z/h)} \leq K(z/h) \sum_{\nu = 0}^p\kappa^{2\nu}$, as argued in the proof of Theorem~\ref{theorem::Falg_clt}. Therefore, 
\begin{equation}
e_{p,\nu}^{\tr}\langle r_{g,p,n}(\cdot,h),r_{g,p,n}(\cdot,h) \rangle_t e_{p,\nu} \lesssim (nh)^{-2}\sum_{i=1}^n  
K(Z_i/h)\int_0^t Y_i^g(s)\bar{\alpha}_g(s,Z_i)\,\dd s. 
\notag
\end{equation} 
For $h \kappa < \kappa_0$, the conditional hazard $\bar{\alpha}_g(t,z)$ is bounded and $\int_0^{\tau} Y_i(s)\bar{\alpha}_g(s,Z_i)\,\dd s = O_p(1)$ for all $i$. Moreover, $(nh)^{-1}\sum_{i=1}^n K(Z_i/h)$ is a sum of i.i.d.~random variables with finite variance (Assumption~\ref{assumption::kernel}) converging in probability to $\int_{-\kappa}^{\kappa} K(u) f_Z(hu)\,\dd u$ which is finite provided $h \kappa < \kappa_0$. This gives that $e_{p,\nu}^{\tr}\langle r_{g,p,n}(\cdot,h),r_{g,p,n}(\cdot,h) \rangle_t e_{p,\nu} = O_p((nh)^{-1})  = o_p(1)$ as $nh \to \infty$. By Lenglart{'}s inequality we now conclude that $e_{p,\nu}^{\tr} r_{g,p,n}(t,h) = o_p(1)$ for each $\nu$, and therefore $r_{g,p,n}(t,h) = o_p(1)$. 
\end{proof} 
This means that a consistent estimator of the limiting variance of the sequence $(nh^{2\nu + 1})^{1/2}\{\widehat{A}_{g,p}^{(\nu)}(t,h) - \int_0^{t}\bar{\alpha}_g^{(\nu)}(s,z_0)\,\dd s \}$ of Corollary~\ref{corollary::Falg_Thetaconv} is $h^{2\nu} (\nu!)^2 e_{p,\nu}^{\tr} V_{g,p,n}(t,h)e_{p,\nu}$. In particular, with $\Phi(z)$ the standard normal cumulative distribution function, 
\begin{equation}
\widehat{A}_{g,p}^{(\nu)}(t,h) \pm \Phi^{-1}(1 - \alpha/2)(\nu!)^2 e_{p,\nu}^{\tr} V_{g,p,n}(t,h)e_{p,\nu}/(nh)^{1/2}
\notag
\end{equation}
are approximate $(1 - \alpha)100$ percent pointwise confidence intervals provided $nh^{2p+3} \to 0$. If $nh^{2p+3} \to c$ for $c>0$, these confidence intervals are not valid due to the bias term appearing in the limit in Corollary~\ref{corollary::Falg_Thetaconv}. The topic of the next section is how this may be fixed.

\subsection{Bias correction}\label{sec::bias_correction}
In this section we follow the conventional bias correction approach in local polynomial regression (see, e.g.,~\citet[Section~4.4]{fan1996local}) and study the estimators given by
\begin{equation}
\widehat{\Theta}_{p,q}^{(\nu),{\rm bc}}(t,h,b) = \widehat{\Theta}_{1,p,q}^{(\nu),{\rm bc}}(t,h,b) - \widehat{\Theta}_{0,p,q}^{(\nu),{\rm bc}}(t,h,b),\quad
\text{$q \geq p + 1$}
\notag
\end{equation}
where
\begin{equation} 
\widehat{\Theta}_{g,p,q}^{(\nu),{\rm bc}}(t,h,b) = \widehat{\Theta}_{g,p}^{(\nu)}(t,h) - h^{p+1-\nu}e_{p,\nu}^{\tr}\nu! \frac{\Gamma_{p}^{-1}\vartheta_{p,p+1}}{(p+1)!} \widehat{\Theta}_{g,q}^{(p+1)}(t,b),
\label{eq::subtract_off_bias}
\end{equation}
Here $b$ is a so-called pilot bandwidth sequence, typically larger than $h$ (because it is harder to estimate the $(p+1)$th derivative than the $\nu$th derivative when $\nu \leq p$, as it is here). In~\eqref{eq::subtract_off_bias} the second term on the right is an estimator of the bias term appearing in the limiting distribution of Corollary~\ref{corollary::Falg_Thetaconv}. Under certain conditions on the bandwidth sequences $h$ and $b$ made precise below, subtracting off a bias estimate removes the asymptotic bias term from the limiting distribution in Corollary~\ref{corollary::Falg_Thetaconv}, even when $nh^{2p+3}$ tends to a positive constant. In particular, the mean squared error optimal bandwidth $h \propto n^{-1/(2p+3)}$ may be employed, and the limiting martingale processes are the same as those given in said corollary. 

As pointed out by \citet[p.~2302]{calonico2014robust}, this large-sample approximation relies on the condition $h/b \to 0$ as $h,b \to 0$, which makes the variability of the bias correction estimate disappear. That is, provided $h/b \to 0$, the estimator $\widehat{\Theta}_{g,q}^{(p+1)}(t,b)$ in~\eqref{eq::subtract_off_bias} does not contribute to the limiting variance of $\widehat{\Theta}_{g,p,q}^{(\nu),{\rm bc}}(t,h,b)$. Since $h/b$ is never zero in finite samples, the idea of \citet{calonico2014robust} is to remove this requirement, and instead let $h/b \to \rho > 0$ as $h,b\to 0$, and thereby get limiting distributions of the estimator sequence where the variability of the bias estimator is accounted for. These ideas are formalised in Corollary~\ref{corollary::bias_correction} below.

By Lemma~\ref{lemma::general_bias} and using the decomposition in~\eqref{eq::Falg_decomp} twice, we obtain 
%
\begin{equation}
\begin{split}
& (nh^{2\nu+1})^{1/2}\{\widehat{\Theta}_{g,p,q}^{(\nu),{\rm bc}}(t,h,b) 
- \int_0^{t}\bar{\alpha}_g^{(\nu)}(s,z_0)\,\dd s \}\\
& \; = (nh)^{1/2}e_{p,\nu}^{\tr}\nu!\big\{\bar{M}_{g,p,n}(t,h) 
- (h/b)^{p+1}\Gamma_{p}^{-1}\vartheta_{p,p+1} e_{q,p+1}^{\tr} \bar{M}_{g,q,n}(t,b)\big\}\\
& \qquad \quad
- (nh)^{1/2}h^{p+1}b^{q-p} e_{p,\nu}^{\tr}\Gamma_{p}^{-1}\vartheta_{p,p+1} 
e_{q,p+1}^{\tr} \Gamma_{q}^{-1}\vartheta_{q,q+1} \int_0^t \bar{\alpha}_g^{(q+1)}(s,z_0)\,\dd s\\
%
& \qquad\qquad \qquad   + O_p((nh^{2p+3})^{1/2}(h + b^{q-p})).
\end{split}
\label{eq::bias_correct_decomp}
\end{equation} 
From Lemma~\ref{lemma::prob_limits}(iii) in the appendix, this approximation holds with probability one when $n,h,b$ are so that $J_{n,h}(t) = 1$ and $J_{n,b}(t) = 1$ for all $t \in [0,\tau]$, thus ensuring that $(nh^{2\nu+1})^{1/2}\int_0^t \{J_{n,h}(s) - 1\}\bar{\alpha}_g^{(\nu)}(s,z_0)\,\dd s\ + (nh)^{1/2}h^{p+1}e_{p,\nu}\nu! \kappa_p \int_0^t \{J_{n,b}(s) - 1\}\bar{\alpha}_g^{(p+1)}(s,z_0)\,\dd s = 0$ for all $t$, almost surely.

\begin{corollary}\label{corollary::bias_correction} The condtions of Lemma~\ref{lemma::A4} hold. Assume that $\bar{\alpha}_g(t,z)$ for $g = 0,1$ are at least $q+2 \geq p + 3$ times continuously differentiable in $z$ for all $t$; that $n \min(h_n,b_n) \to \infty$, $\max(h_n,b_n) \to 0$, $h_n/b_n \to \rho \geq 0$ and $nh_n^{2p+3}\max\{h_n^2,b_n^{2(q - p)}\}$ tends to zero as $n \to \infty$, and that $h_n \leq b_n$ for all $n$. Then 
\begin{equation}
(nh_n^{2\nu + 1})^{1/2}\{\widehat{\Theta}_{p,q}^{(\nu),{\rm bc}}(\cdot,h,b) - \Theta^{(\nu)}(\cdot,z_0)\}	
\Rightarrow e_{p,\nu}^{\tr}\nu! (\bar{M}_{1,p}^{(\rho)} - \bar{M}_{0,p}^{(\rho)} ),
\notag
\end{equation}
where $\bar{M}_{0,p}^{(\rho)}$ and $\bar{M}_{1,p}^{(\rho)}$ are orthogonal mean zero Gaussian martingales with variation processes 
\begin{equation}
\langle \bar{M}_{g,p}^{(\rho)},\bar{M}_{g,p}^{(\rho)} \rangle_t 
 =  \langle \bar{M}_{g,p},\bar{M}_{g,p}\rangle_t + \rho^{2p+3} V_{g,p,q}(t) - 2\rho^{p+2} C_{g,p,q}(t),
\notag
\end{equation}
with
\begin{equation}
V_{g,p,q}(t) = \Gamma_p^{-1}\vartheta_{p,p+1} e_{q,p+1}^{\tr}
\langle \bar{M}_{g,q},\bar{M}_{g,q} \rangle_t e_{q,p+1}\vartheta_{p,p+1}^{\tr}\Gamma_{p}^{-1},
\notag
\end{equation}
and
\begin{equation}
C_{g,p,q}(t) = \langle \bar{M}_{g,p},\bar{M}_{g,q}\rangle_t e_{q,p+1} \vartheta_{p,p+1}^{\tr}\Gamma_p^{-1},
\notag
\end{equation}
where $\langle \bar{M}_{g,r},\bar{M}_{g,r} \rangle_t$ for $r = p,q$ are as defined in~\eqref{eq::qv_barMG}, while $\langle \bar{M}_{g,p},\bar{M}_{g,q}\rangle_t$ is the $(p+1)\times (q+1)$ matrix   
\begin{equation}
\langle \bar{M}_{g,p},\bar{M}_{g,q}\rangle_t = \frac{1}{f_{Z}(z_0)} \int_0^1 \frac{\bar{\alpha}_g(s,z_0)}{y_g(s,z_0)}\,\dd s \,\Gamma_{p}^{-1}\Psi_{p,q}(\rho) \Gamma_q^{-1}, 
\notag
\end{equation}
where $\Psi_{p,q}(\rho) = \int_0^{\infty}K(u)K(\rho u) r_{p}(u)r_q(\rho u)^{\tr}\,\dd u$. 
\end{corollary}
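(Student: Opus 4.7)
The starting point is the decomposition in~\eqref{eq::bias_correct_decomp}. The plan is to show that the deterministic bias term and the $O_p$-remainder vanish under the stated bandwidth conditions, and then establish joint weak convergence of the vector martingale $\bigl((nh)^{1/2}\bar{M}_{g,p,n}(\cdot,h),(nh)^{1/2}(h/b)^{p+1}\bar{M}_{g,q,n}(\cdot,b)\bigr)_{g=0,1}$, from which the corollary follows by taking linear combinations. For the deterministic term, $(nh)^{1/2}h^{p+1}b^{q-p} = (nh^{2p+3})^{1/2}b^{q-p}$, which tends to zero by the assumption $nh_n^{2p+3}\max\{h_n^2,b_n^{2(q-p)}\}\to 0$; the same assumption plus $h_n\leq b_n$ kills the $O_p$-remainder. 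Lemma~\ref{lemma::prob_limits}(iii) handles the terms involving $\{J_{n,h}-1\}$ and $\{J_{n,b}-1\}$ exactly as in the proof of Corollary~\ref{corollary::Falg_Thetaconv}.

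The heart of the argument is the joint convergence of the two $\Falg_t$-martingales with two different bandwidths. Both $(nh)^{1/2}\bar{M}_{g,p,n}(\cdot,h)$ and $(nb)^{1/2}\bar{M}_{g,q,n}(\cdot,b)$ are $\Falg_t$-martingales, and the second can be rewritten as $(h/b)^{1/2}(nh)^{1/2}\bar{M}_{g,q,n}(\cdot,b)$, so that after multiplying by $(h/b)^{p+1}$ the relevant scaling on $\bar{M}_{g,q,n}(\cdot,b)$ is $(h/b)^{p+3/2}(nb)^{1/2}\to\rho^{p+3/2}(nb)^{1/2}$. I would repeat verbatim the argument of Theorem~\ref{theorem::Falg_clt}: each marginal martingale has bounded jumps of order $(n\min(h,b))^{-1/2}$, so the Lindeberg condition is immediate, and the only new ingredient is the cross predictable quadratic variation. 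Writing everything in terms of the linearised numerator $\xi_{g,\cdot,n}$ (the errors $r_{g,\cdot,n}$ are negligible uniformly in $t$ by Lenglart's inequality, as in Theorem~\ref{theorem::Falg_clt}), the cross bracket between $(nh)^{1/2}\bar{M}_{g,p,n}(\cdot,h)$ and $(nh)^{1/2}(h/b)^{p+1}\bar{M}_{g,q,n}(\cdot,b)$ is
\begin{equation}
\frac{(h/b)^{p+1}}{h}\int_0^t\widetilde{\Gamma}_{g,p}(s)^{-1}\Bigl\{\frac{1}{n}\sum_{i=1}^n I_{X_i=g}K_h(Z_i)K_b(Z_i)r_p(Z_i/h)r_q(Z_i/b)^{\tr}Y_i^g(s)\bar\alpha_g(s,Z_i)\Bigr\}\widetilde{\Gamma}_{g,q}(s)^{-1}\,\dd s.
\notag
\end{equation}

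Substituting $u=Z_i/h$ inside the inner average and using $K_b(z)=K(z/b)/b=K(uh/b)/b$, the change-of-variables combined with the continuity of $\bar\alpha_g$, $y_g$, $f_Z$ at $z_0$ (Assumption~\ref{assumption::potential_params}) and $h/b\to\rho$ yields the limit $\bar\alpha_g(s,z_0)y_g(s,z_0)f_Z(z_0)\int_0^\infty K(u)K(\rho u)r_p(u)r_q(\rho u)^{\tr}\,\dd u$; after canceling the two $\widetilde{\Gamma}$'s this gives precisely the claimed $\langle\bar M_{g,p},\bar M_{g,q}\rangle_t$ with the matrix $\Psi_{p,q}(\rho)$. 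A uniform version follows from an analogue of Lemma~\ref{lemma::prob_limits}(viii). The diagonal entries are handled by Theorem~\ref{theorem::Falg_clt} applied with $p$ and with $q$; the cross brackets between $g=0$ and $g=1$ are identically zero because the indicators $I_{X_i=g}$ are orthogonal.

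The main obstacle is the uniform-in-$t$ convergence of this mixed-bandwidth quadratic variation, since the pair $(h,b)$ scales differently in the two kernel factors. Once this is established, Rebolledo's central limit theorem (as used in Theorem~\ref{theorem::Falg_clt}) applied to the four-dimensional martingale $\bigl((nh)^{1/2}\bar M_{0,p,n}(\cdot,h),(nh)^{1/2}(h/b)^{p+1}\bar M_{0,q,n}(\cdot,b),(nh)^{1/2}\bar M_{1,p,n}(\cdot,h),(nh)^{1/2}(h/b)^{p+1}\bar M_{1,q,n}(\cdot,b)\bigr)$ gives joint weak convergence, and the continuous mapping theorem on the explicit linear combination in~\eqref{eq::bias_correct_decomp} produces the Gaussian martingales $\bar M_{g,p}^{(\rho)}=\bar M_{g,p}-\rho^{p+1}\Gamma_p^{-1}\vartheta_{p,p+1}e_{q,p+1}^{\tr}\bar M_{g,q}$, whose variation process expands to $\langle\bar M_{g,p},\bar M_{g,p}\rangle_t+\rho^{2p+3}V_{g,p,q}(t)-2\rho^{p+2}C_{g,p,q}(t)$ as claimed (the power $p+2$ in the cross term comes from $\rho^{p+1}\cdot\rho^{1}$ produced by the $(h/b)^{1/2}$ reshuffling). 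Orthogonality of $\bar M_{0,p}^{(\rho)}$ and $\bar M_{1,p}^{(\rho)}$ is inherited from the orthogonality of the two-sided martingales.
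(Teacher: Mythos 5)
Your proposal follows essentially the same route as the paper: the decomposition in~\eqref{eq::bias_correct_decomp}, elimination of the deterministic bias and remainder terms via the bandwidth conditions, and joint convergence of the $p$- and $q$-bandwidth $\Falg_t$-martingales through the mixed-bandwidth predictable covariation, whose uniform limit (your "main obstacle") is exactly what Lemma~\ref{lemma::prob_limits}(xii)--(xiii) supplies; the paper simply works with the single combined martingale $\bar{M}_{g,p,n}(t,h_n) - (h_n/b_n)^{p+1}\Gamma_{p}^{-1}\vartheta_{p,p+1} e_{q,p+1}^{\tr}\bar{M}_{g,q,n}(t,b_n)$ rather than the four-dimensional vector plus continuous mapping, which is only a cosmetic difference. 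One slip: the prefactor in your displayed cross bracket should be $h(h/b)^{p+1}$ rather than $(h/b)^{p+1}/h$, so that together with the $1/b$ hidden in $K_h K_b$ it produces $(h/b)^{p+2} \to \rho^{p+2}$ times $\langle \bar{M}_{g,p},\bar{M}_{g,q}\rangle_t$ — consistent with your final bookkeeping, which is correct.
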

\begin{proof} It suffices to look at one of the sides of the cut-off. Define the $\Falg$-martingales 
\begin{equation}
\bar{M}_{g,p,q,n}(t,h_n,b_n) = 
\bar{M}_{g,p,n}(t,h_n) 
- (h_n/b_n)^{p+1}\Gamma_{p}^{-1}\vartheta_{p,p+1} e_{q,p+1}^{\tr} \bar{M}_{g,q,n}(t,b_n),
\notag
\end{equation}
so that $\bar{M}_{g,p,q,n}(t,h_n,b_n)$ is the martingale in the curly brackets in~\eqref{eq::bias_correct_decomp}. The predictable quadratic variation of $(nh_n)^{1/2}\bar{M}_{g,p,q,n}(t,h_n,b_n)$ is 
\begin{equation}
\begin{split}
& nh_n \langle \bar{M}_{g,p,q,n}(\cdot ,h_n,b_n),\bar{M}_{g,p,q,n}(\cdot ,h_n,b_n)\rangle_t 
= nh_n\langle \bar{M}_{g,p,n}(\cdot,h_n),\bar{M}_{g,p,n}(\cdot,h_n)\rangle_t\\
& \;  + (nh_n)(h_n/b_n)^{2(p+1)} \Gamma_{p}^{-1}\vartheta_{p,p+1}e_{q,p+1}^{\tr}
\langle \bar{M}_{g,q,n}(\cdot,b_n),\bar{M}_{g,q,n}(\cdot,b_n)\rangle_t e_{q,p+1}\vartheta_{p,p+1}^{\tr}\Gamma_{p}^{-1}\\
&\; - 2(nh_n) (h_n/b_n)^{p+1} \langle \bar{M}_{g,p,n}(\cdot,h_n),\bar{M}_{g,q,n}(\cdot,b_n)\rangle_t e_{q,p+1}
\vartheta_{p,p+1}^{\tr}\Gamma_p^{-1}.
\end{split}
\notag
\end{equation}
From Theorem~\ref{theorem::Falg_clt}, we get that $nh_n\langle \bar{M}_{g,p,n}(\cdot,h_n),\bar{M}_{g,p,n}(\cdot,h_n)\rangle_t \to_p \langle \bar{M}_{g,p},\bar{M}_{g,p}\rangle_t$ with an expression for this limit given in~\eqref{eq::qv_barMG}, and also that the second term converges in probability to $\rho^{2p + 3}V_{g,p,q}(t)$. The third variation process in the third term is 
\begin{equation}
\begin{split}
&\langle \bar{M}_{g,p,n}(\cdot,h_n),\bar{M}_{g,q,n}(\cdot,b_n)\rangle_t\\
& \; = \frac{1}{n b_n}\int_0^t J_{n,h_n}(s) \Gamma_{g,p,n}(s,h_n)^{-1}b_n\bar{\Psi}_{g,p,q,n}(s,h_n,b_n)J_{n,b_n}(s)\Gamma_{g,q,n}(s,b_n)^{-1},
\end{split}
\label{eq::qv_pq}
\end{equation}  
where
\begin{equation}
\bar{\Psi}_{g,p,q,n}(s,h,b)
= \frac{1}{n}\sum_{i=1}^n I_{X_i = g}
K_{h}(Z_i)K_{b}(Z_i)r_{p}(Z_i/h)r_{q}(Z_i/b)
Y_i^{g}(s)\bar{\alpha}_g(s,Z_i).
\label{eq::barPsi_hb}
\end{equation} 
By Lemma~\ref{lemma::prob_limits}(xii) $b_n\bar{\Psi}_{1,p,q,n}(s,h_n,b_n) \to_p  y_1(s,z_0)\bar{\alpha}_1(s,z_0)f_Z(z_0) \Psi_{p,q}(\rho)$ and (xiii) $b_n\bar{\Psi}_{0,p,q,n}(s,h_n,b_n) \to_p  y_0(s,z_0)\bar{\alpha}_0(s,z_0)f_Z(z_0) H_{p}(-1)\Psi_{p,q}(\rho)H_{q}(-1)$, uniformly in $s \in [0,\tau]$. Combining this with Lemma~\ref{lemma::prob_limits}(i)--(iii) and Lemma~\ref{lemma::invertible_to_limit}, we have that the integrand in~\eqref{eq::qv_pq} converges in probability, uniformly in $s \in [0,\tau]$, to $f_Z(z_0)^{-1}\{\bar{\alpha}_g(s,z_0)/y_g(s,z_0)\} \Gamma_{p}^{-1}\Psi_{p,q}(\rho) \Gamma_q^{-1}$, and $nb_n  \langle \bar{M}_{g,p,n}(\cdot,h_n),\bar{M}_{g,q,n}(\cdot,b_n)\rangle_t \to_p \langle \bar{M}_{g,p},\bar{M}_{g,q}\rangle_t$. It now follows from Theorem~\ref{theorem::Falg_clt} that $(nh_n)^{1/2}\bar{M}_{g,p,q,n}(\cdot ,h_n,b_n) \Rightarrow \bar{M}_{g,p}^{(\rho)}$ for $g = 0,1$. Consider now second and the third term in~\eqref{eq::bias_correct_decomp}. The second term is a nonrandom continuous function, and $(nh_n)^{1/2} h_n^{p+1}b^{q - p} = (nh_n^{2p + 3} b_n^{2(q - p)})^{1/2}$ tends to zero by assumption. The third and final term is $O_p((nh^{2p+3})^{1/2}(h + b^{q-p})) = O_p((nh^{2p+3}\max\{h^2,b^{2(q-p)}\} )^{1/2}) = o_p(1)$ by assumption. Again, Proposition VI.3.17 and the Cram{\'e}r--Slutsky like Lemma~VI.3.31 in~\citet[pp.~350--352]{jacod2003limit} yield the result.
\end{proof}

\section{Concluding remarks}
The ideas underlying designs such as the {\rdd}, the difference in difference design, and the intstrumental variable design are not bound to any particular estimand, model, or type of data. The statistical theory for these designs, however, are much more developed for the type of estimands, models, and data often encountered in economics, than for estimands, models, and data typically encountered in other fields of application. This paper is an attempt at taking one of these designs from the estimation of a conditional average treatment effect (\cate) based on uncensored data, to the estimation of a {\cate}-like object, namely the difference of two cumulative hazards, based on right-censored survival data. 

A few directions the results of the present paper can be extended in are: First, the estimator developed in this paper can be used as a building block in the estimation of $\theta(t,z_0)$ using kernel smoothing techniques similar to those introduced by \citet{ramlau1983smoothing} (research in this direction is underway). Second, the results of this paper is limited to the sharp {\rdd}, and ought to be extended to the fuzzy {\rdd}. Third, the estimator of this paper can be used to test whether the parameter of interest in a Cox regression model with unobserved confounders equals zero or not, but it can not provide asymptotically unbiased estimates of this parameter. Whether the {\rdd} can be used to identify the parameter of interest in proportional hazards models under confounding, can be studied.    

\appendix 
\section{Results used throughout the article}
As the title says, this appendix contains results used throughout the article.

\subsection{Conditional expectations}\label{app::cond_exp} 
This section contains two lemmata that are used in the proof of Lemma~\ref{lemma::A4}.


\begin{lemma} Let $X$ be a random variable on $(\Omega,\Gscr,\pr)$, and $\Fscr$ a sub-{\sigalg} of $\Gscr$. Assume that there is an event $A \in \Fscr$ such that $\omega \mapsto \E\,(X \mid \Fscr)(\omega)$ is constant over $A$, and set $Z = I_A\E\,(X \mid A) + I_{A^c}\E\,(X \mid \Fscr)$. Then $Z$ is a version of $\E\,(X \mid \Fscr)$, and, in particular, $I_A \E\,(X \mid \Fscr) = I_A\E\,(X \mid A)$ a.s. 
\end{lemma}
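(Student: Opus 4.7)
The plan is to verify that $Z$ satisfies the two defining properties of the conditional expectation $\E\,(X \mid \Fscr)$: $\Fscr$-measurability, and the integral identity $\int_B Z\,\dd \pr = \int_B X\,\dd \pr$ for every $B \in \Fscr$. Measurability is immediate: $A \in \Fscr$ by assumption, so $I_A$ and $I_{A^c}$ are $\Fscr$-measurable, and both $\E\,(X\mid A)$ (a constant) and $\E\,(X\mid \Fscr)$ are $\Fscr$-measurable. Hence $Z$ is $\Fscr$-measurable.

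The key step is a preliminary identification: let $c$ denote the constant value of $\E\,(X\mid \Fscr)$ on $A$. Since $A \in \Fscr$, the definition of the conditional expectation gives
\begin{equation}
\int_A X\,\dd \pr \;=\; \int_A \E\,(X\mid \Fscr)\,\dd \pr \;=\; c\,\pr(A),
\notag
\end{equation}
which, after dividing by $\pr(A)$ (assuming $\pr(A) > 0$; the case $\pr(A)=0$ is trivial since then both sides of the claimed identity vanish a.s.), yields $c = \E\,(X\mid A)$. This is the only substantive computation, and it is the place where the hypothesis that $\E\,(X\mid \Fscr)$ is constant on $A$ actually gets used.

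Next I would verify the integral identity. For any $B \in \Fscr$, write $B = (B\cap A) \sqcup (B\cap A^c)$. On $B \cap A^c$,
\begin{equation}
\int_{B\cap A^c} Z\,\dd \pr \;=\; \int_{B\cap A^c} \E\,(X\mid \Fscr)\,\dd \pr \;=\; \int_{B\cap A^c} X\,\dd \pr,
\notag
\end{equation}
since $B\cap A^c \in \Fscr$. On $B \cap A$, using the identification $\E\,(X \mid A) = c$ together with constancy of $\E\,(X \mid \Fscr)$ on $A$,
\begin{equation}
\int_{B\cap A} Z\,\dd \pr \;=\; \E\,(X\mid A)\,\pr(B\cap A) \;=\; c\,\pr(B\cap A) \;=\; \int_{B\cap A} \E\,(X\mid \Fscr)\,\dd \pr \;=\; \int_{B\cap A} X\,\dd \pr,
\notag
\end{equation}
again using $B \cap A \in \Fscr$ in the last equality. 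Summing the two pieces gives $\int_B Z\,\dd \pr = \int_B X\,\dd \pr$, so $Z$ is a version of $\E\,(X\mid \Fscr)$.

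The in-particular clause follows immediately: multiplying the a.s.\ identity $\E\,(X\mid \Fscr) = Z$ by $I_A$ gives $I_A \E\,(X\mid \Fscr) = I_A \E\,(X\mid A)$ a.s. There is no real obstacle here; the only thing to watch is the degenerate case $\pr(A) = 0$, where $\E\,(X\mid A)$ is undefined but $I_A = 0$ a.s., so $Z = I_{A^c}\E\,(X\mid \Fscr) = \E\,(X\mid \Fscr)$ a.s., and the statement holds vacuously.
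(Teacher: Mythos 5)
Your proof is correct and follows essentially the same route as the paper's: first identify the constant value of $\E\,(X\mid\Fscr)$ on $A$ with $\E\,(X\mid A)$ using the defining property of conditional expectation on the set $A$, then verify the integral identity for an arbitrary $B\in\Fscr$ by splitting over $B\cap A$ and $B\cap A^c$. Your explicit treatment of measurability and of the degenerate case $\pr(A)=0$ is a small additional tidiness the paper omits, but the argument is the same.
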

\begin{proof} By assumption $\E\,(X \mid \Fscr) = a$ say, for every $\omega$ in $A$. Then $\E\,(X \mid A) = \E\,(I_AX)/\pr(A) = \E\,\{I_A\E\,(X\mid \Fscr)\}/\pr(A) = \E\,(I_A \,a )/\pr(A) = a$, and so $I_A \E\,(X \mid \Fscr) - I_A\E\,(X \mid A) = 0$. To see that $Z$ is a version of the conditional expectation, note that, for any event $B$ in $\Fscr$, $\E\,(I_BZ) = \E\,\{I_{B \cap A}\E\,(X\mid A)\} + \E\,\{I_{B\cap A^c}\E\,(X \mid \Fscr)\} = \E\,\{I_{B \cap A}\E\,(X\mid \Fscr)\} + \E\,\{I_{B\cap A^c}\E\,(X \mid \Fscr)\} = \E\,\{I_{B}\E\,(X \mid \Fscr)\} = \E\,(I_{B}X)$.
\end{proof}
  
\begin{lemma}\label{lemma::lemmaA2} Let $X$ be a random variable on $(\Omega,\Gscr,\pr)$, and $\Fscr$ a sub-{\sigalg} of $\Gscr$. Let $A \in \Fscr$, and assume that there are no nonempty sets in $\Fscr$ that is a proper subset of $A$. Then $\omega \mapsto \E\,(X \mid \Fscr)(\omega)$ is constant over $A$.  
\end{lemma}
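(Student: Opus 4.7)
My plan starts from the observation that the hypothesis is exactly the statement that $A$ is an atom of $\Fscr$: the only $\Fscr$-measurable subsets of $A$ are $\emptyset$ and $A$ itself. The lemma then reduces to the standard fact that any $\Fscr$-measurable function is constant on an $\Fscr$-atom, applied to a version $Y$ of $\E\,(X\mid \Fscr)$.

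Concretely, I would first note that for every $B \in \Fscr$ the set $B \cap A$ lies in $\Fscr$ and is contained in $A$; the atom hypothesis then forces $B\cap A \in \{\emptyset, A\}$. Applying this to the $\Fscr$-measurable level sets $B_a = \{Y \leq a\}$ for $a \in \mathbb{R}$, each intersection $B_a \cap A$ equals either $\emptyset$ or $A$.

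Because $a \mapsto B_a \cap A$ is non-decreasing, setting $c = \inf\{a \in \mathbb{R} : B_a \cap A = A\}$ (with the convention $\inf \emptyset = +\infty$) captures the transition point. If $c = +\infty$, then $B_a \cap A = \emptyset$ for every real $a$, so $Y \equiv +\infty$ on $A$; if $c = -\infty$, then symmetrically $Y \equiv -\infty$ on $A$. Otherwise $c$ is finite, and taking a sequence $a_n \downarrow c$ with $a_n > c$ gives $B_{a_n}\cap A = A$, hence $Y \leq a_n$ on $A$ for every $n$, and thus $Y \leq c$ on $A$; similarly $a_n \uparrow c$ with $a_n < c$ gives $B_{a_n}\cap A = \emptyset$, hence $Y > a_n$ on $A$, so $Y \geq c$ on $A$. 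Combining, $Y(\omega) = c$ for every $\omega \in A$.

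I do not expect any serious obstacle; the argument is essentially measure-theoretic bookkeeping once the atom structure is identified. The one minor subtlety is that if $X$ is not integrable $Y$ may take the value $\pm \infty$ on $A$, but the conclusion still holds in the extended real sense. It is worth observing that the argument shows $Y$ is literally, and not merely almost surely, constant on $A$, which is what the preceding lemma (and ultimately Lemma~\ref{lemma::A4}) requires.
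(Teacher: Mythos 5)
Your proof is correct. The paper takes a slightly different, contrapositive route: if $Y=\E\,(X\mid\Fscr)$ took two distinct values $a_1\neq a_2$ at points $\omega_1,\omega_2\in A$, then $A\cap Y^{-1}(a_1)$ would be an $\Fscr$-measurable subset of $A$ that is nonempty (it contains $\omega_1$) and proper (it excludes $\omega_2$), contradicting the atom hypothesis. You instead construct the constant directly via the level sets $B_a=\{Y\le a\}$ and the threshold $c=\inf\{a: B_a\cap A=A\}$. Both arguments rest on the same key observation — every $\Fscr$-measurable set meets $A$ in either $\emptyset$ or $A$ — and neither uses any property of conditional expectation beyond $\Fscr$-measurability. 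The paper's version is shorter; yours exhibits the constant explicitly, covers extended-real-valued $Y$ without modification (moot here, since $X$ is integrable wherever the lemma is invoked), and makes fully explicit the properness step that the paper leaves implicit in writing $A\cap Y^{-1}(a_1)\subset A$.
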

\begin{proof} We prove that if $\E\,(X \mid \Fscr)$ is not constant over $A$, then there must be at least one nonempty proper subset of $A$ in $\Fscr$. Let $Y = \E\,(X \mid \Fscr)$, and assume that there are $\omega_1\neq \omega_2$ in $A$ such that $Y(\omega_1) = a_1 \neq a_2 = Y(\omega_2)$. Then $A \cap Y^{-1}(a_1) \in \Fscr$ since $Y$ is a conditional expectation and $\Fscr$ is a {\sigalg}. Moreover, $A \cap Y^{-1}(a_1)$ is not empty since it must contain $\{\omega_1\}$, and $A \cap Y^{-1}(a_1) \subset A$.   
\end{proof}

\subsection{Proof of Lemma~\ref{lemma::A4}}\label{subsec::proofA4}
\begin{proof} Since $n = 1$, we write $\sigma(Z)$ instead of $\Xalg^{\rm obs}$, and $I_{T \geq t}$ instead of $Y(t)$ when convenient. Recall that $\Falg_{t} = \Ealg_t \vee \sigma(Z)$ with $\Ealg_t = \sigma(\{N(s),Y(s)\}_{s \leq t})$, and note that because $N(t)$ is c{\`a}dl{\`a}g, adapted, and nondecreasing, $T$ is an $\Falg_t$-stopping time \citep[Proposition~I.1.28, p.~7]{jacod2003limit}. We start by showing that the two processes in the lemma are modifications of each other, thereafter we show that they are both a.s.~left-continuous, hence indistinguishable \citep[Lemma~3.2.10, p.~79]{cohen2015stochastic}. 

{\it Modifications}: For each $t \in (0,\tau]$, let $\Pi_{t-}^{\Ealg}$ be the collection of events 
\begin{equation}
\Pi_{t-}^{\Ealg}
= \{\{Y(s) = 1\}_{s \leq t},\{N(s) = 0\}_{s < t},
\{Y(s) = 1\}_{s\leq t}\cap \{N(s) = 0\}_{s< t},\Omega\},
\notag
\end{equation}
and set $\Pi_{t-}^{\Ealg} = \Pi_{0}^{\Ealg}$ when $t = 0$. By $\{Y(s) = 1\}_{s\leq t}\cap \{N(s) = 1\}_{s < t}$, we mean all intersections of the form $\{Y(s_1) = 1\}\cap \{N(s_2) = 1\}$ for $s_1 \leq t,s_2 < t$. The collection of events $\Pi_{t-}^{\Ealg}$ generates $\Ealg_{t-}$. We now show that $\Pi_{t-}^{\Ealg}$ is a $\pi$-system: Since $Y(t) = I_{T \geq t}$ and $\{T \geq t\} \subset \{T \geq s\}$ for all $t \geq s$, we get that for all $s_1,s_2 \leq t$ 
\begin{equation}
\{Y(s_1) = 1\} \cap \{Y(s_2) = 1\} = \{Y(s_1\vee s_2) = 1 \} \in\Pi_{t-}^{\Ealg}.
\notag
\end{equation}
By the same argument, for all time points $s_1,s_2\leq t$ and all $s_3 < t$
\begin{equation}
\begin{split}
& \{Y(s_1) = 1\} \cap (\{Y(s_2) = 1\}\cap \{N(s_3) = 0\})\\
& \qquad \qquad= \{Y(s_1 \vee s_2) = 1\}\cap \{N(s_3) = 0\} \in \Pi_{t-}^{\Ealg}.
\end{split}
\notag
\end{equation}
The step function $N(t)$ is nondecreasing, so $\{N(t) = 0\} \subset \{N(s) = 0\}$ for all $s \leq t$, therefore, for all $s_1,s_3 < t$ and all $s_2 \leq t$,   
\begin{equation}
\begin{split}
& \{N(s_1) = 1\} \cap (\{Y(s_2) = 1\}\cap \{N(s_3) = 0\})\\
& \qquad \qquad  = \{Y(s_2) = 1\}\cap \{N(s_1 \vee s_3) = 0\} \in \Pi_{t-}^{\Ealg}.
\end{split}
\notag
\end{equation}
Finally, all intersections of the type $\{Y(s_1) = 1\}\cap \{N(s_2) = 0\}$ for $s_1\leq t,s_2 < t$ are in $\Pi_{t-}^{\Ealg}$ by definition, and for any event in $A\in \Pi_{t-}^{\Ealg}$, $A \cap \Omega = A \in \Pi_{t-}^{\Ealg}$. This shows that $\Pi_{t-}^{\Ealg}$ is a $\pi$-system. For each $t \in [0,\tau]$, define the collection
\begin{equation}
\Pi_{t-}^{\Falg} = \{A \cap B \colon A \in \Pi_{t-}^{\Ealg}, B \in \sigma(Z)\}.
\notag
\end{equation}
If $F_1,F_2 \in \Pi_{t-}^{\Falg}$, then $F_1 \cap F_2 = (A_1 \cap B_2) \cap (A_2 \cap B_2) = (A_1 \cap A_2) \cap (B_1 \cap B_2)$, and $A_1 \cap A_2 \in \Pi_{t-}^{\Ealg}$ since $\Pi_{t-}^{\Ealg}$ is a $\pi$-system, and $B_1 \cap B_2 \in \sigma(Z)$ since $\sigma(Z)$ is a {\sigalg}. This shows that $\Pi_{t-}^{\Falg}$ is also a $\pi$-system.

Now, since $\Pi_{t-}^{\Ealg}$ generates $\Ealg_{t-}$, $\Pi_{t-}^{\Falg}$ generates $\Falg_{t-}$. To see this, note that for all $A \in \Pi_{t-}^{\Ealg}$, $A = A \cap \Omega \in \Pi_{t-}^{\Falg}$ because $\Omega \in \sigma(Z)$. Similarly, for any  $B\in \sigma(Z)$, $B = \Omega \cap B  \in \Pi_{t-}^{\Falg}$ because $\Omega \in \Pi_{t-}^{\Ealg}$ (which explains why we include $\Omega$ in $\Pi_{t-}^{\Ealg}$). This shows that $\sigma(Z) \subset \Pi_{t-}^{\Falg}$ and $\Pi_{t-}^{\Ealg} \subset \Pi_{t-}^{\Falg}$, therefore $\sigma(Z) \subset \sigma( \Pi_{t-}^{\Falg})$ and $\Ealg_{t-} = \sigma(\Pi_{t-}^{\Ealg}) \subset \sigma(\Pi_{t-}^{\Falg})$, which entails that $\Falg_{t-} = \Ealg_{t-} \vee \sigma(Z) \subset \sigma(\Pi_{t-}^{\Falg})$. On the other hand, $\Pi_{t-}^{\Falg} \subset \Falg_{t-}$ because $\Pi_{t-}^{\Ealg} \subset \Ealg_{t-}$, and so $\sigma( \Pi_{t-}^{\Falg}) \subset \Falg_{t-}$. We conclude that $\Falg_{t-} = \sigma( \Pi_{t-}^{\Falg})$. 

The key to the modifications argument is the following: For any event $C$ in $\Pi_{t-}^{\Falg}$, meaning that $C = A \cap B$ for $A \in \Pi_{t-}^{\Ealg}$ and $B \in \sigma(Z)$,
\begin{equation}
\{T \geq t\} \cap C = \{T \geq t\} \cap B \in \Zalg_t, 
\label{eq::A4key}
\end{equation}
with $B \in \sigma(Z)$. This is because 
\begin{equation}
\{T \geq t\}\subset \{T \geq s\},\quad \text{for all $s \leq t$},
\notag
\end{equation}
and
\begin{equation}
\{T \geq t\}\subset \{N(s) = 0\},\quad \text{for all $s < t$},
\notag
\end{equation}
which means that there is no set in $\Pi_{t-}^{\Ealg}$ contained in $\{T \geq t\} = \{Y(t) = 1\}$, and therefore $\{T \geq t\}\cap A = \{T \geq t\}$ for all $A \in \Pi_{t-}^{\Ealg}$, which implies~\eqref{eq::A4key}. 

Define $\zeta_t$ by
\begin{equation}
\zeta_t = I_{T \geq t}\E\,(\xi_t \mid \Zalg_t) + I_{T < t}\E\,(\xi_t \mid \Falg_{t-}),
\label{eq::def_zeta}
\end{equation} 
by which we mean that $\zeta_t$ is a modification of the process on the right. Let $C$ be any event in $\Pi_{t-}^{\Falg}$. Then $\{T \geq t\} \cap C \in \Zalg_t$ by~\eqref{eq::A4key}. Since $\{T  < t\} \in \Falg_{t-}$ and $C \in \Pi_{t-}^{\Falg} \subset \Falg_{t-}$, it is the case that $\{T  < t\} \cap C \in \Falg_{t-}$. Therefore, by the definition of conditional expectation, used twice, we get that for any $C \in \Pi_{t-}^{\Falg}$
\begin{equation}
\begin{split}
\E\, I_C \zeta_t & = \E\, I_{\{T \geq t\}\cap C}\E\,(\xi_t \mid \Zalg_t) + \E\,I_{\{T < t\}\cap C}\E\,(\xi_t \mid \Falg_{t-})\\
& = \E\, I_{\{T \geq t\}\cap C}\xi_t  + \E\,I_{\{T < t\}\cap C}\xi_t
= \E\, I_C \xi_t.
\end{split}
\label{eq::lemmaA4_key}
\end{equation} 
This shows that for fixed $t$, the set functions $C \mapsto \E\, I_C \zeta_t$ and $C \mapsto \E\, I_C \xi_t$ agree on the $\pi$-system $\Pi_{t-}^{\Falg}$.
By assumption $\E\,\sup_{t\leq \tau} \xi_t < \infty$ (recall also that $\xi_t \geq 0$). Therefore $\E\,I_{\Omega} \xi_t \leq \E\,\xi_t \leq \E\, \sup_{t \leq \tau}\xi_t < \infty$, and since $\zeta_t \leq \E\,(\xi_t \mid \Zalg_t) + \E\,(\xi_t \mid \Falg_{t-})$ a.s., we have $\E\,I_{\Omega} \zeta_t = \E\,\zeta_t \leq 2 \E\,(\xi_t) \leq 2\,\E\,\sup_{t \leq \tau}\xi_t < \infty$. This shows that both $C \mapsto \E\, I_C \zeta_t$ and $C \mapsto \E\, I_C \xi_t$ are finite measures on $\Falg_{t-}$. By Dynkin{'}s lemma, two finite measures that agree on a $\pi$-system agree on the {\sigalg} generated by that $\pi$-system (see, e.g.,~\citet[Lemma~1.6(a), p.~19]{williams1991probability}). Since $\Pi_{t-}^{\Falg}$ generates $\Falg_{t-}$, this means that \eqref{eq::lemmaA4_key} holds for any $C \in \Falg_{t-}$, and we can conclude that,  
\begin{equation}
\zeta_t = \E\,(\xi_t \mid \Falg_{t-}),\;\text{a.s., for all $t$}, 
\label{eq::key1}
\end{equation}
which is to say that $\zeta_t$ is a modification of $\E\,(\xi_t \mid \Falg_{t-})$.

{\it Indistinguishable}: We now show that $\zeta_t$ and $\E\,(\xi_t \mid \Falg_{t-})$ are both a.s.~left-continuous for $t \in [0,\tau]$,  as this implies that they are indistinguishable (see, e.g., \citet[p.~3]{jacod2003limit} or \citet[Lemma~3.2.10, p.~79]{cohen2015stochastic}). 

By the triangle inequality, using linearity of conditional expectation and Jensen{'}s inequality \citep[Lemma~2.4.11, p.~58]{cohen2015stochastic}, we have
\begin{equation}
\abs{ \E\,(\xi_s \mid \Falg_{s-}) - \E\,(\xi_t \mid \Falg_{t-})		}\, 
\leq \E\,(\abs{\xi_s - \xi_t}\,\mid \Falg_{s-}) + 
\abs{\E\,(\xi_t \mid \Falg_{s-}) - \E\,(\xi_t \mid \Falg_{t-})},
\label{eq::leftcont_part1}
\end{equation}
almost surely. The process $\xi_t$ is a.s.~left-continuous. Given $\eps > 0$, we can find $\delta > 0$ so that $\abs{\xi_s - \xi_t} \,\leq \eps/2$ a.s.~whenever $s \in (t-\delta,t)$, and therefore, by monotonicity of conditional expectation, $\E\,(\abs{\xi_s - \xi_t}\,\mid \Falg_{s-}) < \eps/2$ a.s.~whenever $s \in (t-\delta,t)$. This show that the first term on the right in~\eqref{eq::leftcont_part1} is a.s.~left-continuous. For the second term, keep $t$ fixed and let $\eta_s$ be a modification of $\E\,(\xi_t \mid \Falg_{s})$, thus
\begin{equation}
\eta_s = \E\,(\xi_t \mid \Falg_{s}),\; \text{a.s.~for all $s$}.
\notag
\end{equation} 
Without loss of generality, we can take $\eta_s$ to be an uniformly integrable martingale \citep[Theorem~I.1.42, p.~11]{jacod2003limit} and c{\`a}dl{\`a}g (by right-continuity of $\Falg_t$, see, e.g.~Corollary~5.1.9 in \citet{cohen2015stochastic}). The process $\eta_{s-} = \lim_{u \uparrow s}\eta_u$ is left-continuous and predictable \citep[Proposition~2.5, p.~17]{jacod2003limit}, and from Theorem~6.2.18 in \citet{cohen2015stochastic}, we have that $\eta_{S-} = \E\, (\eta_S \mid \Falg_{S-})$ for any predictable stopping time $S$. But then
\begin{equation}
\eta_{S-} = \E\, (\eta_S \mid \Falg_{S-})
= \E\, (\E\,\{\xi_t \mid \Falg_{S}\} \mid \Falg_{S-})
= \E\, (\xi_t \mid \Falg_{S-}),\quad \text{a.s.},
\label{eq::leftcont_part2}
\end{equation}
for any predictable stopping time $S$. The processes $\eta_{s-}$ and $\E\,(\xi_t \mid \Falg_{s-})$ are both predictable. Combining this fact with~\eqref{eq::leftcont_part2} gives that $\eta_{s-}$ and $\E\, (\xi_t \mid \Falg_{s-})$ are indistinguishable \citep[Proposition~I.2.18, p.~20]{jacod2003limit}. Hence, $s \mapsto\E\, (\xi_t \mid \Falg_{s-})$ must be a.s.~left-continuous since $\eta_{s-}$ is a.s.~left-continuous. This means that given $\eps > 0$ we can find $\delta > 0$ such that $\abs{\E\,(\xi_t \mid \Falg_{s-}) - \E\,(\xi_t \mid \Falg_{t-})}\,< \eps/2$ a.s.~whenever $s \in(t - \delta,t)$. In summary, given $\eps > 0$ we can find $\delta >  0$ such that 
\begin{equation}
\abs{ \E\,(\xi_s \mid \Falg_{s-}) - \E\,(\xi_t \mid \Falg_{t-})}\, < \eps/2 + \eps/2 = \eps,\quad \text{a.s.},
\notag
\end{equation}
whenever $s \in (t - \delta,t)$, which is the desired result.

Next, we show that $\zeta_t$ is a.s.~left-continuous. Recall that $\zeta_t$ is a modification of $I_{T \geq t}\E\,(\xi_t \mid \Zalg_t) + I_{T < t}\E\,(\xi_t \mid \Falg_{t-})$. The process $I_{T < t}$ is left-continuous, and we have just shown that $\E\,(\xi_t \mid \Falg_{t-})$ is left-continuous. It therefore suffices to show that $I_{T \geq t}\E\,(\xi_t \mid \Zalg_t)$ is a.s.~left-continuous. Assume without loss of generality that $Z \geq 0$. By assumption, $Z$ has density $f_Z(z)$ with support $[z_1,z_2]$, we can therefore, without loss of generality, take $[z_1,z_2] = [0,1]$. For $n = 1,2,\ldots$ and $k = 0,1,\ldots,2^n$, let $G_{n,k} = Z^{-1}[k/2^n,(k+1)/2^n)$, define the sequence of simple functions given by
\begin{equation}
Z_n = \sum_{k=0}^{2^n}\frac{k}{2^n}I_{G_{n,k}},
\notag
\end{equation} 
and define $\Zalg_t^{n} = \sigma(Y(t))\vee \sigma(Z_n)$. Let $S$ be a predictable stopping time w.r.t.~$\Falg_t$. The {\sigalg} $\Zalg_t^{n}$ evaluated in $S$ is 
\begin{equation}
\Zalg_S^{n} = \sigma(Y(S))\vee \sigma(Z_n) = \{\{T \geq S\},\{T < S\},\emptyset,\Omega\}\vee \sigma(Z_n). 
\notag
\end{equation}
For each $n$ and $k$, the set $\{T \geq S\}\cap G_{n,k}$ has no nonempty proper subset in $\Zalg_S^{n}$ (this is because the $G_{n,0},\ldots,G_{n,2^n}$ are disjoint). From Lemma~\ref{lemma::lemmaA2} in the appendix we therefore get that 
\begin{equation}
I_{T \geq S}\E\,(\xi_S \mid \Zalg_S^{n}) 
= \sum_{k=0}^{2^n} I_{\{T \geq S\}\cap G_{n,k}}\E\,(\xi_S \mid \{T \geq S\}\cap G_{n,k}),\quad \text{a.s.},
\label{eq::use_LemmaA2}
\end{equation}
which entails that the processes (in $t$) $I_{T \geq t}\E\,(\xi_t \mid \Zalg_t)$ and $\sum_{k=0}^{2^n} I_{\{T \geq t\}\cap G_{n,k}}\E\,(\xi_t \mid \{T \geq t\}\cap G_{n,k})$ are indistinguishable \citep[Proposition~I.2.18, p.~20]{jacod2003limit}. We now show that the process $\sum_{k=0}^{2^n} I_{\{T \geq t\}\cap G_{n,k}}\E\,(\xi_t \mid \{T \geq t\}\cap G_{n,k})$ is a.s.~left-continuous. Almost all the sample paths $t \mapsto I_{\{T \geq t\}\cap G_{n,k}}$ are left-continuous, and, by the definition of conditional expectation,
\begin{equation}
\E\,(\xi_t \mid \{T \geq t\}\cap G_{n,k})
= \frac{\E\,\xi_t I_{T \geq t}I_{G_{n,k}}}{\pr(\{T \geq t\}\cap G_{n,k})}
= \pr(G_{n,k}) \frac{	\E\,\xi_t I_{T \geq t}I_{G_{n,k}}		}{	 \E\,I_{T \geq t}I_{G_{n,k}}	}.
\notag
\end{equation}
Since $\xi_t$ and $I_{T \geq t}$ are a.s.~left-continuous, $\lim_{s \uparrow t}\xi_s I_{\{T \geq s\}}I_{G_{n,k}} = \xi_t I_{\{T \geq t\}}I_{G_{n,k}}$ a.s. Clearly $\abs{\xi_t I_{\{T \geq t\}}I_{G_{n,k}}} \, \leq \xi_{t}$, and by assumption $\E\,\abs{\xi_t}\,< \infty$, so by the dominated convergence theorem, $\lim_{s \uparrow t}\E\,\xi_s I_{\{T \geq s\}}I_{G_{n,k}} = \E\,\xi_t I_{\{T \geq t\}}I_{G_{n,k}}$. Essentially the same argument shows that $\lim_{s \uparrow t}  \E\,I_{T \geq s}I_{G_{n,k}} =  \E\,I_{T \geq t}I_{G_{n,k}}$. We then have that $I_{T \geq t}\E\,(\xi_t \mid \Zalg_t)$ is indistinguishable from an a.s.~left-continuous process, hence it is itself a.s.~left-continuous.

The sequence of sets $\{G_{n,0},\ldots,G_{n,2^n}\}_{n \geq 1}$ is nested in the sense that for any $n\geq 1$, a set $G_{n,k}$ can be written as a union, $G_{n+1,j}\cap G_{n+1,j+1}$ say. This nestedness property entails that $\sigma(Z_n) \subset \sigma(Z_{n+1})$ for all $n$, and therefore $\Zalg_t^{n} \subset \Zalg_t^{n+1}$ for all $n$. This means that $(\Zalg_t^{n})_{n \geq	1}$ is a filtration, and, since for each $t$ the random variable $\xi_t$ is integrable, Theorem~7.23 in \citet[p.~132]{kallenberg2002foundations} gives that as $n \to \infty$,
\begin{equation}
\E\,(\xi_t \mid \Zalg_t^{n}) \to \E\,(\xi_t \mid \Zalg_t^{\infty}),\;\text{a.s.}
\notag
\end{equation} 
where $\Zalg_t^{\infty} = \vee_{n \geq 1}\Zalg_t^n$. The dyadic rationals are dense in $[0,1]$, which means that the sets $G_{n,k}$ for $k = 0,1,\ldots,2^{n},n = 1,2,\ldots$ generate $\sigma(Z)$. Since $\sigma(Z)$ is the smallest {\sigalg} that contains all the $G_{n,k}$, $\sigma(Z) \subset \vee_{n \geq 1}\sigma(Z_n)$. For each $n$, $\sigma(Z_n) \subset \sigma(Z)$ and $\sigma(Z_n) \subset \vee_{n \geq 1}\sigma(Z_n)$, but $\vee_{n \geq 1}\sigma(Z_n)$ is the smallest {\sigalg} that contains all the $\sigma(Z_n)$, thus $\vee_{n \geq 1}\sigma(Z_n) \subset \sigma(Z)$, consequently $\vee_{n \geq 1}\sigma(Z_n) = \sigma(Z)$, and we conclude that $\Zalg_t^{\infty} = \Zalg_t = \sigma(Y(t))\vee \sigma(Z)$.

We now have that $\{\E\,(\xi_t \mid \Zalg_t^{n})\}_{n \geq 1}$ is a sequence of a.s.~left-continuous processes, and that $\E\,(\xi_t \mid \Zalg_t^{n}) \to \E\,(\xi_t \mid \Zalg_t)$ a.s.~for each $t$ as $n \to \infty$. In order to conclude that $I_{T \geq t}\E\,(\xi_t \mid \Zalg_t)$ is a.s.~left-continuous, we must show that the convergence of $I_{T \geq t}\E\,(\xi_t \mid \Zalg_t^{n})$ to $I_{T \geq t}\E\,(\xi_t \mid \Zalg_t)$ is uniform in $t$. Write 
\begin{equation}
\mu_{n,k}(t) = \E\,( \xi_t \mid \{T \geq t\}\cap G_{n,k}), \;\text{for $k = 0,\ldots,2^n$ and $n \geq 1$},
\notag
\end{equation}
so that $I_{T \geq t}\E\,(\xi_t \mid \Zalg_t^n) = I_{T \geq t}\sum_{k=0}^{2^n} I_{G_{n,k}} \mu_{n,k}(t)$ a.s. Suppose that $G_{n+1,j}$ and $G_{n+1,j+1}$ are such that $G_{n,k} = G_{n+1,j}\cup G_{n+1,j+1}$. Then, on the event $G_{n+1,j}$, 
\begin{equation}
\begin{split}
& \abs{ I_{T \geq t}\E\,(\xi_t \mid \Zalg_t^{n}) - I_{T \geq t}\E\,(\xi_t \mid \Zalg_t^{n+1})}\\
& \qquad = I_{T \geq t} \abs{I_{G_{n,k}}\mu_{n,k}(t) - I_{G_{n+1,j}}\mu_{n+1,j}(t) - I_{G_{n+1,j+1}}\mu_{n+1,j+1}(t) } \\
& \qquad = I_{T \geq t} \abs{I_{G_{n+1,j}}\{\mu_{n,k}(t) - \mu_{n+1,j}(t)\} 
+ I_{G_{n+1,j+1}}\{\mu_{n,k}(t) - \mu_{n+1,j+1}(t)\}  }\\
& \qquad 
\leq  I_{T \geq t}I_{G_{n+1,j}} \abs{\mu_{n,k}(t) - \mu_{n+1,j}(t)}\,
+ I_{T \geq t}I_{G_{n+1,j+1}} \abs{\mu_{n,k}(t) - \mu_{n+1,j+1}(t)}\\
& \qquad  = I_{T \geq t}I_{G_{n+1,j}} \abs{\mu_{n,k}(t) - \mu_{n+1,j}(t)}.
\end{split}
\notag
\end{equation}
The difference $\abs{\mu_{n,k}(t) - \mu_{n+1,j}(t)}$ is
\begin{equation}
\begin{split}
& \abs{\mu_{n,k}(t) - \mu_{n+1,j}(t)}  
= \abs{ \E\,(\xi_t \mid \{T \geq t\}\cap G_{n,k}) - \E\,(\xi_t \mid \{T \geq t\}\cap G_{n+1,j})}\\
& \quad 
= \abs{ \frac{\E\,\xi_t I_{\{T \geq t\}}I_{G_{n,k}}}{\pr(\{T \geq t\}\cap G_{n,k})}		 	- \frac{\E\,\xi_t I_{\{T \geq t\}}I_{G_{n+1,j}}}{\pr(\{T \geq t\}\cap G_{n+1,j})}}\\
& \quad 
\leq \frac{\E\,\abs{\xi_t \big( I_{\{T \geq t\}\cap G_{n,k}}\pr(\{T \geq t\}\cap G_{n+1,j}) - I_{\{T \geq t\}\cap G_{n+1,j}}\pr(\{T \geq t\}\cap G_{n,k})\big)  }   }{\pr(\{T \geq t\}\cap G_{n,k})\pr(\{T \geq t\}\cap G_{n+1,j})}\\
& \quad 
\leq \frac{\norm{\xi_t}_2\, \norm{I_{\{T \geq t\}\cap G_{n,k}}\pr(\{T \geq t\}\cap G_{n+1,j}) - I_{\{T \geq t\}\cap G_{n+1,j}}\pr(\{T \geq t\}\cap G_{n,k})}_2     }{\pr(\{T \geq t\}\cap G_{n,k})\pr(\{T \geq t\}\cap G_{n+1,j})},
\end{split} 
\notag
\end{equation} 
using H{\"o}lder{'}s inequality, and writing $\norm{X}_{2}\,= (\E\,X^2)^{1/2}$. Here, by expanding the square on the last line above and taking expectations,
\begin{equation}
\begin{split}
& \norm{I_{\{T \geq t\}\cap G_{n,k}}\pr(\{T \geq t\}\cap G_{n+1,j}) - I_{\{T \geq t\}\cap G_{n+1,j}}\pr(\{T \geq t\}\cap G_{n,k})}_2^2  \\
& \quad = \pr(\{T \geq t\}\cap G_{n,k})
\pr(\{T \geq t\}\cap G_{n+1,j})\pr(\{T \geq t\}\cap G_{n+1,j+1}).
\end{split} 
\notag
\end{equation} 
Inserting the square root of this product in the expression for $\abs{\mu_{n,k}(t) - \mu_{n+1,j}(t)} $ above, and using that $\pr(\{T \geq t\}\cap G_{n+1,j+1}) \leq \pr(\{T \geq t\}\cap G_{n,k})$, 
we get 
\begin{equation}
I_{\{T \geq t \}\cap G_{n+1,j}}\abs{\mu_{n,k}(t) - \mu_{n+1,j}(t)} \, \leq 
\frac{\norm{\xi_t}_2 I_{\{T \geq t \}\cap G_{n+1,j}}}{
\pr(\{T \geq t\}\cap G_{n+1,j})^{1/2}}.
\label{eq::bounding_it1}
\end{equation}
By assumption, $z\mapsto \pr(T \geq \tau \mid Z = z)$ is bounded below on $[0,1]$, say $\pr( T \geq \tau \mid Z = z) \geq c > 0$, and, also by assumption, the density $f_Z(z)$ of $Z$ is bounded below on $[0,1]$. Let $f_{Z,\min} > 0$ be the lower bound of $f_Z(z)$, and denote $F_Z(z)$ the cumulative distribution function of $Z$. Using these bounds and the mean value theorem, we have
\begin{equation}
\begin{split}
\pr(\{T \geq t\}\cap G_{m,j}) & \geq \pr(\{T \geq \tau\}\cap G_{m,j})
 = \int_{j/2^m}^{(j+1)/2^m} \pr( T \geq \tau \mid Z = z) f_Z(z)\,\dd z\\
& \geq c \{F((j+1)/2^m) - F(j/2^m)\}
= cf_Z(d_j)\frac{1}{2^m}
\geq cf_{Z,\min}\, \frac{1}{2^m},
\end{split}
\notag
\end{equation}
where $d_j$ is some point in $G_{m,j}$. Inserting this lower bound in place of the denominator on the right hand side in~\eqref{eq::bounding_it1}, we get
\begin{equation}
\frac{\norm{\xi_t}_2 I_{\{T \geq t \}\cap G_{n+1,j}}}{
\pr(\{T \geq t\}\cap G_{n+1,j})^{1/2}} 
\leq \frac{\norm{\xi_t}_2 I_{\{T \geq t \}\cap G_{n+1,j}}}{
cf_{Z,\min} \,2^{-(n+1)/2}} 
\leq \frac{\norm{\xi_t}_2 I_{G_{n+1,j}}}{
cf_{Z,\min} \,2^{-(n+1)/2}}. 
\notag
\end{equation}
This gives that 
\begin{equation}
\sup_{t\in [0,\tau]}\,\abs{ I_{T \geq t}\E\,(\xi_t \mid \Zalg_t^{n}) - I_{T \geq t}\E\,(\xi_t \mid \Zalg_t^{n+1})}  
\,\leq \sup_{t\in [0,\tau]}\norm{\xi_t}_2\,\frac{ I_{G_{n+1,j}}}{
c f_{Z,\min} \,2^{-(n+1)/2}},\quad \text{a.s.},
\notag
\end{equation}
where, we recall, $\norm{\xi_t}_2 = (\E\,\xi_t^2)^{1/2}$. For any $\eps > 0$, by Markov{'}s inequality  
\begin{equation}
\begin{split}
& \pr\big( \sup_{t\in [0,\tau]}\,\abs{ I_{T \geq t}\E\,(\xi_t \mid \Zalg_t^{n}) - I_{T \geq t}\E\,(\xi_t \mid \Zalg_t^{n+1})}\, > \eps  \big) \\
& \qquad \leq \pr\big(\sup_{t\in [0,\tau]}\norm{\xi_t}_2\,\frac{ I_{G_{n+1,j}}}{
cf_{Z,\min} \,2^{-(n+1)/2}} \geq \eps\big) \\
& \qquad \leq \frac{1}{\eps} \sup_{t\in [0,\tau]}\norm{\xi_t}_2 
\frac{ \pr(G_{n+1,j})}{
cf_{Z,\min} \,2^{-(n+1)/2}} 
\leq \frac{1}{\eps}\frac{f_{Z,\max} \,2^{-(n+1)/2}}{cf_{Z,\min}},
\end{split}
\notag
\end{equation} 
where $f_{Z,\max} = \sup_{z\in [0,1]} f_Z(z) < \infty$, since $f_{Z}(z)$ is continuous. Thus, for any $\eps > 0$,
\begin{equation}
\sum_{n = 1}^{\infty}\pr\big( \sup_{t\in [0,\tau]}\,\abs{ I_{T \geq t}\E\,(\xi_t \mid \Zalg_t^{n}) - I_{T \geq t}\E\,(\xi_t \mid \Zalg_t^{n+1})}\, > \eps  \big)
\lesssim \sum_{n = 1}^{\infty} \frac{1}{2^{n/2}} = 1 + \sqrt{2},
\notag
\end{equation}
so by the Borel--Cantelli lemma, the Cauchy criterion (see e.g., \citet[Theorem~7.8, p.~147]{rudin1976principles}) holds with probability one. It then follows from Lemma~5.5.6 in \citet[p.~121]{cohen2015stochastic} that the limit $I_{T \geq t}\E\,(\xi_t \mid \Zalg_t)$ is a.s.~left-continuous. 
\end{proof}

\begin{remark} A theorem in statistics that also uses the martingale convergence theorem (see \citet[Theorem~7.23, p.~132]{kallenberg2002foundations}) that is used in the proof of Lemma~\ref{lemma::A4}, is Doob{'}s consistency theorem for posterior distributions in Bayesian nonparametric statistics. See \citet{miller2018detailed} for a detailed treatment.
\end{remark}

\subsection{Sequences involving the at-risk processes}\label{app::uniform_conv_lemma}
When working with versions of the Aalen additive hazards model one often encounters averages involving integrals of the form $\int_0^t n^{-1}\sum_{i=1}^n Q(Z_i)Y_i(s)\,\dd s$, for covariates $Z_i$, at-risk processes $Y_i(t) = I\{T_i \geq t\}$, and a function $Q$ that may be real-valued, a vector, or a matrix, and one needs to ascertain whether such integrals converge in probability uniformly in $t$. If the pairs $(T_i,Z_i),\,i = 1,\ldots,n$ are i.i.d., then the law of large numbers yields $n^{-1}\sum_{i=1}^n Q(Z_i)Y_i(t) \to_p \E\, Q(Z_1)Y_1(t)$ for each $t$. In order to conclude that 
\begin{equation}
\int_0^t n^{-1}\sum_{i=1}^n Q(Z_i)Y_i(s)\,\dd s \overset{p}\to\int_0^t  \E\, \{Q(Z_1)Y_1(s)\}\,\dd s, 
\notag
\end{equation}
we need something stronger than pointwise convergence in probability of the integrand on the left to the integrand on the right. A sufficient condition is that $\sup_{s\in[0,t]}\norm{n^{-1}\sum_{i=1}^n Q(Z_i)Y_i(s) - \E\, \{Q(Z_1)Y_1(s)\}}\to_p 0$. Other conditions than convergence in probability of the integrands uniformly in $t$ may be imposed, see for example~\citet[Prop.~II.5.2, p.~85]{ABGK93} and \citet[Lemma~A3]{HjortPollard93}. In the next lemma, used repeatedly in the present paper, we establish conditions under which the convergence in probability of $n^{-1}\sum_{i=1}^n Q(Z_i)Y_i(t)$ to $\E\,\{Q(Z_1)Y_1(t)\}$ is uniform in $t$. Not surprisingly, parts of the proofs is rather similar to the proof Glivenko--Cantelli theorem, see for example \citet[Theorem~19.1, p.~266]{vandervaart1998asymptotic} or \citet[p.~23]{ferguson1996largesample}. The full force of the almost sure convergence in the lemma below, rather than mere convergence in probability, is, in the present paper, only of importance as regards $\sup_{t \in [0,\tau]}\abs{J_{n,h}(t) - 1}\,\to 0$ a.s. See for example the proof of Corollary~\ref{corollary::Falg_Thetaconv} of the main text.

\begin{lemma}\label{lemma::uniform_convergence_lemma} Let $(T_i,\delta_i,Z_i,C_i),\,i = 1,\ldots,n$ be independent replicates of $(T,\delta,Z,C)$, for $T = \widetilde{T}\wedge C$ on $[0,\tau]$, with $\widetilde{T}$ the uncensored lifetime; indicators for non-censoring $\delta = I\{\widetilde{T} \leq C\}$, censoring times $C$, and a covariate $Z$. Let $Y_i(t) = I\{T_i \geq t\}$ for $i = 1,\ldots,n$ and $Y(t) I\{T \geq t\}$ be the at-risk indicators. Assume that $C$ stems from an absolutely continuous distribution with distribution function $H$ such that $H(\tau) < 1$; and that $\widetilde{T} \indep C \mid Z$. 
\begin{itemize}\itemsep-0.2em
\item[{\rm(a)}] Set $F(t \mid z) = \pr(\widetilde{T} \leq t \mid Z = z)$, suppose that $F(\tau\mid z) < 1$ for all $z \in (-\kappa_0,-\kappa_0)$, that $F(t \mid z)$ is continuous in $t \in [0,\tau]$ for all $z$, and that $F(t \mid z)$ continuous for $z \in (-\kappa_0,\kappa_0)$ for all $t \in [0,\tau]$.
\item[{\rm(b)}] For some $\kappa > 0$, let $q(u)$ be a nonnegative function that {\rm (i)} is zero outside of $[0,\kappa]$; {\rm (ii)} is continuous on $(0,\kappa)$, and {\rm (iii)} is such that $\int_{0}^{\kappa} q(u)^2\,\dd u < \infty$.
\item[{\rm(c)}] Let $\xi(t,z)$ be a non-negative function that is continuous in $t \in [0,\tau]$ for all $z$, and continuous for $z \in (-\kappa_0,\kappa_0)$ for all $t \in [0,\tau]$.
\item[{\rm(d)}] The covariate $Z$ has a density $f_Z(z)$ that is continuous on $(-\kappa_0,\kappa_0)$. 
\end{itemize}
Define $q_h(z) = q(z/h)/h$ for $h > 0$, and denote $G(t)=\lim_{h \to 0}\E\, q_h(Z)\xi(t,Z)Y(t)$. Existence of this limit is part of the claim of the lemma. Then 
\begin{equation}
\sup_{t\in [0,\tau]}\abs{n^{-1}\sum_{i=1}^n q_h(Z_i)\xi(t,Z_i)Y_i(t) - G(t)}\to 0,
\label{eq::uniform_convergence_lemma}
\end{equation}    
almost surely, as $n h \to \infty$ and $h \to 0$.
\end{lemma}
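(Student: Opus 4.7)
I would write
\begin{equation}
A_{n}(t) := n^{-1}\sum_{i=1}^{n} q_{h}(Z_{i})\xi(t,Z_{i})Y_{i}(t) - G(t) = V_{n}(t) + D_{n}(t),
\notag
\end{equation}
with deterministic bias $D_{n}(t) = \E\,\{q_{h}(Z)\xi(t,Z)Y(t)\} - G(t)$ and centered empirical part $V_{n}(t) = A_{n}(t) - \E\,A_{n}(t)$, and control each uniformly in $t \in [0,\tau]$.

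For $D_{n}(t)$, the assumption $\widetilde{T} \indep C \mid Z$ together with continuity of $H$ gives $y(t,z) := \pr(T \geq t \mid Z = z) = \{1 - H(t)\}\{1 - F(t \mid z)\}$, and the substitution $u = z/h$ yields
\begin{equation}
\E\,\{q_{h}(Z)\xi(t,Z)Y(t)\} = \int_{0}^{\kappa} q(u)\,\xi(t,hu)\,y(t,hu)\,f_{Z}(hu)\,\dd u.
\notag
\end{equation}
For $h\kappa < \kappa_{0}$, the integrand is bounded by $M q(u)$, with $M$ coming from continuity of $\xi,F(t\mid\cdot),f_{Z}$ on a compact rectangle, and $\int_{0}^{\kappa}q(u)\,\dd u < \infty$ by Cauchy--Schwarz from Assumption~(b)(iii). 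Dominated convergence then gives the existence of the limit and the formula $G(t) = \xi(t,0)\,y(t,0)\,f_{Z}(0)\int_{0}^{\kappa} q(u)\,\dd u$, and joint uniform continuity of $\xi\cdot y\cdot f_{Z}$ on $[0,\tau]\times[0,h_{0}\kappa]$ lifts pointwise convergence to $\sup_{t\in[0,\tau]}\abs{D_{n}(t)} \to 0$.

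For the stochastic piece $V_{n}(t)$, I would mimic the classical Glivenko--Cantelli argument. Given $\eps > 0$, pick a partition $0 = t_{0} < \cdots < t_{K} = \tau$ fine enough that both $G(t)$ and $t\mapsto\E\,\{q_{h}(Z)\xi(t,Z)Y(t)\}$ oscillate by at most $\eps$ on every $[t_{k},t_{k+1}]$, uniformly in small $h$. The monotonicity of $Y_{i}(t) = I\{T_{i}\geq t\}$ combined with continuity of $\xi(\cdot,Z_{i})$ in $t$ then sandwiches $V_{n}(t)$ on $[t_{k},t_{k+1}]$ between $V_{n}(t_{k})$ and $V_{n}(t_{k+1})$ modulo an $O(\eps)$ deterministic error controlled by the previous step. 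At each of the finitely many grid points $t_{k}$, the centered summands $q_{h}(Z_{i})\xi(t_{k},Z_{i})Y_{i}(t_{k}) - \E\,\{\cdot\}$ form a bounded triangular array of variance $O(1/h)$, and Bernstein{'}s inequality gives $\pr(\abs{V_{n}(t_{k})} \geq \eps) \lesssim \exp(-c n h \eps)$; a Borel--Cantelli step, together with the monotonicity sandwich to fill in the intermediate indices, yields $V_{n}(t_{k}) \to 0$ almost surely, and a union bound over the finite grid completes the proof.

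The principal obstacle is obtaining \emph{almost sure} uniformity in the triangular-array regime $h = h_{n} \to 0$: Chebyshev effortlessly gives in-probability uniform convergence under $nh \to \infty$, but a.s.~convergence requires exponential Bernstein tails and summability of $\exp(-c n h_{n}\eps)$, which is delicate under only $nh \to \infty$ but routine for the polynomially shrinking bandwidths $h_{n} = cn^{-\gamma}$, $\gamma \in (0,1)$, actually used throughout the paper. A secondary issue is that $q$ is only square-integrable on $[0,\kappa]$ rather than bounded, so at the level of generality of the lemma one must use the $L^{2}$-envelope in the Bernstein step; this causes no difficulty in applications because the kernels appearing in the paper satisfy the stronger Assumption~\ref{assumption::kernel}.
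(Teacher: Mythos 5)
Your decomposition and your treatment of the deterministic part coincide with the paper's: the same change of variable $u=z/h$, domination by a constant times $q(u)$ (integrable by Cauchy--Schwarz from (b)(iii)), dominated convergence for the limit $G(t)$, and a uniform-continuity argument to make $\sup_{t}\abs{\E\,q_h(Z)\xi(t,Z)Y(t)-G(t)}\to 0$ (the paper does this last step via a Cauchy-criterion bound that is independent of $t$, which is essentially the equicontinuity fact you invoke). Your reduction of the stochastic part to a finite grid via monotonicity of $Y_i(t)$ plus uniform continuity of $\xi$ and of the limit is also the paper's Glivenko--Cantelli-style argument.

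Where you genuinely diverge is at the grid points, and there your engine does not run under the lemma's stated hypotheses. You propose Bernstein plus Borel--Cantelli, which (i) needs summability of $\exp(-cnh_n\eps)$ --- false in general when only $nh_n\to\infty$ is assumed (take $nh_n=\log\log n$) --- and (ii) needs bounded summands, whereas $q_h(Z_i)$ is only controlled in $L^2$ under assumption (b)(iii), so the classical Bernstein bound is not available without a truncation step you have not supplied. You flag both problems yourself, but flagging them does not close them. The paper avoids both entirely: it fixes $h$ (small enough that $h\kappa<\kappa_0$), applies the ordinary strong law of large numbers at each grid point --- which only requires $\E\,\abs{q_h(Z)\xi(t,Z)Y(t)}<\infty$, already established --- to get $\sup_t\abs{G_{n,h}(t)-G_h(t)}\to\eps$ a.s.\ as $n\to\infty$, and then sends $h\to 0$ using the separately established uniform convergence $\sup_t\abs{G_h(t)-G(t)}\to 0$. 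In other words, the paper handles the double limit as an iterated limit and never needs exponential tail bounds or bounded kernels. If you want to keep your Bernstein route, you must either strengthen the hypotheses (bounded $q$, polynomial bandwidths) or add a truncation of $q_h(Z_i)$ at level $o((nh)^{1/2})$ with a separate moment bound for the truncated-away part; otherwise, replace the grid-point step by the fixed-$h$ SLLN argument.
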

\begin{proof} Denote $G_{n,h}(t) = n^{-1}\sum_{i=1}^n q_h(Z_i)Y_i(t)$, $G_h(t) = \E\, G_{n,h}(t)$, and $G(t)= \lim_{h\to 0}G_h(t)$. Let $\bar{h}>0$ be a number such that $\bar{h}\kappa < \kappa_0$. If we restrict $f_Z(z)$ and $\xi(t,z)$ to $z \in [0,\bar{h}\kappa] \subset [0,\kappa_0)$, then since $[0,\bar{h}\kappa]$ and $[0,\tau]\times[0,\bar{h}\kappa]$ are compact sets, assumptions (c) and (d) entail that the restrictions of $f_Z(z)$ and $\xi(t,z)$ are bounded above. We denote these bounds by $f_{Z,\max}$ and $\xi_{\max}$, respectively.  

By the i.i.d.~assumption and the conditional independence $\widetilde{T} \indep C \mid Z$, the function $G_h(t)$ is
\begin{equation}
\begin{split}
G_h(t) & = \E\, G_{n,h}(t)  = \E\, q_h(Z)\xi(t,Z)Y(t) = \E\, [q_h(Z)\xi(t,Z) \E\, (Y(t) \mid Z)]\\
& = \{1 - H(t)\} \, \E\, [q_h(Z)\xi(t,Z)\{1 - F(t\mid Z) \} ].
\end{split}
\notag
\end{equation}
By a change of variable, 
\begin{equation}
\begin{split}
\E\, [q_h(Z)\xi(t,Z)\{1 - F(t\mid Z)  ] & = \frac{1}{h} \int_{0}^{h \kappa} q(z/h) \xi(t,z)\{1 - F(t \mid z)\}f_Z(z) \,\dd z\\
& = \int_{0}^{\kappa} q(u)\xi(t,hu) \{1 - F(t \mid hu)\}f_Z(hu) \,\dd u.
\end{split}
\notag
\end{equation}
When $h$ is so $h\kappa < \bar{h}\kappa$, then 
\begin{equation}
\abs{q(u)\xi(t,hu) \{1 - F(t \mid hu)\}f_Z(hu)} \,\leq \abs{q(u)}\,\xi_{\max} f_{Z,\max},
\notag
\end{equation} 
and by H{\"o}lder{'}s inequality 
\begin{equation}
\int_{0}^{\kappa} \abs{q(u)}\, \dd u \leq \kappa^{1/2} (\int_{0}^{\kappa} q(u)^2\,\dd u)^{1/2} < \infty, 
\label{eq::holder1}
\end{equation}
by Assumption (b)(iii). Thus, by dominated convergence,  
\begin{equation}
G_{h}(t) = G(t) + o(1), \;\text{for each $t$ as $h \to \infty$},
\notag
\end{equation}
with
\begin{equation}
G(t) = \{1 - H(t)\}\xi(t,0) \{1 - F(t \mid z_0)\}f_Z(0) \int_{0}^{\kappa} q(u) \dd u,
\notag
\end{equation}
Next, we show that the convergence $G_h(t) \to G(t)$ as $h \to 0$ is uniform over $[0,\tau]$. Let $h_n$ be a nonincreasing sequence of positive numbers such that $h_n \to 0$ as $n \to \infty$. Assume that $n < m$, so that $h_m/h_n < 1$, and that $h_m\kappa \leq \bar{h}\kappa < \kappa_0$, then
\begin{equation} 
\begin{split}
\abs{G_{h_{n}}(t) - G_{h_{m}}(t)} & \leq  \{1 - H(t)\} \,\E\,[ \abs{ q_{h_n}(Z) - q_{h_m}(Z)}\,\xi(t,Z)\{1 - F(t\mid Z) \} ]\\
& \leq  \xi_{\max} \int_{0}^{\infty} \abs{ h_n^{-1}q(z/h_n) - h_m^{-1}q(z/h_m)}\,f_Z(z)\,\dd z\\
& \leq  \xi_{\max} f_{Z,\max}  \int_{0}^{\infty} \abs{ (h_m/h_n) q(u h_m/h_n) - q(u)}\,\dd u.
\end{split} %
\label{eq::Cauchy_key}
\end{equation}
Since $h_m/h_n \to 1$ as $n,m \to \infty$ and $q(u)$ is continuous on $(0,\kappa)$, 
\begin{equation}
\abs{ (h_m/h_n) q(u h_m/h_n) - q(u)}\,\to 0,\; \text{as $n,m \to \infty$}.
\notag
\end{equation}
Moreover, $\abs{ (h_m/h_n) q(u h_m/h_n) - q(u)} \, \leq \abs{  q(u h_m/h_n)}\, + \abs{q(u)}$ and by a change of variable, $\int_{0}^{\infty} (\abs{ q(u h_m/h_n)}\, + \abs{q(u)})\,\dd u = \{(h_n/h_m) + 1\} \int_{0}^{\kappa}\abs{q(u)}\,\dd u < \infty$. Then $\int_{0}^{\infty} \abs{ (h_m/h_n) q(u h_m/h_n) - q(u)}\,\dd u \to 0$ as $n,m \to \infty$ by dominated convergence. Since the right hand side of~\eqref{eq::Cauchy_key} is independent of $t$ and converges to zero, this means that for any $\eps > 0$ we can find an $n_0 \geq 1$ such that $\abs{G_{h_{n}}(t) - G_{h_{m}}(t)} \, < \eps$ whenever $n,m \geq n_0$. This shows that the sequence $h \to G_h(t)$ satisfies the Cauchy criterion for uniform convergence (see, e.g.,~\citet[Theorem~7.8, p.~147]{rudin1976principles}), which entails that 
\begin{equation}
\sup_{t \in [0,\tau]} \abs{G_h(t) - G(t)}\, \to 0,\;\text{as $h \to 0$}.
\label{eq::uniformity_step1}
\end{equation}
We now turn to the convergence of $G_{n,h}(t)$ to $G_h(t)$ as $n \to \infty$. When $h\kappa \leq \bar{h}\kappa < \kappa_0$, and using the inequality in~\eqref{eq::holder1}
\begin{equation}
\begin{split}
\E\, \abs{q_h(Z)}\, & = h^{-1}\int_{h\kappa_a}^{h\kappa_b}\abs{q(z/h)}\,f_{Z}(z)\,\dd z\\ 
& = \int_{\kappa_a}^{\kappa_b}\abs{q(u)}\,f_{Z}(uh)\,\dd u \leq f_{Z,\max} \int_{\kappa_a}^{\kappa_b}\abs{q(u)}\,\dd u < \infty, 
\end{split}
\notag
\end{equation}
therefore $\E\, \abs{q_{h}(Z)\xi(t,Z)Y(t)}\, \leq \xi_{\max}f_{Z,\max}\,\E\, \abs{q_{h}(Z)}\, < \infty$, and it follows from the strong law of large numbers that, for $0 < h\leq \bar{h}$ fixed, $G_{n,h}(t) \to G_h(t)$ almost surely as $n \to \infty$ for every $t \in [0,\tau]$. It remains to show that this convergence is uniform in $t$. Write $S(t\mid z) = 1 - F(t\mid z)$. Assume that $h\kappa \leq \bar{h}\kappa < \kappa_0$, so that 
\begin{equation}
q_h(Z) =  q_h(Z) I_{[0,\bar{h}\kappa]}(Z).
\notag
\end{equation} 
Then
\begin{equation}
\begin{split}
& \abs{G_h(t) - G_h(s)} \,
 = \abs{ \{1 - H(t)\}\,\E\,q_h(Z) \xi(t,Z) S(t\mid Z)\\ 
& \qquad\qquad  - \{1 - H(s)\}\,\E\,q_h(Z) \xi(s,Z) S(s\mid Z) } \\
&\quad = 	\abs{ \{H(s) - H(t)\}\E\,q_h(Z) \xi(t,Z) S(t\mid Z)\\ 
& \qquad+ \{1 - H(s)\}\,\E\,\{ q_h(Z)I_{[0,\bar{h}\kappa]}(Z) [\xi(t,Z) S(t\mid Z) -  \xi(s,Z) S(s\mid Z)]\} }\\	
& \quad\leq 	\abs{ H(s) - H(t)}\, \E\,\abs{q_h(Z) \xi(t,Z) }\\ 
& \qquad\qquad +  \E\,\abs{ q_h(Z) }\,\abs{I_{[0,\bar{h}\kappa]}[ \xi(t,Z) S(t\mid Z) - \xi(s,Z) S(s\mid Z)]}\\
& \quad\leq 	\abs{ H(s) - H(t)}\, \E\,\abs{q_h(Z) \xi(t,Z) }\\ 
& \qquad\qquad +  \norm{ q_h(Z) }_2\,\norm{ I_{[0,\bar{h}\kappa]}\{\xi(t,Z) S(t\mid Z) - \xi(s,Z) S(s\mid Z)\}}_2\\
\end{split}
\notag
\end{equation} 
We now look at each of the two terms on the right of this expression. By the triangle inequality and since $S(t\mid Z)  \, < 1$, 
\begin{equation}
\begin{split}
& \norm{I_{[0,\bar{h}\kappa]}(Z)\{\xi(t,Z)S(t\mid Z) - \xi(s,Z)S(s\mid Z)\}}_2\\
& \quad = \norm{I_{[0,\bar{h}\kappa]}(Z)\{[\xi(t,Z) - \xi(s,Z)]S(t\mid Z) + \xi(s,Z)[S(t\mid Z) -S(s\mid Z)]\}}_2\\
& \quad \leq 
\norm{I_{[0,\bar{h}\kappa]}(Z)\{\xi(t,Z) - \xi(s,Z)\}}_2 + \norm{I_{[0,\bar{h}\kappa]}(Z)\xi(s,Z)\{F(s\mid Z) -F(t\mid Z)\}}_2.
\end{split}
\notag
\end{equation} 
By assumption (c) $\xi(t,z)$ is continuous on the compact set $[0,\tau]\times [0,\bar{h}\kappa]$, which implies that $\xi(s,z)$ is uniformly continuous on this set, and so, for any $\eps > 0$ we can find $\delta > 0$ such that  
\begin{equation}
\begin{split}
\norm{I_{[\kappa_a,\kappa_b]}(Z)\{\xi(t,Z) - \xi(s,Z)\}}_2^2
& = \E\,I_{[\kappa_a,\kappa_b]}(Z)[\xi(t,Z) - \xi(s,Z)]^2 < \frac{\eps^2}{9}, 
\end{split}
\notag
\end{equation} 
whenever $\abs{t - s} \, < \delta$. By the same argument, using that by assumption (a), $F(t\mid z)$ is uniformly continuous on $[0,\tau]\times [0,\bar{h}\kappa]$, we can find $\delta > 0$ such that $\norm{I_{[0,\bar{h}\kappa]}(Z)\{F(s\mid Z) -F(t\mid Z)\}}_2 < \eps/(3\xi_{\max})$ whenever $\abs{t - s} \, < \delta$, and so, since $\abs{\xi(s,z) }\, \leq \xi_{\max}$ for $(s,t) \in [0,\tau]\times [0,\bar{h}\kappa]$,
\begin{equation}
\begin{split}
& \norm{I_{[0,\bar{h}\kappa]}(Z)\xi(s,Z)\{F(s\mid Z) -F(t\mid Z)\}}_2^2\\
& \qquad \qquad \leq \xi_{\max}^2\,\norm{I_{[0,\bar{h}\kappa]}(Z)\{F(s\mid Z) -F(t\mid Z)\}}_2^2 < \frac{\eps^{2}}{9},
\end{split}
\notag
\end{equation}
whenever $\abs{t - s} \, < \delta$. Finally, when $h \kappa \leq \bar{h}\kappa$, $\E\,\abs{q_h(Z) \xi(t,Z) }\, \leq \xi_{\max} \E\, \abs{q_h(Z)}\, < \infty$ by the inequality in~\eqref{eq::holder1}. The distribution function $H(t)$ is continuous on $[0,\tau]$, meaning that for any $\eps > 0$ we can find $\delta > 0$ such that $\abs{H(s) - H(t)}\, < \eps/(3 \xi_{\max} \E\, \abs{q_h(Z)})$ whenever $\abs{t - s} \,< \delta$. In summary, provided $h \kappa \leq \bar{h}\kappa < \kappa_0$, for any $\eps > 0$ we can find $\delta > 0$ such that 
\begin{equation}
\abs{G_h(t) - G_h(s)} \, < \eps/3 + \eps/3 + \eps/3 = \eps, 
\notag
\end{equation}
whenever $\abs{t - s}\, < \delta$. This shows that $G_h(t)$ is a continuous function, and since $G_h(t)$ is continuous on the closed and bounded interval $[0,\tau]$, it is uniformly continuous on this interval. The uniform continuity of $G_h(t)$ entails that given $\eps > 0$, we can find a grid $0 = t_0 < t_1 < \cdots < t_k = \tau$, so that $\abs{G_h(t_i) - G_h(t_{i-1})}\, < \eps$ for all $i = 1,\ldots,k$. Using that $G_{n,h}(t)$ and $G_h(t)$ are both nonincreasing in $t$ (since $q(u) \geq 0$ by assumption (b)), this entails that for $t \in [t_{i-1},t_i)$, 
\begin{equation}
\begin{split}
G_{n,h}(t) - G_h(t) & = G_{n,h}(t) - G_h(t_{i-1}) + G_h(t_{i-1}) - G_h(t)\\ 
& < G_{n,h}(t) - G_h(t_{i-1}) + \eps < G_{n,h}(t_{i-1}) - G_h(t_{i-1}) + \eps, 
\end{split}
\notag
\end{equation}
and, similarly, 
\begin{equation}
\begin{split}
G_{n,h}(t) - G_h(t) & =  G_{n,h}(t) - G_h(t_i) + G_h(t_i) - G_h(t)\\ 
& > G_{n,h}(t) - G_h(t_i) - \eps \geq G_{n,h}(t_i) - G_h(t_i) - \eps,
\end{split}
\notag
\end{equation}
Thus, for $t \in [t_{i-1},t_i)$ and our chosen grid, 
\begin{equation}
G_{n,h}(t_{i}) - G_h(t_{i}) - \eps \leq G_{n,h}(t) - G_h(t) \leq G_{n,h}(t_{i-1}) - G_h(t_{i-1}) + \eps.
\notag
\end{equation}  
Because this is true for every $i = 1,\ldots,k$, we must have that 
\begin{equation}
\sup_{t \in [0,\tau]}\abs{G_{n,h}(t) - G_h(t)}\, \leq \max_{0 \leq i \leq k}
\abs{G_{n,h}(t_{i}) - G_h(t_{i})}\, + \,\eps.
\notag
\end{equation}  
Since the convergence $G_{n,h}(t) \to G_h(t)$ a.s., is uniform over $\{t_0,t_1,\ldots,t_k\}$, the right hand side of the above display tends almost surely to $\eps$, and so $\sup_{t \in [0,\tau]}\abs{G_{n,h}(t) - G_h(t)} \to \eps$ almost surely as $n \to \infty$. Therefore, by the triangle inequality 
\begin{equation}
\sup_{t \in [0,\tau]}\abs{G_{n,h}(t) - G(t)}\, \leq \sup_{t \in [0,\tau]}\abs{G_{n,h}(t) - G_h(t)}\, + \sup_{t \in [0,\tau]}\abs{G_{h}(t) - G(t)} 
\,\to \eps\,
\label{eq::uniformity_conclusion}
\end{equation}
almost surely, as $n \to \infty$ and $h \to 0$, using that the second term on the right converges to zero as $h \to 0$ by~\eqref{eq::uniformity_step1}.
Since \eqref{eq::uniformity_conclusion} is true for every $\eps >0$, the claim in \eqref{eq::uniform_convergence_lemma} follows.    
\end{proof}  

\begin{remark} Lemma~\ref{lemma::uniform_convergence_lemma} obviously holds for $\xi(t,z) = 1$. If we assume that $F(t\mid z)$ and $\xi(t,z)$ are uniformly continuous on $[0,\tau]\times S$, where $S$ is the support of $Z$, then the lemma also holds when $q_h(u) = q(u)$ and $q$ is a continuous function on $\real$. 
\end{remark}



\subsection{Sequences of random matrices}
Versions of the results in this section can be found in \citet{mckeague1988asymptotic} or \citet{mckeague1988counting}. They are included here for completeness.
\begin{lemma}\label{lemma::A_n_pr_pd} Let $A_n(t),\, t \in [0,\tau]$ be a sequence of symmetric matrix valued functions converging in probability (resp.~almost surely) to a matrix valued function $A(t)$ that is positive definite for all $t\in [0,\tau]$. Let 
\begin{equation}
B_n = \{\text{$A_n(t)$ is positive definite for all $t\in[0,\tau]$}\}. 
\notag
\end{equation}
Then $\pr(B_n) \to 1$ as $n \to \infty$ (resp.~$\exists\,n_0 \geq 0$ such that $\pr(B_n, \,\forall \,n \geq n_0) = 1$). 
\end{lemma}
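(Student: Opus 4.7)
The plan is to reduce positive definiteness of $A_n(t)$ simultaneously in $t$ to a uniform lower bound on its smallest eigenvalue, and then transfer that lower bound from the deterministic limit $A(t)$ via a standard eigenvalue perturbation inequality. I read the statement as (implicitly) assuming uniform convergence of $A_n(t)$ to $A(t)$ on $[0,\tau]$ together with continuity of $A(t)$: that is how the lemma is used in the paper (e.g.\ with $A(t) = y_g(t,z_0)f_Z(z_0)\Gamma_p$ or $A(t) = y_g(t,z_0)f_Z(z_0)H_p(-1)\Gamma_p H_p(-1)$, both continuous in $t$), and Lemma~\ref{lemma::uniform_convergence_lemma} does deliver uniform convergence.

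Step one: exploit compactness. Since $A(t)$ is symmetric and positive definite on the compact interval $[0,\tau]$ and the map $t \mapsto \lambda_{\min}(A(t))$ is continuous (as a continuous functional of the continuous matrix-valued map $A$), it attains its infimum, which is strictly positive. Pick $\varepsilon > 0$ so that $\lambda_{\min}(A(t)) \geq 2\varepsilon$ for every $t \in [0,\tau]$.

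Step two: stability of eigenvalues. For symmetric matrices $M,N$, Weyl's inequality gives $\lvert \lambda_{\min}(M) - \lambda_{\min}(N)\rvert \leq \norm{M-N}$ in the operator norm. Applied pointwise with $M = A_n(t)$, $N = A(t)$, this yields
\begin{equation}
\lambda_{\min}(A_n(t)) \geq \lambda_{\min}(A(t)) - \norm{A_n(t) - A(t)} \geq 2\varepsilon - \sup_{s \in [0,\tau]}\norm{A_n(s) - A(s)}
\notag
\end{equation}
uniformly in $t$.

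Step three: transfer the convergence. Let $D_n = \{\sup_{t \in [0,\tau]}\norm{A_n(t) - A(t)} < \varepsilon\}$. On $D_n$ the previous display gives $\lambda_{\min}(A_n(t)) > \varepsilon > 0$ for every $t \in [0,\tau]$, so $A_n(t)$ is positive definite for every $t$, i.e.\ $D_n \subset B_n$. In the in-probability case, $\pr(D_n) \to 1$ and hence $\pr(B_n) \to 1$. In the almost-sure case, $\sup_t \norm{A_n(t) - A(t)} \to 0$ a.s., so there is (random) $n_0$ with $D_n$ occurring for all $n \geq n_0$ almost surely, whence $B_n$ occurs for all $n \geq n_0$ almost surely. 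The main thing to be careful about is that both the extraction of the positive lower bound $2\varepsilon$ and the passage from pointwise to simultaneous positive definiteness genuinely rely on uniformity in $t$, so the proof really is an application of compactness plus Weyl's inequality rather than anything probabilistic beyond what is inherited from the uniform convergence hypothesis.
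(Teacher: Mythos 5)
Your proof is correct, and it takes a different route from the paper's. The paper argues entrywise: it writes $A_n(t) = A(t) + R_n(t)$ with $\sup_t\abs{r_{i,j}^n(t)} \to_p 0$, fixes an arbitrary nonzero vector $z$, and shows that $z^{\tr}A_n(t)z = z^{\tr}A(t)z + \sum_{i,j}r_{i,j}^n(t)z_iz_j$ stays positive for all $t$ with probability tending to one. Your route via $\lambda_{\min}$ and Weyl's inequality is actually the stronger of the two: positive definiteness requires the quadratic form to be positive for \emph{all} nonzero $z$ simultaneously, and the paper's argument as written fixes a single $z$ (with the threshold $\eps$ and the index $n_0$ both depending on $z$), so it needs the implicit extra step of uniformizing over the unit sphere — which is exactly what controlling $\lambda_{\min}(A_n(t))$ accomplishes in one stroke. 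Both proofs silently upgrade the hypothesis ``converging in probability'' to uniform-in-$t$ convergence (the paper via the $\sup_t$ in its remainder terms, you by declaring it explicitly), and both need a uniform-in-$t$ positive lower bound on the limit (you derive it from continuity of $A(t)$ plus compactness of $[0,\tau]$; the paper simply asserts the existence of such an $\eps$). Your version also handles the almost-sure case explicitly, which the paper omits. In short: same skeleton (perturb the limit, absorb a uniformly small error), but your eigenvalue formulation closes a genuine gap in the paper's vector-by-vector argument.
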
 
\begin{proof} We only prove the convergence in probability part. Write $A_n(t) = \{a_{i,j}^n(t)\}$ and $A(t) = \{a_{i,j}(t)\}$. For each $i$ and $j$, $a_{i,j}^n(t) = a_{i,j}(t) + r_{i,j}^n(t)$, where $\sup_t\abs{r_{i,j}(t)}\to_p 0$ as $n \to \infty$. Let $z$ be an arbitrary nonzero real column vector, then $z^{\tr}A_n(t) z = z^{\tr}A(t)z + \sum_{i}\sum_{j} r_{i,j}^n(t) z_i z_j$, where $z^{\tr}A(t)z > 0$ for all $t$, by assumption. Since $z^{\tr}A(t) z > 0$ there is some $\eps > 0$ such that $z^{\tr}A(t) z > \eps > 0$. For such an $\eps > 0$, we can for any $\delta >0$ find an $n_0$ such that 
\begin{equation}
\pr( \sup_{t}\,\abs{\sum_{i}\sum_{j} r_{i,j}^n(t) z_i z_j}\,<  \eps ) > 1 - \delta, 
\notag
\end{equation}
for all $n \geq n_0$, but then $\pr\{ \text{$z^{\tr}A_n(t) z > 0$ for all $t \in [0,\tau]$}\} > 1 - \delta$ for all $n \geq n_0$.   
\end{proof}

\begin{lemma}\label{lemma::invertible_to_limit} Let $A_n(t)$ and $A(t)$ be as in Lemma~\ref{lemma::A_n_pr_pd}. Let $\widetilde{A}_{n}(t) = A_n(t)$ if $A_n(t)$ is positive definite, and $\widetilde{A}_{n}(t)$ be equal to some invertible matrix otherwise. Then $\sup_{t \in [0,\tau]}\abs{\widetilde{A}_{n}^{-1}(t) - A^{-1}(t)}\, = o(1)$ almost surely.
\end{lemma}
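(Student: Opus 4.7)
The plan is to combine the preceding lemma with the standard resolvent identity. First, by Lemma~\ref{lemma::A_n_pr_pd}, almost surely there exists a (random) $n_0$ such that $A_n(t)$ is positive definite for all $t \in [0,\tau]$ and all $n \geq n_0$; on this event, $\widetilde{A}_n(t) = A_n(t)$ for every $t$ and $n \geq n_0$, so it suffices to analyse $A_n^{-1}(t) - A^{-1}(t)$.

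For such $n$, I would invoke the resolvent identity
\begin{equation}
A_n^{-1}(t) - A^{-1}(t) = A_n^{-1}(t)\{A(t) - A_n(t)\}A^{-1}(t),
\notag
\end{equation}
and take operator norms to obtain $\norm{A_n^{-1}(t) - A^{-1}(t)} \leq \norm{A_n^{-1}(t)}\,\norm{A(t) - A_n(t)}\,\norm{A^{-1}(t)}$. The crux is then to bound the two inverse-norm factors uniformly in $t$. Since $A(t)$ is symmetric and positive definite on the compact interval $[0,\tau]$ (and, as is the case in every use of this lemma in the paper, continuous in $t$), there is a $\lambda_0 > 0$ such that the smallest eigenvalue of $A(t)$ satisfies $\lambda_{\min}(A(t)) \geq \lambda_0$ for every $t$, and hence $\sup_t \norm{A^{-1}(t)} \leq 1/\lambda_0$. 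By Weyl's inequality for symmetric matrices, $\lambda_{\min}(A_n(t)) \geq \lambda_{\min}(A(t)) - \norm{A_n(t) - A(t)}$, so once the a.s.~uniform convergence $\sup_{t\in[0,\tau]}\norm{A_n(t) - A(t)} \to 0$ drives this difference above $\lambda_0/2$, one gets $\sup_t\norm{A_n^{-1}(t)} \leq 2/\lambda_0$. Combining the three bounds yields
\begin{equation}
\sup_{t \in [0,\tau]}\norm{A_n^{-1}(t) - A^{-1}(t)} \leq \frac{2}{\lambda_0^2}\sup_{t \in [0,\tau]}\norm{A_n(t) - A(t)} \longrightarrow 0,\quad \text{a.s.}
\notag
\end{equation}

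The main obstacle is ensuring the uniform lower bound $\inf_t \lambda_{\min}(A(t)) > 0$: without some regularity of $A$, positive definiteness at each fixed $t$ does not preclude the smallest eigenvalue from approaching zero along a sequence of times in $[0,\tau]$. In all applications of this lemma in the paper the limiting matrix is of the form $y_g(t,z_0)f_Z(z_0)\Gamma_p$ (possibly sandwiched by $H_p(-1)$), which is continuous in $t$; $\Gamma_p$ is positive definite by the choice of kernel, and Assumption~\ref{assumption::censoring} gives $y_g(\tau,z_0) > 0$, so the required uniform lower bound is at hand and the argument closes. The same reasoning with convergence in probability replacing almost sure convergence yields the in-probability version of the conclusion as well.
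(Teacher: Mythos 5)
Your proof is correct and rests on the same two pillars as the paper's: the a.s.\ uniform convergence $\sup_{t}\norm{A_n(t)-A(t)}\to 0$ together with eventual positive definiteness from Lemma~\ref{lemma::A_n_pr_pd}, fed into a perturbation bound for the inverse. The difference is in how that bound is obtained: the paper cites the inequality $\norm{C^{-1}-D^{-1}} \leq \norm{D^{-1}}^2\norm{C-D}/(1-\norm{D^{-1}}\norm{C-D})$ from McKeague, whereas you derive an equivalent estimate from scratch via the resolvent identity $A_n^{-1}-A^{-1}=A_n^{-1}(A-A_n)A^{-1}$ and Weyl's inequality to control $\lambda_{\min}(A_n(t))$. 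Your version is more self-contained, and it has the merit of making explicit a point the paper's proof leaves implicit: both arguments require $\sup_{t\in[0,\tau]}\norm{A^{-1}(t)}<\infty$, i.e.\ a uniform lower bound on $\lambda_{\min}(A(t))$, which does not follow from pointwise positive definiteness alone. You correctly observe that in every application the limit is of the form $y_g(t,z_0)f_Z(z_0)\Gamma_p$ (possibly conjugated by $H_p(-1)$), continuous in $t$ with $y_g(t,z_0)\geq y_g(\tau,z_0)>0$, so the bound is available; strictly speaking this continuity (or the uniform lower bound) ought to appear among the hypotheses of the lemma, in your write-up as well as in the paper's.
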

\begin{proof} The following matrix inequality can be found in~\citet[p.~231]{mckeague1988counting}: If $C$ and $D$ are nonsingular matrices such that $\norm{C - D} < \norm{D}$, then 
\begin{equation}
\norm{C^{-1} - D^{-1}} \, \leq \frac{\norm{D^{-1}}^2\norm{C - D} }{1 - \norm{D^{-1}}\norm{C - D}}.
\label{eq::invertible_matrix_ineq}
\end{equation}
Since $\sup_{t\in [0,\tau]}\abs{\widetilde{A}_{n}(t) - A(t)} \to 0$ almost surely, there is an $n_0 \geq 1$ such that the inequality above, with $C =  \widetilde{A}_{n}(t)$ and $D = A(t)$ for all $t \in [0,\tau]$ holds almost surely for all $n \geq n_0$. But the right hand side converges in almost surely to zero. 
\end{proof}
Here is a useful lemma that does not necessarily involve neither sequences nor randomness. 

\begin{lemma}\label{lemma::simple_ineq} Let $x = (x_1,\ldots,x_p)^{\tr}$ be a $p$-dimensional column vector, $A = \{a_{i,j}\}$ a $p\times p$ matrix, and $e_{\nu}$ the $p$-dimensional column vector with the $\nu$th entry equal to $1$, and all other equal to zero. Then $e_{\nu}^{\tr} Axx^{\tr}A^{\tr}e_{\nu}  \leq \norm{A}^2 \norm{x}^2$.
\end{lemma}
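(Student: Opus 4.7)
The plan is to observe that the quadratic form on the left collapses to a squared inner product, after which two standard inequalities finish the argument. Specifically, since $xx^{\tr}$ has rank one, the quantity $e_{\nu}^{\tr} A x x^{\tr}A^{\tr}e_{\nu}$ equals $(e_{\nu}^{\tr}Ax)^2$, which I can rewrite as $\langle A^{\tr}e_{\nu},x\rangle^2$.

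The next step is to apply the Cauchy--Schwarz inequality to get the bound $(e_{\nu}^{\tr}Ax)^2 \leq \norm{A^{\tr}e_{\nu}}^2\,\norm{x}^2$. Then I use submultiplicativity of the operator norm, together with $\norm{e_{\nu}} = 1$ and $\norm{A^{\tr}} = \norm{A}$ (the latter holds for the spectral norm, which is the convention $\norm{\cdot}$ used in the rest of the paper when applied to the random matrices appearing in the proofs of Theorem~\ref{theorem::Falg_clt} and Lemma~\ref{lemma::var_consistency}), to conclude $\norm{A^{\tr}e_{\nu}}^2 \leq \norm{A}^2$. Chaining these gives the desired inequality.

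There is essentially no obstacle here: the entire argument is three short lines. The only thing worth flagging is that the norm $\norm{\cdot}$ on matrices needs to be the operator norm (which is what the paper uses throughout), because the step $\norm{A^{\tr}} = \norm{A}$ and the inequality $\norm{A^{\tr}e_{\nu}} \leq \norm{A^{\tr}}\norm{e_{\nu}}$ both rely on this. If instead one wished to state the lemma for the Frobenius norm, the same chain of inequalities still works and in fact gives a sharper constant, since $\norm{A}_{\rm op} \leq \norm{A}_{\rm F}$.
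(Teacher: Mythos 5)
Your proof is correct and follows essentially the same route as the paper's: both reduce the quadratic form to $(e_{\nu}^{\tr}Ax)^2$ and then bound it by $\norm{A}^2\norm{x}^2$ using the Euclidean and operator norms. The only (immaterial) difference is that you apply Cauchy--Schwarz to $\langle A^{\tr}e_{\nu},x\rangle$ together with $\norm{A^{\tr}}=\norm{A}$, whereas the paper bounds the single coordinate $\abs{e_{\nu}^{\tr}Ax}$ by $\norm{Ax}$ and then invokes the definition of the operator norm directly.
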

\begin{proof} First, $e_{\nu}^{\tr} Axx^{\tr}A^{\tr}e_{\nu} =  (e_{\nu}^{\tr} Ax)^2$, and 
\begin{equation}
\abs{e_{\nu}^{\tr} Ax}^2\, \leq \max_{1 \leq \nu \leq p} 
\abs{\sum_{j=1}^p a_{\nu,j}x_j }^2\,
\leq \sum_{\nu=1}^p \,\abs{\sum_{j=1}^p a_{\nu,j}x_j }^2\,
= \norm{Ax}^2.
\notag
\end{equation}
Provided that $\norm{x}$ is positive, $\norm{Ax} = (\norm{Ax}/\norm{x})\norm{x} \leq \sup_{\norm{y} \neq 0} (\norm{Ay}/\norm{y})\norm{x} = \norm{A}\norm{x}$.
\end{proof}

\subsection{Frequently used limits}
The following lemma establishes the convergence in probability of the sequences $\Gamma_{g,p,n}(t,h)$, $J_{n,h}(t)$, $\vartheta_{g,p,q,n}(t,h)(t,h)$, $\bar{\Psi}_{g,p,n}(t,h)$, and $\Psi_{g,p,n}(t,h)$ for $g = 0,1$.
\begin{equation}
\vartheta_{g,p,q,n}(t,h) = \frac{1}{n}\sum_{i=1}^n I_{X_i = g}K_h(Z_i)r_p(Z_i/h)(Z_i/h)^{q}Y_i^g(t),\; \text{for $g = 0,1$},
\label{eq::myXi}
\end{equation}

\begin{lemma}\label{lemma::prob_limits} Assumptions~\ref{assumption::indep}--\ref{assumption::kernel} hold. As $nh \to \infty$ and $h \to 0$, the following holds almost surely
\begin{itemize}
\item[{\rm (i)}] $\sup_{t\in[0,\tau]}\norm{\Gamma_{1,p,n}(t,h) - y_1(t,z_0) f_Z(z_0)\Gamma_{p}}\, = o(1)$;
\item[{\rm (ii)}] $\sup_{t\in[0,\tau]}\norm{\Gamma_{0,p,n}(t,h) - y_0(t,z_0) f_Z(z_0)H_{p}(-1)\Gamma_{p}H_{p}(-1)}\, = o(1)$;
\item[{\rm (iii)}] $\sup_{t\in[0,\tau]}\abs{J_{n,h} - 1} = o(1)$;
\item[{\rm (iv)}] $\sup_{t\in[0,\tau]}\norm{\vartheta_{1,p,q,n}(t,h) - y_{1}(t,z_0)f_Z(z_0) \vartheta_{p,q}} = o(1)$;
\item[{\rm (v)}] $\sup_{t\in[0,\tau]}\norm{\vartheta_{0,p,q,n}(t,h) - y_{0}(t,z_0)f_Z(z_0) (-1)^qH_p(-1)\vartheta_{p,q}} = o(1)$;
\item[{\rm (vi)}] $\sup_{t\in[0,\tau]}\norm{J_{n,h}(t)\Gamma_{1,p,n}(t,h)^{-1}\vartheta_{1,p,q,n}(t,h) - \Gamma_p^{-1}\vartheta_{p,q}} = o(1)$;
\item[{\rm (vii)}] $\sup_{t\in[0,\tau]}\norm{J_{n,h}(t)\Gamma_{0,p,n}(t,h)^{-1}\vartheta_{0,p,q,n}(t,h) - \Gamma_p^{-1}\vartheta_{p,q}} = o(1)$.
\item[{\rm (viii)}] $\sup_{t\in[0,\tau]}\norm{ h \bar{\Psi}_{1,p,n}(t,h) - y_1(t,z_0)\bar{\alpha}_1(t,z_0)f_Z(z_0)\Psi_p} = o(1)$;
\item[{\rm (ix)}] $\sup_{t\in[0,\tau]}\norm{ h \bar{\Psi}_{0,p,n}(t,h) - y_0(t,z_0)\bar{\alpha}_0(t,z_0)f_Z(z_0)H_p(-1)\Psi_p H_p(-1)} = o(1)$;
\item[{\rm (x)}] $\sup_{t\in[0,\tau]}\norm{ h \Psi_{1,p,n}(t,h) - y_1(t,z_0)\bar{\alpha}_1(t,z_0)f_Z(z_0)\Psi_p} = o(1)$;
\item[{\rm (xi)}] $\sup_{t\in[0,\tau]}\norm{ h \Psi_{0,p,n}(t,h) - y_0(t,z_0)\bar{\alpha}_0(t,z_0)f_Z(z_0)H_p(-1)\Psi_p H_p(-1)} = o(1)$;
\end{itemize}
Set $m_n = \min(h,b)$, and assume that $m_n/\max(h,b) \to \rho \geq 0$ as $m_n \to 0$. Then, 
\begin{itemize}
\item[{\rm (xii)}] $\sup_{t\in[0,\tau]}\norm{ \frac{hb}{m_n} \bar{\Psi}_{1,p,q,n}(t,h,b) - y_1(t,z_0)\bar{\alpha}_1(t,z_0)f_Z(z_0)\Psi_{p,q}(\rho) }$, 
\item[{\rm (xiii)}] $\sup_{t\in[0,\tau]}\norm{\frac{hb}{m_n} \bar{\Psi}_{0,p,q,n}(t,h,b) - y_0(t,z_0)\bar{\alpha}_0(t,z_0)f_Z(z_0)H_p(-1)\Psi_{p,q}(\rho) H_q(-1)}$,
\end{itemize}
both tend almost surely to zero as $m_n \to 0$ and $nm_n \to \infty$. 
\end{lemma}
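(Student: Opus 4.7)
The unifying observation is that every item of the lemma is a uniform almost sure convergence for a sample average of the shape $n^{-1}\sum_{i=1}^n I_{X_i=g}\phi_{n}(Z_i)\xi(t,Z_i)Y_i(t)$, and all items except (iii), (vi), (vii), (xii), (xiii) fit directly into the framework of Lemma~\ref{lemma::uniform_convergence_lemma}. The plan is therefore to verify its hypotheses entry by entry for (i), (ii), (iv), (v), (viii)--(xi), and then to assemble the remaining items from what has been obtained.

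Assume $z_0=0$. On the treated side, items (i), (iv), (viii), (x), the generic $(j,k)$-entry is of the form $n^{-1}\sum_i I_{Z_i\geq 0} K_h(Z_i)^{m}(Z_i/h)^{\ell}Y_i(t)\xi(t,Z_i)$ for integers $\ell\geq 0$ and $m\in\{1,2\}$, where the $m=2$ cases carry the prefactor $h$ so that $h\,K_h^2$ becomes $K(\cdot/h)^2/h$. In each case the function $q(u)=K(u)^m u^{\ell}$ on $[0,\kappa]$ is nonnegative, continuous on $(0,\kappa)$, bounded, and square integrable by Assumption~\ref{assumption::kernel}, while $\xi$ is taken to be either $1$ or $\bar{\alpha}_g(t,z)$, which is jointly continuous on $[0,\tau]\times(-\kappa_0,\kappa_0)$ by Assumptions~\ref{assumption::lifetimes} and \ref{assumption::potential_params}(b). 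Lemma~\ref{lemma::uniform_convergence_lemma} then yields precisely the stated limit. For the mirror-image items (ii), (v), (ix), (xi), the substitution $u=-Z/h$ converts the sum into the same form with the same lemma applying, and produces a sign factor $(-1)^{\ell}$ that entry by entry assembles into the $H_p(-1)\cdots H_p(-1)$ sandwich.

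Items (iii), (vi), (vii) are then immediate. Assumption~\ref{assumption::censoring} gives $y_g(\tau,z_0)=\{1-H(\tau)\}\bar S_g(\tau,z_0)>0$ and Assumption~\ref{assumption::potential_params}(a) gives $f_Z(z_0)>0$, so the limit matrices in (i) and (ii) are strictly positive definite for every $t\in[0,\tau]$; Lemma~\ref{lemma::A_n_pr_pd} now yields (iii) in the strong form $\sup_{t\in[0,\tau]}|J_{n,h}(t)-1|=0$ a.s.~for all sufficiently large $n$. With (iii) in hand, Lemma~\ref{lemma::invertible_to_limit} upgrades (i), (ii) to uniform a.s.~convergence of $J_{n,h}\Gamma_{g,p,n}^{-1}$, which multiplied by (iv), (v) delivers (vi), (vii).

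The one genuinely new step is (xii), (xiii), where the effective kernel depends on $n$. Assuming $h\leq b$ so that $m_n=h$ and $hb/m_n=b$, and writing $\rho_n=h/b\to\rho$, the treated-side $(j,k)$-entry satisfies
\begin{equation}
b\,K_h(Z)K_b(Z)(Z/h)^{j}(Z/b)^{k} = h^{-1}K(Z/h)K(\rho_n Z/h)(Z/h)^{j+k}\rho_n^{k},
\notag
\end{equation}
which has the form $\widetilde{q}_{n,h}(Z)$ for $\widetilde{q}_n(u)=K(u)K(\rho_n u)u^{j+k}\rho_n^{k}$. Continuity and boundedness of $K$, together with $\rho_n\to\rho$, give $\widetilde{q}_n\to q$ on $[0,\kappa]$ under a common integrable envelope, where $q(u)=K(u)K(\rho u)u^{j+k}\rho^{k}$. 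I would telescope $\widetilde{q}_n=q+(\widetilde{q}_n-q)$: Lemma~\ref{lemma::uniform_convergence_lemma}, applied to $q$ with $\xi=\bar{\alpha}_g$, handles the first piece, while the remainder is dominated uniformly in $t$ by $\|\widetilde{q}_n-q\|_{\infty}$ times $n^{-1}\sum_i I_{X_i=g}h^{-1}K(Z_i/h)Y_i(t)\bar{\alpha}_g(t,Z_i)$, whose second factor is $O(1)$ a.s.~uniformly in $t$ by a further application of the same lemma and whose first factor is $o(1)$ by dominated convergence. Item (xiii) follows by the same mirror-image substitution used for (ii). The main obstacle is therefore essentially notational—keeping track of sign flips, the $H_p(-1)$ sandwiches, and the $n$-dependent kernel in (xii), (xiii)—while every uniform a.s.~convergence is delegated to Lemma~\ref{lemma::uniform_convergence_lemma} or a minor variant thereof.
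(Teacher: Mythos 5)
Your treatment of items (i), (ii), (iv), (v), (viii)--(xi), (iii), (vi), (vii) is essentially identical to the paper's: each entry is recognised as an average of the form covered by Lemma~\ref{lemma::uniform_convergence_lemma}, the limits are computed by the change of variable $u = z/h$ with the sign flips assembled into the $H_p(-1)$ sandwiches, (iii) follows from Lemma~\ref{lemma::A_n_pr_pd}, and (vi)--(vii) from Lemma~\ref{lemma::invertible_to_limit} together with (i)--(v). One small omission: you treat (x), (xi) as literally the same as (viii), (ix) with $\xi = \bar{\alpha}_g$, but $\Psi_{g,p,n}$ is not the same object as $\bar{\Psi}_{g,p,n}$; the paper reduces (x), (xi) to (viii), (ix) by first invoking Lemma~\ref{lemma::A4} to show $h\,\E\,\Psi_{g,p,n}(t,h) = h\,\E\,\bar{\Psi}_{g,p,n}(t,h)$, and that identification step should appear explicitly.

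The genuine gap is in (xii), (xiii). Your telescoping bounds the remainder by $\norm{\widetilde{q}_n - q}_{\infty}$ times a bounded average and claims the first factor is $o(1)$ "by dominated convergence". Dominated convergence gives pointwise (hence $L^1$) convergence of $\widetilde{q}_n$ to $q$, not sup-norm convergence, and sup-norm convergence is in fact false under Assumption~\ref{assumption::kernel}: the kernel is only required to be continuous on the open interval $(0,\kappa)$, so it may jump at the endpoints of its support (e.g.\ the rectangular kernel). For such a kernel and $\rho > 0$, $\sup_u \abs{K(\rho_n u) - K(\rho u)}$ does not tend to zero, since near $u = \kappa/\rho$ one of the two factors is bounded away from zero while the other vanishes. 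The set on which $\widetilde{q}_n - q$ is large has measure $O(\abs{\rho_n - \rho})$, so the $L^1$ norm does go to zero, but an $L^1$ bound on the deterministic kernel difference does not by itself control the random average $n^{-1}\sum_i h^{-1}\abs{\widetilde{q}_n - q}(Z_i/h)Y_i(t)\bar{\alpha}_g(t,Z_i)$ almost surely and uniformly in $t$; you would need a further uniform law of large numbers for this triangular array. The paper avoids the issue by rerunning the Glivenko--Cantelli-type argument of Lemma~\ref{lemma::uniform_convergence_lemma} directly on the $n$-dependent kernel $u \mapsto K(u)K(\rho_n u)u^{j}(\rho_n u)^{k}$, computing $\lim_{m_n \to 0}(hb/m_n)\E\,\bar{\Psi}_{g,p,q,n}$ by the same change of variable; either adopt that route or replace your sup-norm step with a bound that survives endpoint discontinuities of $K$.
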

\begin{proof} Throughout the proof we assume that the cut-off is $z_0 = 0$, but write $z_0$ instead of $0$ when this adds to the clarity of the exposition. First, (i): All the elements of the matrix 
\begin{equation}
\Gamma_{1,p,n}(t,h) = n^{-1}\sum_{i=1}^n I_{Z_i \geq z_0}K_h(Z_i) r_{p}(Z_i/h)r_{p}(Z_i/h)^{\tr} Y_i^1(t),
\notag
\end{equation}
are of the form $n^{-1}\sum_{i=1}^n q_{h,\nu}(Z_i) Y_i^{1}(t)$,  with $q_{h,\nu}(z) = q_{\nu}(z/h)/h$ and $q_{\nu}(u) = I_{u \geq z_0}K(u) u^{\nu}$ for $\nu = 0,\ldots,2p$. Due to the definition of the kernel function $K$ (see Assumption~\ref{assumption::kernel} of the main text), the functions $q_{\nu}(u)$ satisfy Conditions (i)--(iii) of Lemma~\ref{lemma::uniform_convergence_lemma}. This means that, as $nh \to \infty$ and $h \to 0$, every element of $\Gamma_{1,p,n}(t,h)$ converges almost surely to the corresponding element of $\lim_{h \to 0}\E\, \Gamma_{1,p,n}(t,h)$, uniformly in $t$. Since every element converges, the matrix converges, and we therefore only need to find this limit. By a change of variable and using the i.i.d.~assumption,  
\begin{equation}
\begin{split}
\E\, \Gamma_{1,p,n}(t,h) & = h^{-1}\,\E\, \{I_{Z \geq z_0}K(Z/h) r_{p}(Z/u)r_{p}(Z/h)^{\tr} Y^1(t) \}\\
& = h^{-1}\,\E\, \{I_{Z \geq z_0} K(Z/h) r_{p}(Z/u)r_{p}(Z/h)^{\tr} \,\E\,[Y^1(t) \mid Z] \}\\
& = h^{-1}\,\int_0^{\infty} K(z/h) r_{p}(z/u)r_{p}(z/h)^{\tr} y_1(t,z) f_Z(z)\,\dd z\\
& = \int_0^{1} K(u) r_{p}(u)r_{p}(u)^{\tr} y_1(t,hu) f_Z(hu)\,\dd u.
\end{split}
\notag
\end{equation}
Since $y_1(t,z)$ is continuous for $z \in (-\kappa_0,\kappa_0)$ for all $t \in [0,\tau]$, and $f_Z(z)$ is continuous for $z \in (-\kappa_0,\kappa_0)$, $y_1(t,hu) f_Z(hu) \to y_1(t,z_0) f_Z(z_0)$ as $h \to 0$ (recall that $z_0 = 0$). Every element of $\int_0^{1} K(u) r_{p}(u)r_{p}(u)^{\tr} y_1(t,hu) f_Z(hu)\,\dd u$ is of the form $\int_0^{1} K(u) u^{\nu} y_1(t,hu) f_Z(hu)\,\dd u$ for $\nu = 0,\ldots,2p$, and when $h \kappa < \kappa_0$,  
\begin{equation}
\abs{\int_0^{1}K(u) u^{\nu} y_1(t,hu) f_Z(hu)\,\dd u}\, \leq f_{Z,\max} \int_0^1K(u) \,\dd u < \infty. 
\notag
\end{equation}
Therefore, by bounded convergence, 
\begin{equation}
\int_0^{1} K(u) r_{p}(u)r_{p}(u)^{\tr} y_1(t,hu) f_Z(hu)\,\dd u = y_1(t,z_0)f_Z(z_0) \Gamma_{p} + o(1),
\notag
\end{equation}
as $h \to 0$. The proof of (ii) is similar and also follows from Lemma~\ref{lemma::uniform_convergence_lemma}. There are two differences. First, all the elements of $\Gamma_{0,p,n}(t,h)$ are of the form $n^{-1} \sum_{i=1}^n q_{h,\nu}(Z_i)Y_i^0$, with $q_{h,\nu}(z) = q_{\nu}(z/h)/h$ where $q_{\nu}(u) = I_{u < z_0} K(u)u^{\nu}$ is negative when $\nu$ is odd. We can replace $q_{\nu}(u)$ with $q_{\nu}^{+}(u) = (-1)^{\nu}q_{\nu}(u)$, and Lemma~\ref{lemma::uniform_convergence_lemma} applies. The second slight difference is the limit of $\E\, \Gamma_{0,p,n}(t,h)$ as $h \to 0$. Assuming that $h \kappa < \kappa_0$, it is
\begin{equation}
\begin{split}
\E\, \Gamma_{0,p,n}(t,h) & = h^{-1}\,\E\, \{I_{Z < z_0}K(Z/h) r_{p}(Z/u)r_{p}(Z/h)^{\tr} Y^0(t) \}\\
& = h^{-1}\,\int_{-1}^{0} K(z/h) r_{p}(z/u)r_{p}(z/h)^{\tr} y_0(t,z) f_Z(z)\,\dd z\\
& = \int_0^{1} K(-u) r_{p}(-u)r_{p}(-u)^{\tr} y_0(t,-hu) f_Z(-hu)\,\dd u\\
& = H_p(-1)\int_0^{1} K(u) r_{p}(u)r_{p}(u)^{\tr} y_0(t,-hu) f_Z(-hu)\,\dd u\,H_p(-1)\\
& = y_0(t,z_0) f_Z(z_0)H_p(-1) \Gamma_{p}\,H_p(-1) + o(1),
\end{split}
\notag
\end{equation}
as $h \to 0$, because $K(-u) = K(u)$ and $r_p(-u) = H_p(-1)r_p(u)$. 

Next, (iii):  By Lemma~\ref{lemma::A_n_pr_pd}, the results in (i) and (ii) combined with the fact that $\Gamma_{p}$ is positive definite and $y(t,z_0)f_Z(z_0)$ is bounded below for all $t \in [0,\tau]$, entails that there are $n_0 \geq 1$ and $h_0 > 0$ such that $\Gamma_{1,p,n}(t,h)$ and $\Gamma_{0,p,n}(t,h)$ are positive definite for all $t \in [0,\tau]$ for all $n \geq n_0$ and $h \leq h_0$ has probability one, which is (iii).

We now turn to (iv): The elements of the vector 
\begin{equation} 
\vartheta_{1,p,q,n}(t,h) = \frac{1}{n}\sum_{i=1}^n I_{X_i = 1}K_h(Z_i)r_p(Z_i/h)(Z_i/h)^{q}Y_i^1(t),
\notag
\end{equation} 
are of the form $n^{-1}\sum_{i=1}^n q_h(Z_i/h) Y_i^1(t)$ with $q_h(Z_i/h) = h^{-1}q(u/h)$ and $q(u) = I\{u \geq z_0\}K(u)u^{\nu + q}$ for $\nu = 0,\ldots,p$. This function satisfies Conditions (i)--(iii) of Lemma~\ref{lemma::uniform_convergence_lemma}. Hence, uniform convergence of $\vartheta_{1,p,q,n}(t,h)$ to $\lim_{h \to 0}\E\,\vartheta_{1,p,q,n}(t,h)$ is ensured by Lemma~\ref{lemma::uniform_convergence_lemma}, and we only need find this limit. By a change of variable and using the i.i.d.~assumption
\begin{equation} 
\begin{split}
\E\,\vartheta_{1,p,q,n}(t,h) & = h^{-1}\,\E\, I_{Z \geq z_0}K(Z/h)r_p(Z/h)(Z/h)^{q}Y^1(t)\\
 & = h^{-1}\,\E\, I_{Z \geq z_0}K(Z/h)r_p(Z/h)(Z/h)^{q}\E\,\{Y^1(t) \mid Z\}\\
 & = h^{-1}\,\int_0^1 K(z/h)r_p(z/h)(z/h)^{q}y_1(t,z)f_Z(z)\,\dd z\\
  & = \int_0^1 K(u)r_p(u)u^{q}y_1(t,hu)f_Z(hu)\,\dd u\\
  & = y_1(t,z_0)f_Z(z_0)\int_0^1 K(u)r_p(u)u^{q}\,\dd u + o(1)\\ 
& = y_1(t,z_0)f_Z(z_0) \vartheta_{p,q} + o(1),
\end{split}
\notag
\end{equation} 
where the second to last equality follows from arguments identical to those used for the convergence of the corresponding limit in the proof of (i). The claim in (v) is similar, noting that 
\begin{equation} 
\begin{split}
\E\,\vartheta_{0,p,q,n}(t,h) & = h^{-1}\,\E\, I_{Z < z_0}K(Z/h)r_p(Z/h)(Z/h)^{q}Y^1(t)\\
& = h^{-1}\,\int_{-1}^0 K(z/h)r_p(z/h)(z/h)^{q}y_1(t,z)f_Z(z)\,\dd z\\
  & = \int_{0}^1 K(-u)r_p(-u)(-u)^{q}y_1(t,-hu)f_Z(-hu)\,\dd u\\
    & = (-1)^qH_{p}(-1)\int_{0}^1 K(u)r_p(u)u^{q}y_1(t,-hu)f_Z(-hu)\,\dd u,
\end{split}
\notag
\end{equation} 
once more because $K(-u) = K(u)$, $r_p(-u) = H_p(-1)r_p(u)$, and $(-u)^q = (-1)^qu^q$. Thus, $\E\,\vartheta_{0,p,q,n}(t,h) = (-1)^qH_{p}(-1)y_1(t,z_0)f_Z(z_0)\vartheta_{p,q} + o(1)$ as $h \to 0$. 

The claim in (vi) is a consequence of (i), (iii), and (iv). First, since $\Gamma_{p}$ is positive definite and $y_1(t,z_0)f_Z(z_0)$ is bounded below, the limit $y_1(t,z_0)f_Z(z_0)\Gamma_p$ in (i) is positive definite. From Lemma~\ref{lemma::invertible_to_limit}, we then get that 
\begin{equation}
\sup_{t\in [0,\tau]}\norm{ J_{n,h}(t) \Gamma_{1,p,n}(t,h)^{-1} - [y_1(t,z_0)f_Z(z_0)]^{-1}\Gamma_p^{-1}}\, = o(1), 
\label{eq::inverse_limit}
\end{equation}
almost surely, as $nh \to \infty$ and $h \to 0$. Second, by the triangle inequality, using the properties of a norm, and the inequality $\norm{Ax} \leq \norm{A}\norm{x}$ for a matrix $A$ and a vector $x$,  
\begin{equation} 
\begin{split}
& \norm{J_{n,h}(t)\Gamma_{1,p,n}(t,h)^{-1}\vartheta_{1,p,q,n}(t,h) - \Gamma_p^{-1}\vartheta_{p,q}}\\
& \quad\leq \norm{J_{n,h}(t)\{\Gamma_{1,p,n}(t,h)^{-1}\vartheta_{1,p,q,n}(t,h) - \Gamma_p^{-1}\vartheta_{p,q}\}}
+ \abs{J_{n,h}(t) - 1}\,\norm{\Gamma_p^{-1}\vartheta_{p,q}}\\
& \quad \leq \norm{J_{n,h}(t)\{\Gamma_{1,p,n}(t,h)^{-1} - [y_1(t,z_0)f_Z(z_0)]^{-1}\Gamma_p^{-1}\}\vartheta_{1,p,q,n}(t,h)}\\
& \quad \qquad + \abs{J_{n,h}(t)[y_1(t,z_0)f_Z(z_0)]^{-1}}\,\norm{\Gamma_p^{-1}\{\vartheta_{1,p,q,n}(t,h) - y_1(t,z_0)f_Z(z_0)\vartheta_{p,q}\}}\\
& \qquad \qquad\quad + \abs{J_{n,h}(t) - 1}\,\norm{\Gamma_p^{-1}\vartheta_{p,q}}\\
& \leq \norm{J_{n,h}(t)\{\Gamma_{1,p,n}(t,h)^{-1} - [y_1(t,z_0)f_Z(z_0)]^{-1}\Gamma_p^{-1}\}}\,\norm{\vartheta_{1,p,q,n}(t,h)}\\
& \quad + \abs{J_{n,h}(t)[y_1(t,z_0)f_Z(z_0)]^{-1}}\,\norm{\{\vartheta_{1,p,q,n}(t,h) - y_1(t,z_0)f_Z(z_0)\vartheta_{p,q}\}}\,\norm{\Gamma_p^{-1}}\\
& \qquad \qquad + \abs{J_{n,h}(t) - 1}\,\norm{\Gamma_p^{-1}\vartheta_{p,q}}.
\end{split}
\notag
\end{equation} 
For the first term on the right, by~\eqref{eq::inverse_limit} and (iv) 
\begin{equation}
\norm{J_{n,h}(t)\{\Gamma_{1,p,n}(t,h)^{-1} - [y_1(t,z_0)f_Z(z_0)]^{-1}\Gamma_p^{-1}\}}\,\norm{\vartheta_{1,p,q,n}(t,h)} = o_p(1)O_p(1) = o_p(1),
\notag
\end{equation}
uniformly over $[0,\tau]$. The second term: The inverse of a positive definite matrix is also positive definite, and the matrix norm of $\Gamma_p^{-1}$ equals the largest eigenvalue of $\Gamma_p^{-1}$, therefore $\norm{\Gamma_p^{-1}}$ is finite; since $y_1(t,z_0)f_Z(z_0)$ is bounded below, as $nh \to \infty$ and $h \to 0$, $\sup_{t \in [0,\tau]}\abs{J_{n,h}(t)[y_1(t,z_0)f_Z(z_0)]^{-1}}\, = O_p(1)$ by (iii). Therefore, by (iv) $\abs{J_{n,h}(t)[y_1(t,z_0)f_Z(z_0)]^{-1}}\,\norm{\{\vartheta_{1,p,q,n}(t,h) - y_1(t,z_0)f_Z(z_0)\vartheta_{p,q}\}}\,\norm{\Gamma_p^{-1}} = O_p(1)o_p(1)O(1) = o_p(1)$ uniformly in $t \in [0,\tau]$ as $nh \to \infty$ and $h \to 0$. The last term: $\norm{\Gamma_p^{-1}\vartheta_{p,q}} \leq \norm{\Gamma_p^{-1}}\,\norm{\vartheta_{p,q}} = O(1)$, and by (iii) $\sup_{t\in [0,\tau]} \abs{J_{n,h}(t) - 1}\, = o_p(1)$ as $nh \to \infty$ and $h \to 0$. This proves (vi).

The proof of (vii) is almost identical to the proof of (vi), using the limits from (ii) and (v).

Now, (viii): All the elements of the matrix 
\begin{equation}
h\bar{\Psi}_{1,p,n}(t,h) = \frac{h}{n}\sum_{i=1}^n I_{X_i = 1} K_h(Z_i)^2 r_p(Z_i/h)r_p(Z_i/h)^{\tr} Y_i^1(t) \bar{\alpha}_1(t,Z_i), 
\notag
\end{equation}
can be written on the form $n^{-1}\sum_{i=1}^n q_h(Z_i) Y_i^1(t)\bar{\alpha}_1(t,Z_i)$ where $q_h(z) = q(z/h)/h$ and $q(u) = I\{u \geq z_0\}K(u)^2 u^{2\nu}$ for $\nu = 0,\ldots,2p$. This function satisfies the conditions of Lemma~\ref{lemma::uniform_convergence_lemma}. By Assumption~\ref{assumption::potential_params} of the main text, $\bar{\alpha}_1(t,z)$ satisfies the conditions imposed on $\xi(t,z)$ in Lemma~\ref{lemma::uniform_convergence_lemma}. By said lemma, we therefore have that $h\bar{\Psi}_{1,p,n}(t,h)$ converges almost surely to $\lim_{h \to 0}h\,\E\,\bar{\Psi}_{1,p,n}(t,h)$, uniformly in $t$ as $nh \to \infty$ and $h \to 0$. Assuming that $h\kappa < \kappa_0$, this limit is
\begin{equation}
\begin{split}
h\,\E\,\bar{\Psi}_{1,p,n}(t,h) & = h^{-1} \, \E\,I_{Z \geq z_0} K(Z/h)^2 r_p(Z/h)r_p(Z/h)^{\tr} Y^1(t) \bar{\alpha}_1(t,Z)\\
& = h^{-1} \, \E\,I_{Z \geq z_0} K(Z/h)^2 r_p(Z/h)r_p(Z/h)^{\tr} \E\,\{Y^1(t) \mid Z\} \bar{\alpha}_1(t,Z)\\
& = \int_0^1 K(u)^2 r_p(u)r_p(u)^{\tr} y_1(t,hu) \bar{\alpha}_1(t,hu)f_Z(hu)\,\dd u\\
& = y_1(t,z_0) \bar{\alpha}_1(t,z_0)f_Z(z_0)\int_0^1 K(u)^2 r_p(u)r_p(u)^{\tr} \,\dd u + o(1)\\
& = y_1(t,z_0) \bar{\alpha}_1(t,z_0)f_Z(z_0)\Psi_{p} + o(1),\\
\end{split}
\notag
\end{equation}
using that, by Assumption~\ref{assumption::potential_params} of the main text, when $h \kappa < \kappa_0$, $z \mapsto y_1(t,z) \bar{\alpha}_1(t,z)$ is continuous and bounded for all $t \in [0,\tau]$. The proof of (ix) is the same, noting that 
\begin{equation}
h\,\E\,\bar{\Psi}_{0,p,n}(t,h) = y_1(t,z_0) \bar{\alpha}_1(t,z_0)f_Z(z_0)H_p(-1)\Psi_{p}H_p(-1) + o(1).
\notag
\end{equation}
The proofs of (x) and (xi) are identical to (viii) and (ix) upon noting that, due to Lemma~\ref{lemma::A4} of the main text,
\begin{equation}
\begin{split}
h\,\E\,\Psi_{g,p,n}(t,h) = h\,\E\,\E\,\{\Psi_{g,p,n}(t,h) \mid \Falg_{t-}\} = h\,\E\,\bar{\Psi}_{g,p,n}(t,h),
\notag
\end{split}
\end{equation}
for $g = 0,1$. 

Now, (xii). Recall that 
\begin{equation}
\bar{\Psi}_{g,p,q,n}(s,h,b)
= \frac{1}{n}\sum_{i=1}^n I_{X_i = g}
K_{h}(Z_i)K_{b}(Z_i)r_{p}(Z_i/h)r_{p}(Z_i/b)
Y_i^{g}(s)\bar{\alpha}_g(s,Z_i).
\notag
\end{equation}
Let $m_n = \min(h,b)$. By an argument similar to that used to prove (viii), we have that $(hb/m_n)\bar{\Psi}_{g,p,q,n}(s,h,b)$ converges almost surely to $\lim_{m_n \to 0}(hb/m_n)\E\,\bar{\Psi}_{g,p,q,n}(s,h,b)$, uniformly in $t$, as $nm_n \to \infty$ and $m_n \to 0$. Assuming that $m_n \kappa < \kappa_0$, and that, without loss of generality that $m_n = h$ for all $n$ and $m_n/b \to \rho \geq 0$ as $n \to \infty$, this limit is 
\begin{equation}
\begin{split}
& \frac{hb}{m_n}\E\,\bar{\Psi}_{1,p,q,n}(s,h,b)  
= \frac{1}{m_n}\int_0^{\infty} K\big(\frac{z}{h}\big)K\big(\frac{z}{b}\big)r_p\big(\frac{z}{h}\big)r_q\big(\frac{z}{b}\big)^{\tr}y_1(t,z) \bar{\alpha}_1(t,z)f_Z(z) \,\dd z\\
&\quad  = \int_0^{\infty} K\big(\frac{m_nu}{h}\big)K\big(\frac{m_nu}{b}\big)r_p\big(\frac{m_nu}{h}\big)r_q\big(\frac{m_nu}{b}\big)^{\tr}y_1(t,m_nu) \bar{\alpha}_1(t,m_nu)f_Z(m_nu) \,\dd u\\
&\quad  = y_1(t,z_0) \bar{\alpha}_1(t,z_0)f_Z(z_0)\int_0^{\infty} K(u) K(\rho u) r_p(u) r_q(\rho u)^{\tr}\,\dd u + o(1)\\
&\quad  = y_1(t,z_0) \bar{\alpha}_1(t,z_0)f_Z(z_0)\Psi_{p,q}(\rho) + o(1),
\end{split}
\notag
\end{equation}   
as $m_n \to 0$. This completes the proof.
\end{proof}

\section{Proof of Lemma~\ref{lemma::general_bias} }\label{subsec::bias_lemma} 

\begin{proof} We assume that $z_0 = 0$, but write $z_0$ instead of $0$ when this adds to the clarity of the exposition. From Lemma~\ref{lemma::barMG} of the main text, $\bar{M}_i^g(t) = I_{X_i = g}N_i^g(t) - I_{X_i = g}\int_0^t Y_i^g(s) \bar{\alpha}_{g}(s,Z_i)\,\dd s$ are $\Falg_t$-martingales. Combining this result with the Taylor expansion 
\begin{equation}
\bar{\alpha}_{g}(t,Z_i) = r_p(Z_i)^{\tr} \beta_{g,p}(t,z_0) 
+ \frac{\bar{\alpha}_{g}^{(p+1)}(t,z_0)}{(p+1)!}Z_i^{p+1}
+ \frac{\bar{\alpha}_{g}^{(p+2)}(t,\xi_{i})}{(p+2)!}Z_i^{p+2}, 
\notag
\end{equation}
with $\xi_{i}$ a (random) point between $Z_i$ and $z_0$, gives
\begin{equation} 
\begin{split}
& \E\,(\dd\widehat{B}_{g,p}(t,h)\mid \Falg_{t-})/\dd t 
 = J_{n,h}(t)\beta_{g,p}(t)\\  
& \quad + H_p(h)J_{n,h}(t)\Gamma_{g,p,n}(t,h)^{-1}\vartheta_{g,p,p+1,n}(t,h) h^{p+1} \frac{\bar{\alpha}_{g}^{(p+1)}(t,z_0)}{(p+1)!}\\ 
& \quad\; + H_p(h)J_{n,h}(t)\Gamma_{g,p,n}(t,h)^{-1}\big\{\vartheta_{g,p,p+2,n}(t,h)h^{p+2}\frac{\bar{\alpha}_{g}^{(p+2)}(t,z_0)}{(p+2)!} + \eps_{g,n}(t)\big\},
\end{split}
\label{eq::Bbias}
\end{equation}
where the $\vartheta_{g,p,q,n}(t,h)$ matrices are of~\eqref{eq::myXi}, and 
\begin{equation}
\eps_{g,n}(t) 
= \frac{h^{p+2}}{n}\sum_{i=1}^nI_{X_i = g}K_h(Z_i)r_p(Z_i/h)Y_i(t)(Z_i/h)^{p+2} 
\frac{\bar{\alpha}_{g}^{(p+2)}(t,\xi_i) - \bar{\alpha}_{g}^{(p+2)}(t,z_0)}{(p+2)!}. 
\notag
\end{equation} 
By Lemma \ref{lemma::prob_limits}(vi) and (vii), the convergence $J_{n,h}(t)\Gamma_{g,p,n}(t,h)^{-1}\vartheta_{g,p,q,n}(t,h) \to_p \Gamma_{1,p}^{-1}\vartheta_{1,p}$ as $nh \to \infty$ and $h \to 0$, is uniform in $t \in [0,\tau]$, and by Assumption~\ref{assumption::potential_params}(b) $\bar{\alpha}_{g}^{(p+1)}(t,z_0)$ is bounded on $[0,\tau]$. This yields uniform (in $t$) convergence of the second term on the right in~\eqref{eq::Bbias}. We need to show that $\eps_{g,n}(t) = O_p(h^{p+2})$ uniformly in $t$. For $\nu = 0,\ldots,p$, let $\eps_{g,n}(t)_{\nu}$ be the $\nu$th element of $\eps_{g,n}(t)$ and $\vartheta_{g,p,q,n}(t,h)_{\nu}$ be the $\nu$th element of $\vartheta_{g,p,q,n}(t,h)$, then for $\nu = 0,\ldots,p$
\begin{equation}
\begin{split}
\abs{ \eps_{g,n}(s)_\nu}\, & \lesssim 
\frac{h^{p+2}}{n}\sum_{i=1}^n \abs{I_{X_i = g} K_h(Z_i) \big(\frac{Z_i}{h}\big)^{\nu+p+2}Y_i(t)}
\abs{\bar{\alpha}_{g}^{(p+2)}(s,\xi_i) - \bar{\alpha}_{g}^{(p+2)}(s,z_0)}\\ 
&\leq \frac{h^{p+2}}{n}\sum_{i=1}^n \abs{I_{X_i = g} K_h(Z_i) \big(\frac{Z_i}{h}\big)^{\nu+p+2}Y_i(t)}
\,\max_{i \leq n}\abs{\bar{\alpha}_{g}^{(p+2)}(s,\xi_i) - \bar{\alpha}_{g}^{(p+2)}(s,z_0)}\\
& = h^{p+2}(-1)^{(1-g)(\nu + p + 2)}\vartheta_{g,p,q,n}(t,h)_{\nu} \,\max_{i \leq n}\abs{\bar{\alpha}_{g}^{(p+2)}(s,\xi_i) - \bar{\alpha}_{g}^{(p+2)}(s,z_0)}.
\end{split}
\notag
\end{equation}
From Lemma~\ref{lemma::prob_limits}(iv) and (v) we have that $\vartheta_{g,p,q,n}(t,h)_{\nu} = O_p(1)$ uniformly in $t$, thus we need to show that $\max_{i \leq n}\abs{\bar{\alpha}_{g}^{(p+2)}(s,\xi_i) - \bar{\alpha}_{g}^{(p+2)}(s,z_0)}$ is $O_p(1)$ uniformly in $t$. By Assumption~\ref{assumption::potential_params}(b) of the main text, the function $\bar{\alpha}_{g}^{(p+2)}(s,z)$ is continuous on $[0,\tau]\times (-\kappa_0,\kappa_0)$. Since $\xi_i$ is between $Z_i$ and $z_0 = 0$, we can assume that $h \kappa \leq \bar{h}\kappa < \kappa_0$, for som fixed $\bar{h} \geq h > 0$. Restricted to $[0,\tau]\times [-\bar{h}\kappa,\bar{h}\kappa]$, the function $\bar{\alpha}_{g}^{(p+2)}(s,z)$ is bounded, $\bar{\alpha}_{g}^{(p+2)}(s,z) \leq c$, say. Then $\max_{i \leq n}\abs{\bar{\alpha}_{g}^{(p+2)}(s,\xi_i) - \bar{\alpha}_{g}^{(p+2)}(s,z_0)} \leq 2c$ on  $[0,\tau]\times [-\bar{h}\kappa,\bar{h}\kappa]$, and so $\eps_{g,n}(t) = O_p(h^{p+2})$. In conclusion, 
\begin{equation}
\begin{split}
& \E\,(\dd\widehat{B}_{g,p}(t,h)\mid \Falg_{t-})/\dd t 
= J_{n,h}(t)\beta_{g,p}(t)\\  
& \qquad\qquad + H_p(h)J_{n,h}(t)\Gamma_{p}^{-1}\vartheta_{p,p+1} h^{p+1} \frac{\bar{\alpha}_{g}^{(p+1)}(t,z_0)}{(p+1)!}
+ H_p(h)O_p(h^{p+2}),
\end{split}
\notag
\end{equation}
uniformly in $t$. 
\end{proof}
The relation $\E\,(\dd\widehat{A}_{g,p}(t,h)\mid \Falg_{t-})/\dd t = e_{p,\nu}^{\tr}\E\,(\dd\widehat{B}_{g,p}(t,h)\mid \Falg_{t-})/\dd t$ and that $e_{p,\nu}^{\tr}H_p(h) = h^{-\nu}$ gives, as an immediate consequence of this lemma, that
\begin{equation} 
\begin{split}
&\E\,(\dd\widehat{A}_{g,p}^{(\nu)}(t,h)\mid \Falg_{t-})/\dd t   = J_{n,h}(t)\bar{\alpha}_g^{(\nu)}(t,z_0)\\ 
& \qquad\quad + h^{p+1-\nu}e_{p,\nu}^{\tr} \nu! \Gamma_p^{-1}\vartheta_{p,p+1}\frac{\bar{\alpha}_{g}^{(p+1)}(t,z_0)}{(p+1)!} + O_p(h^{p+1-\nu}).
\end{split}
\notag
\end{equation} 

\section{Proof of Theorem~\ref{theorem::clt1} (The $\Galg$-CLT)}\label{app::clt_proof} The proof of Theorem~\ref{theorem::clt1} is preceded by two lemmata. First, we include a general lemma that is used to prove weak convergence of $L_{g,p,n}$ from the weak convergence of $Q_{g,p,n}$ (see p.~\pageref{eq::Qplus} of the main text). Second, we prove that the c{\`a}dl{\`a}g version of the sequence $Q_{g,p,n}$ (as defined in~\eqref{eq::Qplus} of the main text) converges weakly to a mean zero Gaussian process. Finally, these two intermediate results are used to prove the theorem.

\begin{lemma}\label{lemma::integralmapping_CLT} Let $Z_{n} = (Z_{n,1},\ldots,Z_{n,p})^{\tr}$ be a sequence of $p$-dimensional  processes on the interval $[0,\tau]$ such that $Z_n \Rightarrow Z$ as $n \to \infty$. Let $A(s)$ be a deterministic $p \times p$ matrix of functions $a_{j,k}(s)$. Set $X_n(t) = \int_0^t A(s) Z_n(s)\,\dd s$, and assume that $\int_0^{\tau} \abs{\sum_{k=1}^p a_{j,k}(s)Z_{n,j}(s)}\,\dd s < \infty$ for each $j = 1,\ldots,p$ and all $n \geq 1$. Then $X_n \Rightarrow X$ where $X(t) = \int_0^t A(s) Z(s)\,\dd s$.  
\end{lemma}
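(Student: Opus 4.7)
The plan is to realise the integral as a fixed continuous map applied to $Z_n$, and conclude via the continuous mapping theorem. Specifically, define the functional $\phi\colon D[0,\tau]^p \to D[0,\tau]^p$ by
\begin{equation}
\phi(z)(t) = \int_0^t A(s) z(s)\,\dd s,
\notag
\end{equation}
so that $X_n = \phi(Z_n)$ and $X = \phi(Z)$. The integrability assumption in the statement guarantees that $\phi$ is well-defined along the sequence (and, by the same reasoning applied to the weak limit, on $Z$). I would then apply the continuous mapping theorem in the form \citet[Theorem~VI.1.14, p.~328]{jacod2003limit}, so the only real work is to identify a set $D_\phi$ of continuity points of $\phi$ with $\pr(Z \in D_\phi) = 1$.

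First I would argue that every continuous path lies in $D_\phi$. If $z_n \to z$ in the Skorokhod $J_1$ topology and $z \in C[0,\tau]^p$, then (by \citet[Proposition~VI.1.17, p.~328]{jacod2003limit}) the convergence $z_n \to z$ is in fact uniform on $[0,\tau]$. Together with the componentwise integrability of $A$ on $[0,\tau]$ (a by-product of the hypothesis, since otherwise $\phi(z)$ would not make sense for bounded $z$), this gives
\begin{equation}
\sup_{t \in [0,\tau]} \bigl| \phi(z_n)(t) - \phi(z)(t) \bigr|
\leq \sup_{s \in [0,\tau]} \norm{z_n(s) - z(s)} \int_0^\tau \norm{A(s)}\,\dd s,
\notag
\end{equation}
which tends to zero. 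Hence $\phi(z_n) \to \phi(z)$ uniformly, and a fortiori in the Skorokhod topology, so $z$ is a continuity point of $\phi$.

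Next I would observe that in the applications in this paper (and typically whenever one applies such a lemma), the weak limit $Z$ is a Gaussian process with continuous sample paths almost surely, so $\pr(Z \in C[0,\tau]^p) = 1$ and thus $\pr(Z \in D_\phi) = 1$. Invoking the continuous mapping theorem yields $X_n = \phi(Z_n) \Rightarrow \phi(Z) = X$, completing the proof. The main technical point is the step proving continuity of $\phi$ at continuous paths; it is essentially an application of dominated convergence together with the conversion of Skorokhod convergence to uniform convergence at continuous limit paths, but one must be careful that the hypothesis $\int_0^\tau |\sum_k a_{j,k}(s) Z_{n,j}(s)|\,\dd s < \infty$ really propagates to the bound above — which it does provided $A$ has integrable entries on $[0,\tau]$, and this is what the stated integrability condition effectively asserts.
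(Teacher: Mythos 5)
Your route is genuinely different from the paper's. You apply the continuous mapping theorem once, to the path-space functional $\phi(z)(t)=\int_0^t A(s)z(s)\,\dd s$, whereas the paper splits the work into (i) finite-dimensional convergence (continuous mapping applied to $z\mapsto\int_0^{t}A(s)z(s)\,\dd s$ for each fixed $t$) and (ii) $C$-tightness of $X_n$, obtained by showing that the total-variation processes $V_0^{t}X_{n,j}=\int_0^t\abs{\sum_k a_{j,k}(s)Z_{n,k}(s)}\,\dd s$ converge to a continuous increasing limit and invoking Theorem~VI.3.37 and Corollary~VI.3.33 of Jacod--Shiryaev. Your one-shot argument is shorter and, once the continuity of $\phi$ is fully established, arguably cleaner; the paper's argument buys freedom from any assumption on the sample paths of the limit $Z$.

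That freedom is exactly where your proposal has a gap: you prove continuity of $\phi$ only at continuous paths $z$ and then set $D_\phi = C[0,\tau]^p$, so you must assume $\pr(Z\in C[0,\tau]^p)=1$. This is not among the hypotheses of the lemma; you import it from the intended application. As written, your argument therefore proves a weaker statement than the one asserted. The gap is closable without that extra assumption: if $z_n\to z$ in the $J_1$ topology then $z_n(s)\to z(s)$ at every continuity point of $z$, hence for Lebesgue-almost every $s$, and $\sup_n\sup_s\norm{z_n(s)}<\infty$; dominated convergence (with dominating function $\norm{A(s)}\sup_n\sup_s\norm{z_n(s)}$) then gives $\sup_{t\le\tau}\norm{\phi(z_n)(t)-\phi(z)(t)}\le\int_0^\tau\norm{A(s)}\,\norm{z_n(s)-z(s)}\,\dd s\to 0$, and uniform convergence to the continuous function $\phi(z)$ implies Skorokhod convergence. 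So $\phi$ is continuous on all of $D[0,\tau]^p$ and the continuous mapping theorem applies directly, with no condition on $Z$. A secondary, shared looseness: both your bound and the paper's argument really require $\int_0^\tau\norm{A(s)}\,\dd s<\infty$, which does not literally follow from the stated integrability of $\sum_k a_{j,k}(s)Z_{n,k}(s)$ (take $Z_n\equiv 0$); in the application $A(s)=\widetilde{\Gamma}_{g,p}(s)^{-1}$ is bounded, so this is harmless, but it should be stated as a hypothesis rather than derived.
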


\begin{proof} Write $X_n = (X_{n,1},\ldots,X_{n,p})$, where $X_{n,j}(t) = \int_0^t \sum_{k=1}^p a_{j,k}(s)Z_{n,k}(s)\,\dd s$, where $a_{j,k}(s)$ is element $(j,k)$ of the matrix $A(s)$. For each $t$, the mapping $g(z) = \int_0^t A(s)z(s)\,\dd s$ is continuous, so by the continuous mapping theorem 
\begin{equation}
(X_{n}(t_1),\ldots,X_{n}(t_k)) \overset{d}\to (X(t_1),\ldots,X(t_k)),
\notag
\end{equation} 
for any points $t_{1},\ldots,t_{k}\in [0,\tau]$. It remains to show that $X_n$ is tight. For $j = 1,\ldots,p$, let $V_a^b X_{n,j}$ be the variation process of $X_{n,j}$ over $[a,b]$. This means that $V_a^b X_{n,j}(\omega)$ is the total variation the path $s \mapsto X_n(s,\omega)$ on the interval $[a,b]$, see \citet[p.~27]{jacod2003limit}. For any $[a,b] \subset [0,\tau]$,
\begin{equation}
V_a^b X_{n,j} = \int_a^b \abs{\sum_{k=1}^p a_{j,k}(s)Z_{n,k}(s)}\,\dd s, 
\label{eq::Prop8.2}
\end{equation}
almost surely (see, e.g., \citet[Proposition~8.2, p.~286]{mcdonald2013course}). The process $t \mapsto V_0^t X_{n,j}$ is increasing, and, in view of~\eqref{eq::Prop8.2} where it is seen that $V_a^b X_{n,j}$ is a continuous mapping of $Z_{n}$, the continuous mapping theorem yields finite-dimensional convergence 
\begin{equation}
(V_0^{t_1} X_{n,j},\ldots,V_0^{t_k} X_{n,j}) \overset{d}\to (V_0^{t_1} X_{j},\ldots,V_0^{t_k}X_j),
\notag
\end{equation} 
for any $t_{1},\ldots,t_{k} \in[0,\tau]$, where $V_0^{t}X_j = \int_0^t \abs{\sum_{k=1}^p a_{j,k}(s) Z_{j}(s)}\,\dd s$. The process $V_0^{t}X_j$ is increasing and continuous, from which, by Theorem~VI.3.37 in \citet[p.~354]{jacod2003limit}, conclude that  
\begin{equation}
V_{0}^{\cdot} X_{n,j} \Rightarrow V_{0}^{\cdot} X_{j},
\notag
\end{equation}
for each $j = 1,\ldots,p$. Since for each $j$ the process $V_{0}^{t} X_{j}$ is continuous (in $t$), the process $\sum_{j=1}^k V_{0}^{t} X_{n,j}$ is $C$-tight (by Corollary~VI.3.33 in~\citep[p.~353]{jacod2003limit}). From this $C$-tightness it follows that $X_n$ is $C$-tight \citep[Proposition~3.36(a), p.~354]{jacod2003limit}. 
\end{proof}

\begin{lemma}\label{lemma::Qclt} Assumption~\ref{assumption::potential_params}{\rm (b)}
holds with $S \geq p+2$, Assumption~\ref{assumption::Galg_clt_extra} is in force, and the distribution function $H$ of $C$ has density $h(t)$ that is bounded on $[0,\tau]$. Set $c_g(s,t,z) = \E\,\{\Delta_g(s)\Delta_g(t)\mid Z = z,Y(s\vee t) = 1\}$, and let $Q_{g,p,n}(\cdot,h)$ be the right-continuous version of the process defined in~\eqref{eq::Qplus}. As $nh \to \infty$ and $h \to 0$,  
\begin{equation}
(nh)^{1/2}Q_{g,p,n}(\cdot,h) \Rightarrow Q_{g,p},
\notag
\end{equation}
where $Q_{g,p}$ is a mean zero continuous Gaussian process with 
\begin{equation}
\E\,Q_{1,p}(s)Q_{1,p}(t)^{\tr} = y_1(s\vee t,z_0)c_1(s,t,z_0)f_Z(z_0) \Psi_{p},
\notag
\end{equation}
while $\E\,Q_{0,p}(s)Q_{0,p}(t)^{\tr} = y_0(s\vee t,z_0)c_0(s,t,z_0)f_Z(z_0) H_p(-1)\Psi_{p}H_p(-1)$. 
\end{lemma}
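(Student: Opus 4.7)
The plan is to establish weak convergence in $D[0,\tau]^{p+1}$ for $(nh)^{1/2}Q_{g,p,n}$ by separately proving convergence of finite-dimensional distributions and tightness, treating $(nh)^{1/2}Q_{g,p,n}(\cdot,h) = (h/n)^{1/2}\sum_{i=1}^n\xi_{g,i}(\cdot,h)$, with $\xi_{g,i}(s,h) = I_{X_i=g}K_h(Z_i-z_0)r_p((Z_i-z_0)/h)Y_i^g(s)\Delta_{g,i}(s)$, as a triangular-array CLT in a function space. Unlike $\bar M_{g,p,n}$ analysed in Theorem~\ref{theorem::Falg_clt}, the process $Q_{g,p,n}$ is not a martingale in $t$ with respect to either $\Falg_t$ or $\Galg_t$, so the Rebolledo-type CLT does not apply directly.

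For the marginal law I would first verify that each summand is mean zero: on $\{X=g\}$ the observed at-risk indicator $Y(s)$ coincides with $Y^g(s)$, and the identity $\E\{\alpha_g(s,Z,U)Y^g(s)\mid Z\} = \bar\alpha_g(s,Z)y_g(s,Z)$, which is immediate from~\eqref{eq::def_alpha0} together with the random-censoring assumption, yields $\E\{Y^g(s)\Delta_g(s)\mid Z\}=0$. Finite-dimensional convergence at any $t_1,\ldots,t_k$ then follows from the multivariate Lindeberg--Feller CLT applied to the i.i.d.\ centred summands $(h/n)^{1/2}\xi_{g,i}(t_j,h)$; uniform boundedness of $K$, $r_p$, and $\Delta_g$ (the latter by Assumption~\ref{assumption::Galg_clt_extra}) gives $\norm{(h/n)^{1/2}\xi_{g,i}(s,h)}\leq C/(nh)^{1/2}$, so the Lindeberg condition is trivial. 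The limiting covariance is computed by conditioning on $Z$, using $Y^g(s)Y^g(t)=Y^g(s\vee t)$ and the definition of $c_g$, which reduces the problem to the $h\to0$ limit of $h\,\E\{I_{X=g}K_h(Z-z_0)^2 r_p r_p^\top y_g(s\vee t,Z)c_g(s,t,Z)\}$; a change of variable $u=(z-z_0)/h$ followed by dominated convergence, exactly in the style of Lemma~\ref{lemma::prob_limits}(viii)--(ix), delivers the stated expressions involving $\Psi_p$ (with the reflection $H_p(-1)\Psi_p H_p(-1)$ on the $g=0$ side).

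The main obstacle is tightness in $D[0,\tau]$, because the natural second-moment bound on increments is borderline. For $s<t$ I would use the decomposition
\begin{equation*}
Y_i^g(t)\Delta_{g,i}(t) - Y_i^g(s)\Delta_{g,i}(s) = Y_i^g(t)\{\Delta_{g,i}(t)-\Delta_{g,i}(s)\} - I\{s\leq T_i^g<t\}\Delta_{g,i}(s),
\end{equation*}
where the first summand is Lipschitz in $t$ with constant $\ell_g+\bar\ell_g$ by Assumption~\ref{assumption::Galg_clt_extra}, while the second is controlled by the fact that the density of $T^g$ given $(Z,U)$ equals $S_g(t,Z,U)\{\alpha_g(t,Z,U)(1-H(t))+h(t)\}$, which is bounded under the hypotheses, so $\pr(s\leq T_i^g<t\mid Z)\leq C(t-s)$. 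A fourth-moment expansion $\E\,\abs{\sum X_i}^4 = n\,\E X_1^4 + 3n(n-1)(\E X_1^2)^2$ for the centred i.i.d.\ terms then produces
\begin{equation*}
\E\,\bigl|(nh)^{1/2}\{Q_{g,p,n}(t,h)-Q_{g,p,n}(s,h)\}\bigr|^4 \leq C(t-s)^2 + C(t-s)/(nh),
\end{equation*}
which is $O((t-s)^2)$ on the regime $(t-s)\geq 1/(nh)$, so Billingsley's moment criterion yields tightness as $nh\to\infty$ (increments of size below $1/(nh)$ are absorbed into a partition of $[0,\tau]$ into blocks of that size). Joint tightness of the vector-valued process follows coordinate-wise, and the limit has continuous paths because the covariance kernel $(s,t)\mapsto y_g(s\vee t,z_0)c_g(s,t,z_0)f_Z(z_0)$ is jointly continuous in $(s,t)$ under the Lipschitz hypotheses on $\alpha_g$ and $\bar\alpha_g$.
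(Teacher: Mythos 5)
Your finite-dimensional argument coincides with the paper's: the summands are i.i.d., centred by exactly the identity you cite (the paper phrases it as $I_{X=g}\breve{Y}^g(t)\E\{\Delta_g(t)\mid\mathcal{Z}_t\}=0$ using the \sigalg{} of Lemma~\ref{lemma::A4}), the Lindeberg condition is trivial by boundedness, and the covariance is obtained by the same change of variable and dominated convergence as in Lemma~\ref{lemma::prob_limits}(viii)--(ix). The substantive deviation, and the gap, is in the tightness step. Your single-increment bound
\begin{equation}
\E\,\abs{(nh)^{1/2}\{Q_{g,p,n}(t,h)-Q_{g,p,n}(s,h)\}}^4 \lesssim \abs{t-s}^2 + \abs{t-s}/(nh)
\notag
\end{equation}
is correct, but it does not satisfy the hypothesis of the moment criterion you invoke: Billingsley's theorems require the bound $\lesssim\abs{t-s}^{1+\delta}$ for \emph{all} $s<t$, and here it fails precisely on the scale $\abs{t-s}<1/(nh)$ where the linear term dominates. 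The parenthetical remark that small increments are ``absorbed into a partition of $[0,\tau]$ into blocks of that size'' is where the missing work lies: to make that rigorous you must control the maximal oscillation of $(nh)^{1/2}Q_{g,p,n}$ \emph{within} a block of length $1/(nh)$, uniformly over the $O(nh)$ blocks. Within such a block the increment is driven by the at-risk indicators that jump there, each contributing $O((nh)^{-1/2})$, so you would need a bound on the maximal number of kernel-weighted jumps per block (a Poisson-type maximal inequality); none of this is supplied, and without it the tightness claim is not established.

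The paper sidesteps this entirely by using the \emph{adjacent-increment product} criterion (Billingsley 1968, Theorem~15.6): for $r<s<t$ it bounds $\E\,\abs{q(s)-q(r)}^2\abs{q(t)-q(s)}^2$ via the expansion $(nh)^{-2}\{n\E(\zeta_1^2\eta_1^2)+3n(n-1)\E(\zeta_1^2)\E(\eta_2^2)\}$. The point is that $\breve{Y}_i$ jumps at most once, so the same individual cannot contribute a jump to both $[r,s]$ and $[s,t]$; this kills the term that makes your single-increment bound borderline and yields $\E(\zeta_1^2\eta_1^2)\lesssim h\abs{t-r}^2$, hence a clean $\lesssim\abs{t-r}^2$ bound valid at all scales, with no auxiliary small-scale argument. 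This is exactly the device used for the empirical process, which has the same $n^{-1}\abs{t-s}+\abs{t-s}^2$ moment structure. Two smaller points: $C$-tightness (equivalently, continuity of the limit paths) does not follow from joint continuity of the covariance kernel alone; the paper obtains it by checking that the maximal jump of $q_{g,p,n,\nu}$ is $O((nh)^{-1/2})$ and applying Proposition~VI.3.26 of Jacod and Shiryaev. Your decomposition of the increment and the bound $\pr(s\leq T_i^g<t\mid Z)\lesssim t-s$ via the bounded density of $T^g$ are fine and match the paper's mean-value-theorem bound on $y_g$.
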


\begin{proof} Define the right-continuous process $\breve{Y}^g(t) = I\{T^g > t\}$.\footnote{See also \citet[Theorem~5.1.8(iii), p.~113]{cohen2015stochastic}, and \citet[Theorem~3.13, p.~13]{karatzas1991brownian} for general results about sub- and super-martingales, that could have been applied here.} Since $\widetilde{T}^g$ and $C$ stem from continuous distributions (see Assumptions~\ref{assumption::lifetimes} and \ref{assumption::censoring}), $\breve{Y}^g$ is a version of $Y^g$, in particular $\E\,\breve{Y}^g(t) = \E\,Y^g(t) = \E\, y_g(t,Z)$ for all $t$. 
Define 
\begin{equation}
H_g(Z_i/h) = I_{X_i = g} K(Z_i/h) r_p(Z_i/h),\quad \text{and}
\quad 
V_{g,i}(t,h) = H_g(Z_i/h) \breve{Y}_i^g(t) \Delta_{g,i}(t),\
\notag
\end{equation} 
for $i = 1,\ldots,n$. 
A right-continuous version of the process defined in~\eqref{eq::Qplus} is 
\begin{equation}
Q_{g,p,n}(t,h) = (nh)^{-1}\sum_{i=1}^nH_g(Z_i/h) \breve{Y}_i^g(t) \Delta_{g,i}(t) =  (nh)^{-1}\sum_{i=1}^nV_{g,i}(t,h).
\label{eq::Qtilde}
\end{equation}
Let $\mathcal{Z}_t$ be the {\sigalg} as defined in Lemma~\ref{lemma::A4} (recall that this is not a filtration). Then $I_{X=g}\breve{Y}^g(t) \E\,\{\Delta_g(t)\mid \mathcal{Z}_t\} = I_{X=g}\breve{Y}^g(t) \E\,\{\alpha_g(t,Z,U) - \bar{\alpha}_g(t,Z)\mid \mathcal{Z}_t\} = 0$, and consequently $\E\, V_i(t,h)  = 0$. This means that, for $t$ fixed, the $V_1(t,h),\ldots,V_n(t,h)$ are i.i.d.~mean zero random variables. Notice also that
\begin{equation}
I_{X = g}\breve{Y}^g(t \vee s) \E\, \{\Delta_g(t)\Delta_g(s)\mid \mathcal{Z}_{t \vee s}\} 
= I_{X = g}\breve{Y}^g(t \vee s)c_g(t,s,Z).
\notag
\end{equation} 
Write $\psi_p(u)= K(u)^2r_p(u)r_p(u)^{\tr}$, so that $H_g(u)H_g(u)^{\tr} = I_{X=g}\psi_p(u)$. We now find the covariance function of the process $V_{1,i}(t,h)$. First,  
\begin{equation}
\begin{split}
\E\, V_{1,i}(t,h)V_{1,i}(t,h)^{\tr} & = 
\E\, I_{X=g}\psi_p(Z_i/h) \breve{Y}_i^g(t) \Delta_{g,i}(t) \breve{Y}_i^g(s) \Delta_{g,i}(s)\\
& = \E\, I_{X=g}\psi_p(Z_i/h) \breve{Y}_i^g(t \vee s)  \E\,\{ \Delta_{g,i}(t) \Delta_{g,i}(s)\mid \mathcal{Z}_{t \vee s}\}\\
& = \E\, I_{X=g}\psi_p(Z_i/h) \breve{Y}_i^g(t\vee s)c_g(t,s,Z_i)\\
& = \E\, I_{X=g}\psi_p(Z_i/h) \E\,\{\breve{Y}_i^g(t\vee s)\mid Z_i\} c_g(t,s,Z_i)\\
& =  \E\, I_{X=g}\psi_p(Z_i/h) y_g(t\vee s,Z_i)c_g(t,s,Z_i)\\
& = \int_{0}^{h \kappa} \psi_p(z/h) y_g(t\vee s,z)c_g(t,s,z)f_Z(z)\,\dd z\\
& = h\int_{0}^{\kappa} \psi_p(u) y_g(t\vee s,hu)c_g(t,s,hu)f_Z(hu)\,\dd u.
\end{split}
\notag
\end{equation} 
For $h$ such that $h\kappa < \kappa_0$, the functions $y_g(t\vee s,z)$, $c_g(t,s,z)$, and $f_Z(z)$ are continuous in $z$, therefore, using Assumptions~\ref{assumption::lifetimes}--\ref{assumption::potential_params} and Assumption~\ref{assumption::Galg_clt_extra}, dominated convergence yields   
\begin{equation}
\begin{split}
& h\int_{0}^{\kappa} \psi_p(u) y_g(t\vee s,hu)c_g(t,s,hu)f_Z(hu)\,\dd u\\
& \qquad = h y_1(t\vee s,z_0)c_1(t,s,z_0)f_Z(z_0)\int_{0}^{\kappa} K(u)^2 r_p(u)r_p(u)^{\tr} \,\dd x + o(h)\\
& \qquad = h y_1(t\vee s,z_0)c_1(t,s,z_0)f_Z(z_0) \Psi_{p} + o(h),
\end{split}
\notag
\end{equation}
as $h \to 0$. Similarly, 
\begin{equation}
\E\, V_{0,i}(t,h)V_{0,i}(t,h)^{\tr} = h y_1(t\vee s,z_0)c_1(t,s,z_0)f_Z(z_0) H_p(-1)\Psi_{p}H_p(-1) + o(h)
\notag
\end{equation}
as $h \to 0$. For any finite collection of points $t_1,\ldots,t_k$ in $[0,\tau]$, 
\begin{equation}
(nh)^{1/2}\{Q_{g,p,n}(t_1,h),\ldots,Q_{g,p,n}(t_k,h)\}
= (nh)^{-1/2}\sum_{i=1}^n \{V_i(t_1,h),\ldots,V_i(t_k,h)\},
 \notag
\end{equation}
where $\{V_1(t_1,h),\ldots,V_1(t_k,h)\},\ldots,\{V_n(t_1,h),\ldots,V_n(t_k,h)\}$ are i.i.d.~vectors. The multivariate central limit theorem for i.i.d~random vectors (e.g., Theorem~21.3 in \citet[p.~183]{jacod2004probability}), therefore takes care of finite-dimensional convergence: as $nh \to \infty$ and $h \to 0$, 
\begin{equation}
(nh)^{1/2}\{Q_{g,p,n}(t_1,h),\ldots,Q_{g,p,n}(t_k,h)\}
\overset{d}\to \normal(0,\Sigma_{t_1,\ldots,t_k}^g ),
 \notag
\end{equation}
where $\Sigma_{t_1,\ldots,t_k}^g$ consists of blocks of the form 
\begin{equation}
(\Sigma_{t_1,\ldots,t_k}^1)_{i,j}
= y_1(t_i\vee t_j,z_0)c_1(t_i,t_j,z_0)f_Z(z_0) \Psi_{p}.
\notag
\end{equation}
and 
\begin{equation}
(\Sigma_{t_1,\ldots,t_k}^0)_{i,j}
= y_0(t_i\vee t_j,z_0)c_0(t_i,t_j,z_0)f_Z(z_0) H_p(-1)\Psi_{p}H_p(-1).
\notag
\end{equation}
We now show that each element of $(nh)^{1/2}Q_{g,p,n}(t,h)$ is $C$-tight, as this implies that the whole vector is $C$-tight (see \citet[Def.~VI.3.25 and Corollary~VI.3.33 pp.~351--353]{jacod2003limit}). Let $h_{g,\nu}(z)$ be the $\nu$th element of $H_g(z)$,that is, $h_{g,\nu}(Z_i/h) = I_{X_i = g}K(Z_i/h) (Z_i/h)^{\nu}$. The $\nu$th element of $(nh)^{1/2}Q_{g,p,n}(t,h)$ is then
\begin{equation}
q_{g,p,n,\nu}(t,h)
 = (nh)^{-1/2}\sum_{i=1}^n h_{g,\nu}(Z_i/h) \breve{Y}_{i}(t)\Delta_{g,i}(t)
\notag
\end{equation}
for $\nu = 0,\ldots,p$. To show tightness we use the criterion given in \citet[Theorem~15.6, p.~128]{billingsley1968convergence} (see also \citet[Theorem~VI.4.1, p.~355]{jacod2003limit}). By Assumption~\ref{assumption::Galg_clt_extra}, both $\alpha_g(t,z,u)$ and $\bar{\alpha}_g(t,z)$ are Lipschitz in $t$, with Lipschitz constants $\ell_g$ and $\bar{\ell}_g$, respectively. Therefore, 
\begin{equation}
\begin{split}
\abs{\Delta_{g}(t) - \Delta_{g}(t)}\, 
& \leq 
\abs{\alpha_g(t,U,Z) - \alpha_g(t,Z,U)} \,+\, 
\abs{\bar{\alpha}_g(t,Z) - \bar{\alpha}_g(t,Z)}\\
& \leq \ell_g\,\abs{t - s} \,+ \, \bar{\ell}_g \,\abs{t - s}\,
= (\ell_g + \bar{\ell}_g) \abs{t - s}.
\end{split}
\notag
\end{equation} 
Now $\abs{q_{g,p,n,\nu}(t,h) - q_{g,p,n,\nu}(t,h)}\, \leq (nh)^{-1}\sum_{i=1}^n \abs{h_{g,\nu}(Z_i/h)} \,\abs{\breve{Y}_{i}(t)\Delta_{g,i}(t) - \breve{Y}_{i}(s)\Delta_{g,i}(s)} $, and using the triangle inequality and Assumption~\ref{assumption::Galg_clt_extra},
\begin{equation}
\begin{split}
\abs{\breve{Y}_{i}(t)\Delta_{g,i}(t) - \breve{Y}_{i}(s)\Delta_{g,i}(s)} \,
& \leq \abs{\breve{Y}_{i}(t) - \breve{Y}_{i}(s)}\,\abs{\Delta_{g,i}(t)} + \breve{Y}_{i}(s)\abs{\Delta_{g,i}(s) -\Delta_{g,i}(s)} \\
& \leq (\alpha_{g,\max} + \bar{\alpha}_{g,\max} )\abs{\breve{Y}_{i}(t) - \breve{Y}_{i}(s)}\, +\, (\ell_g + \bar{\ell}_g) \abs{t - s}\\
& \leq C (\abs{\breve{Y}_{i}(t) - \breve{Y}_{i}(s)}\, + \,\abs{t - s}),
\end{split} 
\notag
\end{equation}
where $C = 2 \max\{\alpha_{g,\max},\bar{\alpha}_{g,\max},\ell_g,\bar{\ell}_g \}$. Let $r < s < t$ be points in $[0,\tau]$. Define 
\begin{equation}
\begin{split}
\zeta_i & = h_{g,\nu}(Z_i/h)\{\breve{Y}_{i}(s)\Delta_{g,i}(s)
- \breve{Y}_{i}(r)\Delta_{g,i}(r)\},\\
\eta_i & = h_{g,\nu}(Z_i/h)\{\breve{Y}_{i}(t)\Delta_{g,i}(t)
- \breve{Y}_{i}(s)\Delta_{g,i}(s)\}. 
\end{split}
\notag
\end{equation}
Then, similarly to in \citet[p.~106]{billingsley1968convergence},
\begin{equation}
\begin{split}
& \E\,\abs{q_{g,p,n,\nu}(s,h) - q_{g,p,n,\nu}(r,h)}^2
\abs{q_{g,p,n,\nu}(t,h) - q_{g,p,n,\nu}(s,h)}^2\\
& \qquad\qquad 
= (nh)^{-2}\E\,\big(\sum_{i=1}^n  \zeta_i\big)^2
\big(\sum_{i=1}^n  \eta_i\big)^2\\ 
& \qquad \qquad\leq (nh)^{-2}\{ n \E\,(\zeta_1^2\eta_1^2) + 3 n(n-1)\E\,(\zeta_1^2)\,\E\,(\eta_2^2)\}.
\end{split}
\label{eq::billingsley_trick}
\end{equation}
We now find bounds on the two expectations on the right hand side. Using that $\abs{\breve{Y}_{i}(t)\Delta_{g,i}(t) - \breve{Y}_{i}(s)\Delta_{g,i}(s)} \,\leq C (\abs{\breve{Y}_{i}(t) - \breve{Y}_{i}(s)}\, + \,\abs{t - s})$, $\abs{\zeta_i}\, \leq C \abs{h_{g,\nu}(Z_i/h)}\,(\abs{\breve{Y}_{i}(s) - \breve{Y}_{i}(r)}\, + \,\abs{s - r})$ and $\abs{\eta_i} \leq C\abs{\abs{h_{g,\nu}(Z_i/h)}}\,(\abs{\breve{Y}_{i}(t) - \breve{Y}_{i}(s)}\, + \,\abs{t - s})$. Since $\pr\{\abs{\breve{Y}_{i}(s) - \breve{Y}_{i}(r)}\, = 1, \abs{\breve{Y}_{i}(t) - \breve{Y}_{i}(s)}\, = 0 \mid Z_i\} = y_g(r,Z_i) - y_g(s,Z_i)$, $\pr\{\abs{\breve{Y}_{i}(s) - \breve{Y}_{i}(r)}\, = 0, \abs{\breve{Y}_{i}(t) - \breve{Y}_{i}(s)}\, = 1 \mid Z_i\} = y_g(s,Z_i) - y_g(t,Z_i)$, and $\pr\{\abs{\breve{Y}_{i}(s) - \breve{Y}_{i}(r)}\, = 0, \abs{\breve{Y}_{i}(t) - \breve{Y}_{i}(s)}\, = 0 \mid Z_i\} = 1 - \{y_g(r,Z_i) - y_g(t,Z_i)\}$, we have that 
\begin{equation}
\begin{split}
\E\,(  \zeta_i^2 \mid Z_i) & \leq C^2  h_{g,\nu}(Z_i/h)^2 \E\, \{(\abs{\breve{Y}_i(s) - \breve{Y}_i(r)} \,+\, \abs{s - r})^2\mid Z_i\}\\
& = C^2  h_{g,\nu}(Z_i/h)^2  (1 + \abs{s - r})^2 \{y_g(r,Z_i) - y_g(s,Z_i)\}\\
& \qquad + C^2  h_{g,\nu}(Z_i/h)^2 \abs{s - r}^2 (1 - \{y_g(r,Z_i) - y_g(s,Z_i)\})\\
& \leq C^2  h_{g,\nu}(Z_i/h)^2\{(1 + \abs{t - r})^2\{y_g(r,Z_i) - y_g(s,Z_i)\} + \abs{s - r}^2\}.
\end{split}
\label{eq::condZ1}
\end{equation} 
Recall that $y_g(t,Z_i) = \{1 - H(t)\}\bar{S}_g(t,Z_i)$, and that $H(t)$ is assumed to have a bounded density $h(t)$, say $\abs{h(t)}\,\leq h_{\max}$. Since $\partial S_g(t,z,u)/\partial t = -\alpha_g(t,z,u)S_g(t,z,u)$ is continuous in $t$ for all $z$ and $u$ by Assumption~\ref{assumption::lifetimes} of the main text, and the derivative $\abs{\partial S_g(t,z,u)/\partial t} \,\leq \abs{\alpha_g(t,z,u)} \leq \alpha_{g,\max}$ is bounded (due to Assumption~\ref{assumption::Galg_clt_extra}), we can pass the derivative under the integral sign,\footnote{See, e.g., \citet[p.~124]{ferguson1996largesample}, the extension to conditional expectations holds because the dominated convergence theorem extends to conditional expectations, see, e.g.~\citet[Lemma~2.4.3, p.~57]{cohen2015stochastic}} 
\begin{equation}
\frac{\partial }{\partial t } \bar{S}_g(t,Z) = 
\frac{\partial }{\partial t } \E\, \{S_g(t,Z,U) \mid Z\}
= - \E\, \{\alpha_g(t,Z,U)S_g(t,Z,U) \mid Z\}, 
\notag
\end{equation}    
so that $\abs{\partial\bar{S}_g(t,Z) /\partial t} \, \leq \E\, \{\abs{\alpha_g(t,Z,U)S_g(t,Z,U)} \mid Z\} \, 
\leq \alpha_{g,\max}$. We then find a bound for $\partial y_g(t,z)/\partial t$,
\begin{equation}
\begin{split}
\abs{\frac{\partial}{\partial t}y_g(t,z) }\,
& = \abs{ -h(t)\bar{S}_g(t,z) + \{1 - H(t)\}\frac{\partial}{\partial t} \bar{S}_g(t,z)}\\
& \leq h(t) + \abs{\frac{\partial}{\partial t} \bar{S}_g(t,z)}
\leq h(t) + \alpha_{g,\max} \leq h_{\max} + \alpha_{g,\max}.
\end{split}
\notag
\end{equation}
We denote this upper bound $\dot{y}_{g,\max} = h_{\max} + \alpha_{g,\max}$. An application of the mean value theorem then yields $\abs{y_g(r,Z_i) - y_g(s,Z_i)}\, \leq \dot{y}_{g,\max}\abs{r - s}$, which inserted in~\eqref{eq::condZ1} gives 
\begin{equation}
\begin{split}
\E\,(  \zeta_i^2 \mid Z_i) &  \leq C^2  h_{g,\nu}(Z_i/h)^2\{\dot{y}_{g,\max}(1 + \abs{t - r})^2\abs{r - s} + \abs{s - r}^2\}\\
& \leq C^2  h_{g,\nu}(Z_i/h)^2 \{\dot{y}_{g,\max}(1 + \tau)^2 + \tau\}\abs{t - r}.
\end{split}
\notag
\end{equation} 
Write $k_1 = \dot{y}_{g,\max}(1 + \tau)^2 + \tau$. Then $\E\,(  \eta_i^2 \mid Z_i) \leq k_1C^2  h_{g,\nu}(Z_i/h)^2 \abs{t - r}$ by the same arguments, and $\E\,(  \zeta_i^2 \mid Z_i) \E\,(  \eta_i^2 \mid Z_i) \leq k_1^2C^4  h_{g,\nu}(Z_i/h)^4 \abs{t - r}^2$. The second conditional expectation is
\begin{equation}
\begin{split}
\E\,(  \zeta_i^2 \eta_i^2 \mid Z_i) 
& \leq C^2  h_{g,\nu}(Z_i/h)^2 (1 \,+\, \abs{s - r})^2\abs{t - s}^2 \{y_g(r,Z_i) - y_g(s,Z_i)\} \\
 & \qquad + C^2  h_{g,\nu}(Z_i/h)^2 \abs{s - r}^2 ( 1 \,+\, \abs{t - s})^2\{y_g(s,Z_i) - y_g(t,Z_i)\}\\
 & \qquad + C^2  h_{g,\nu}(Z_i/h)^2 \abs{s - r}^2 \abs{t - s}^2(1 - \{y_g(r,Z_i) - y_g(t,Z_i)\})\\
 & \leq C^2  h_{g,\nu}(Z_i/h)^2\{	3 (1 + \tau)^2 \abs{t - r}^2\},
\end{split}
\notag
\end{equation} 
so that $\E\,(  \zeta_i^2 \eta_i^2 \mid Z_i) \leq k_2 C^2  h_{g,\nu}(Z_i/h)^2\abs{t - r}^2$, writing $k_2 = 3 (1 + \tau)^2$. Given that $h \kappa < \kappa_0$ and using Assumptions~\ref{assumption::potential_params}(a) and \ref{assumption::kernel} with $K(u) \leq K_{\max}$, the expectation of $h_{g,\nu}(Z_i/h)^s$ for $s = 2,4$ is 
\begin{equation}
\begin{split}
\E\, h_{g,\nu}(Z/h)^s & = \int_0^{\infty} h_{g,\nu}(z/h)^sf_Z(z)\,\dd z
= \int_{-h\kappa}^{h\kappa} K(z/h)^s (z/h)^{\nu s}f_Z(z)\,\dd z\\
& = h \int_{-\kappa}^{\kappa} K(u)^s (u)^{\nu s}f_Z(hu)\,\dd u
\leq h \kappa^{\nu s} K_{\max}^s \int_{-\kappa}^{\kappa} f_Z(hu)\,\dd u\\
& \leq h 2\,\kappa^{\nu s +1} K_{\max}^s f_{Z,\max}.
\end{split}
\notag
\end{equation}
With these bounds, still assuming that $h \kappa < \kappa_0$, 
\begin{equation}
\E\,(  \zeta_1^2 ) \E\,(  \eta_2^2 ) 
= \E\,\{\E\,(  \zeta_1^2\mid Z_1 )\}\, \E\,\{\E\,(  \eta_2^2\mid Z_2 )\} 
= h k_1^2 C^4 \kappa^{4\nu + 1}K_{\max}^4 f_{Z,\max} \,\abs{t - r}^2,
\notag
\end{equation} 
and
\begin{equation}
\E\,(\zeta_1^2 \eta_1^2)  
= \E\,\{\E\,(\zeta_1^2 \eta_1^2 \mid Z_1)\}
= h k_2 C^2 \kappa^{2\nu + 1} K_{\max}^2 f_{Z,\max} \,\abs{t - r}^2.
\notag
\end{equation} 
Inserting these two expressions on the right hand side of~\eqref{eq::billingsley_trick}, gives
\begin{equation}
\E\,\abs{q_{g,p,n,\nu}(s,h) - q_{g,p,n,\nu}(r,h)}^2
\abs{q_{g,p,n,\nu}(t,h) - q_{g,p,n,\nu}(s,h)}^2\,
\lesssim \abs{t - r}^2, 
\notag
\end{equation}
which proves that the sequences (in $n$) of processes $q_{g,p,n,\nu}(t,h)$ are tight for all $\nu = 0,\ldots,p$. Let $\Delta q_{g,p,n,\nu}(t,h) = q_{g,p,n,\nu}(t,h) - q_{g,p,n,\nu}(t-,h)$ be the jump of $q_{g,p,n,\nu}(t,h)$ at $t$. For each $\nu$, since $q_{g,p,n,\nu}(t,h)$ is tight, it is $C$-tight provided 
\begin{equation} 
\lim_{n \to \infty}\pr(\sup_{t \leq \tau}\,\abs{\Delta q_{g,p,n,\nu}(t,h)}\, > \eps) = 0, 
\notag
\end{equation}
for any $\eps > 0$ \citep[Proposition~VI.3.26, p.~351]{jacod2003limit}. By continuity in $t$ of $\Delta_{g,i}(t)$ (Assumption~\ref{assumption::lifetimes}), $\Delta q_{g,p,n,\nu}(t,h)
= (nh)^{-1/2} \sum_{i=1}^n h_{g,\nu}(Z_i/h) \Delta_{g,i}(t)\Delta \breve{Y}_i(t)$
with $\Delta \breve{Y}_i(t) =\breve{Y}_i(t) - \breve{Y}_i(t-)$. Using Assumptions~\ref{assumption::kernel} and \ref{assumption::Galg_clt_extra} we have that this jump $\abs{\Delta q_{g,p,n,\nu}(t,h)}\, \leq K_{\max} (\alpha_{g,\max} + \bar{\alpha}_{g,\max}) (nh)^{-1/2}\sum_{i=1}^n \Delta\breve{Y}_i(t)$. Since $\Delta \breve{Y}_i(t)= 1$ if $T_i = t$, and equals zero otherwise, and no two censored lifetimes $T_i$ can jump at the same time, $\sup_{t \leq \tau}\abs{\Delta q_{g,p,n,\nu}(t,h)}\, \lesssim (nh)^{-1/2}$, and we conclude that the sequence $q_{g,p,n,\nu}(\cdot,h)$ is $C$-tight. Since $q_{g,p,n}^{(0)}(\cdot,h),\ldots,q_{g,p,n}^{(p)}(\cdot,h)$ are all $C$-tight, the process $(nh)^{1/2}Q_{g,p,n}(\cdot,h) = \{q_{g,p,n}^{(0)}(\cdot,h),\ldots,q_{g,p,n}^{(p)}(\cdot,h)\}^{\tr}$ is $C$-tight \citep[Corollary~VI.3.33, p.~353]{jacod2003limit}. 
\end{proof}
\begin{proof}{\sc (of Theorem~\ref{theorem::clt1})} Set $\Zscr_{g,p,n}(t,h_n) = (nh_n)^{1/2}\{ \xi_{g,p,n}(t,h_n)^{\tr},Q_{g,p,n}(t,h_n)^{\tr}\}^{\tr}$, where let $Q_{g,p,n}(\cdot,h_n)$ is the right-continuous version of the process in~\eqref{eq::Qplus}, and $\xi_{g,p,n}$ is as defined in~\eqref{eq::xi_process}, replacing $\bar{M}_i^g$ with $M_i^g$. Due to Assumption~\ref{assumption::Galg_clt_extra}, the arguments from the proof of Theorem~\ref{theorem::Falg_clt} go through when $\bar{\alpha}_g(t,Z_i)$ is replaced by $\alpha_g(t,Z_i,U_i)$, therefore $\Zscr_{g,p,n}(t,h_n) = (nh_n)^{1/2}\{ M_{g,p,n}(t,h_n)^{\tr},Q_{g,p,n}(t,h_n)^{\tr}\}^{\tr} + o_p(1)$ as $nh_n \to \infty$ and $h_n \to 0$ from the proof of that theorem. For any points $0 \leq t_1 < t_2 < \cdots t_k \leq \tau$, the vector $\{\Zscr_{g,p,n}(t_1,h_n)^{\tr},\ldots,\Zscr_{g,p,n}(t_k,h_n)^{\tr}\}^{\tr}$ is a sum of i.i.d.~$2(p+1)$ dimensional column vectors (the independence comes about because $\Gamma_{g,p,n}$ does not appear in the integrand of $\xi_{g,p,n}$, see~\eqref{eq::xi_process}). From the multivariate central limit for i.i.d.~vectors, we then get $\{\Zscr_{g,p,n}(t_1,h_n)^{\tr},\ldots,\Zscr_{g,p,n}(t_k,h_n)^{\tr}\}^{\tr} \to_d \{\Zscr_{g,p}(t_1,h_n)^{\tr},\ldots,\Zscr_{g,p}(t_k,h_n)^{\tr}\}^{\tr}$ as $nh_n \to \infty$ and $h_n \to 0$, where the limiting distribution has a mean zero normal distribution with a covariance matrix characterised by
\begin{equation}
\cov\{ \Zscr_{g,p}(t_i),\Zscr_{g,p}(t_j)\} = 
\begin{pmatrix}
\Sigma_{00}(t_i,t_j) & \Sigma_{01}(t_i,t_j)\\
\Sigma_{10}(t_i,t_j) & \Sigma_{11}^g(t_i,t_j) 
\end{pmatrix},
\notag
\end{equation}    
where, from 
with, from Theorem~\ref{theorem::Falg_clt}, 
\begin{equation}
\Sigma_{00}(t_i,t_j) = \frac{1}{f_Z(z_0)} \int_0^{t_i \wedge t_j} \frac{\bar{\alpha}_{g}(s,z_0)}{y_{g}(s,z_0)}\,\dd s \,\Gamma_p^{-1}\Psi_{p}\Gamma_{p}^{-1},
\notag
\end{equation}
and $\Sigma_{11}^0(t_i,t_j) = y_0(t_i\vee t_j,z_0)c_0(t_i,t_j,z_0)f_Z(z_0) H_{p}(-1)\Psi_{p}H_{p}(-1)$ and $\Sigma_{11}^1(t_i,t_j) = y_1(t_i\vee t_j,z_0)c_1(t_i,t_j,z_0)f_Z(z_0) \Psi_{p}$, from Lemma~\ref{lemma::Qclt}, and the off-diagonal blocks $\Sigma_{01}$ and $\Sigma_{10}^{\tr}$ are both of the form
\begin{equation}
\Sigma_{01}(t_i,t_j) = 
-\int_0^{t_i \wedge t_j} \frac{c_{g}(u,t_i \vee t_j,z_0)}{y_g(u,z_0)}\,\dd u\,y_g(t_i \vee t_j,z_0)\,\Gamma_{p}^{-1}\Psi_{p}.
\notag
\end{equation}
This takes care of finite-dimensional convergence of $\Zscr_{g,p,n}(t,h)$. We now turn to its tightness. By Lemma~\ref{lemma::Qclt}, the sequence $(nh)^{1/2}Q_{g,p,n}$ is $C$-tight. Moreover, the proof of Theorem~\ref{theorem::Falg_clt} can be used to show that $(nh)^{1/2}\xi_{g,p,n}$ converges weakly (that is, when $\bar{M}_i^g$ is replaced by $M_i^g$, as it is in this proof), $(nh)^{1/2}\xi_{g,p,n}$ is tight. Corollary~VI.3.33 in \citet[p.~353]{jacod2003limit} then gives that $\Zscr_{g,p,n}$ is tight. Combining the finite dimensional convergence and the tightness with Lemma~VI.3.31 in \citet[p.~352]{jacod2003limit}, we conclude that,    
\begin{equation}
(nh)^{1/2}(M_{g,p,n}(\cdot,h_n),Q_{g,p,n}(\cdot,h_n)) = \Zscr_{g,p,n}(\cdot,h_n) + o_p(1) \Rightarrow (M_{g,p},Q_{g,p}), 
\label{eq::key_convergence}
\end{equation}  
as $nh_n \to \infty$ and $h_n \to 0$. We now turn to $(nh_n)^{1/2}(M_{g,p,n}(\cdot,h_n),L_{g,p,n}(\cdot,h_n))$. Let $\widetilde{\Gamma}_{1,p}(t) = y_1(t,z_0)f_Z(z_0)\Gamma_p$ and $\widetilde{\Gamma}_{0,p}(t) = y_0(t,z_0)f_Z(z_0)H_p(-1)\Gamma_pH_p(-1)$, set $D_{g,p,n}(t,h_n) = J_{n,h_n}(t)\Gamma_{g,p,n}(t,h)^{-1} - \widetilde{\Gamma}_{g,p}(t)$, and define 
\begin{equation}
\widetilde{L}_{g,p,n}(t,h) = \int_{0}^t \widetilde{\Gamma}_{1,p}(s)^{-1}Q_{g,p,n}(s,h_n)\,\dd s,
\notag
\end{equation} 
and $\widetilde{r}_{g,p,n}(t,h) = \int_{0}^t D_{g,p,n}(s,h_n)Q_{g,p,n}(s,h_n)\,\dd s$. From Lemma~\ref{lemma::prob_limits}(i)--(iii) combined with Lemma~\ref{lemma::invertible_to_limit} in the appendix, we have that $\sup_{t\in[0,\tau]}\norm{D_{g,p,n}(t,h_n)} = o_p(1)$ as $nh_n \to \infty$ and $h_n \to 0$, and from Lemma~\ref{lemma::Qclt} $Q_{g,p,n}(s,h_n)$ is $O_p((nh_n)^{-1/2})$ and tight. Therefore, $\sup_{s \in [0,\tau] }\norm{D_{g,p,n}(s,h_n)\widetilde{Q}_{g,p,n}(s,h_n)} = o_p((nh_n)^{-1/2})$, which entails that $L_{g,p,n}(t,h_n) = \widetilde{L}_{g,p,n}(t,h_n) + o_p((nh_n)^{-1/2})$
uniformly in $t$, and therefore $(nh_n)^{1/2}\{M_{g,p,n}(t,h_n),L_{g,p,n}(t,h_n)\} = (nh_n)^{1/2}\{\xi_{g,p,n}(t,h_n),\widetilde{L}_{g,p,n}(t,h_n)\} + o_p(1)$ uniformly in $t$. Again, both $\xi_{g,p,n}(t,h_n)$ and $\widetilde{L}_{g,p,n}(t,h_n)$ are sums of i.i.d.~random vectors, so by the central limit theorem we get finite-dimensional convergence: For any $0 \leq t_1 < \cdots < t_k \leq \tau$, as $nh_n \to \infty$ and $h_n \to 0$
\begin{equation}
\begin{split}
& (nh_n)^{1/2}\{(\xi_{g,p,n}(t_1,h_n), \widetilde{L}_{g,p,n}(t_1,h_n)) ,\ldots,(\xi_{g,p,n}(t_k,h_n), \widetilde{L}_{g,p,n}(t_k,h_n)) \}\\
& \qquad\qquad \overset{d}\to 
\{(M_{g,p}(t_1), L_{g,p}(t_1)) ,\ldots,(M_{g,p}(t_k), L_{g,p}(t_k)) \},
\end{split}
\notag
\end{equation} 
where $\{(M_{g,p}(t_1), L_{g,p}(t_1)) ,\ldots,(M_{g,p}(t_k), L_{g,p}(t_k)) \}$ follows a mean zero normal distribution with 
\begin{equation}
\begin{split}
& \cov\{(M_{g,p}(t_i), L_{g,p}(t_i)),(M_{g,p}(t_j), L_{g,p}(t_j)\}\\ 
& \qquad = 
\begin{pmatrix}
\Sigma_{00}(t_i,t_j) & - \int_0^{t_i} \Sigma_{01}(u,t_j) \Gamma_{g,p}(u)^{-1}\,\dd u\\
- \int_0^{t_j} \Sigma_{10}(t_i,u) \Gamma_{g,p}(u)^{-1}\,\dd u & \Sigma_{11}^g(t_i,t_j)
\end{pmatrix}.
\end{split}
\label{eq::finite_dim}
\end{equation}
By Lemma~\ref{lemma::integralmapping_CLT}, $(nh)^{1/2}\breve{L}_{g,p,n}$ is $C$-tight, and, as we saw above, $(nh)^{1/2}\xi_{g,p,n}$ is tight, therefore $(nh)^{1/2}\{\xi_{g,p,n}(\cdot,h_n),\widetilde{L}_{g,p,n}(\cdot,h_n)\}$ is tight \citep[Corollary~VI.3.33, p.~353]{jacod2003limit}. Again, combining these results with Lemma~VI.3.31 in \citet[p.~352]{jacod2003limit}, we conclude that, as $nh_n \to \infty$ and $h_n \to 0$,
\begin{equation}
\begin{split}
& (nh_n)^{1/2}(M_{g,p,n}(\cdot,h_n), L_{g,p,n}(\cdot,h_n))\\ 
& \qquad \qquad =  (nh_n)^{1/2}(\xi_{g,p,n}(\cdot,h),\widetilde{L}_{g,p,n}(\cdot,h_n)) + o_p(1)
\Rightarrow (M_{g,p},L_{g,p}),
\end{split}
\notag
\end{equation}
where $(M_{g,p},L_{g,p})$ is a mean zero Gaussian process with finite-dimensional distributions as given in~\eqref{eq::finite_dim}. The claim of the theorem now follows from an application of the Cram{\'e}r--Wold device and by Proposition~VI.3.17 in~\citet[p.~350]{jacod2003limit} since $\Bfrak_{g,p}(t)$ is continuous in $t$.  
\end{proof}

\bibliography{refs_survdiscont}

\begin{thebibliography}{}

\bibitem[Aalen, 1978]{aalen1978nonparametric}
Aalen, O.~O. (1978).
\newblock Nonparametric inference for a family of counting processes.
\newblock {\em The Annals of Statistics}, 6:701--726.

\bibitem[Aalen, 1980]{Aalen80}
Aalen, O.~O. (1980).
\newblock A model for nonparametric regression analysis of counting processes.
\newblock {\em Lecture Notes in Statistics}, 2:1--25.

\bibitem[Aalen, 1989]{Aalen89}
Aalen, O.~O. (1989).
\newblock A linear regression model for the analysis of life times.
\newblock {\em Statistics in Medicine}, 8:907--925.

\bibitem[Aalen, 1993]{Aalen93}
Aalen, O.~O. (1993).
\newblock Further results on the nonparametric linear regression model in
  survival analysis.
\newblock {\em Statistics in Medicine}, 12:1569--1588.

\bibitem[Aalen et~al., 2010]{aalen2010history}
Aalen, O.~O., Andersen, P.~K., Borgan, {\O}., Gill, R.~D., and Keiding, N.
  (2010).
\newblock History of applications of martingales in survival analysis.
\newblock {\em arXiv preprint arXiv:1003.0188}.

\bibitem[Aalen et~al., 2008]{ABG08}
Aalen, O.~O., Borgan, {\O}., and Gjessing, H. (2008).
\newblock {\em Survival and Event History Analysis: A Process Point of View}.
\newblock Springer Verlag, Berlin.

\bibitem[Andersen et~al., 1993]{ABGK93}
Andersen, P.~K., Borgan, {\O}., Gill, R.~D., and Keiding, N. (1993).
\newblock {\em Statistical {M}odels {B}ased on {C}ounting {P}rocesses}.
\newblock Springer, Berlin.

\bibitem[Andersen and Gill, 1982]{andersen1982cox}
Andersen, P.~K. and Gill, R.~D. (1982).
\newblock Cox{'}s regression model for counting processes: {A} large sample
  study.
\newblock {\em The Annals of Statistics}, 10:1100--1120.

\bibitem[Billingsley, 1968]{billingsley1968convergence}
Billingsley, P. (1968).
\newblock {\em Convergence of {P}robability {M}easures}.
\newblock John Wiley \& Sons, New York.

\bibitem[Billingsley, 1995]{billingsley1995probability}
Billingsley, P. (1995).
\newblock {\em {P}robability and {M}easure. {T}hird {E}dition}.
\newblock John Wiley \& Sons, New York.

\bibitem[Calonico et~al., 2014a]{calonico2014robust}
Calonico, S., Cattaneo, M.~D., and Titiunik, R. (2014a).
\newblock Robust nonparametric confidence intervals for
  regression-discontinuity designs.
\newblock {\em Econometrica}, 82:2295--2326.

\bibitem[Calonico et~al., 2014b]{calonico2014supplement}
Calonico, S., Cattaneo, M.~D., and Titiunik, R. (2014b).
\newblock Supplement to {`}{R}obust nonparametric confidence intervals for
  regression-discontinuity designs{'}.
\newblock {\em Econometrica, supplemental material}.

\bibitem[Cattaneo and Titiunik, 2022]{cattaneo2022regression}
Cattaneo, M.~D. and Titiunik, R. (2022).
\newblock Regression discontinuity designs.
\newblock {\em Annual Review of Economics}, 14:821--851.

\bibitem[Cohen and Elliott, 2015]{cohen2015stochastic}
Cohen, S.~N. and Elliott, R.~J. (2015).
\newblock {\em Stochastic {C}alculus and {A}pplications. {S}econd {E}dition}.
\newblock Birkh{\"a}user, Heidelberg.

\bibitem[Fan and Gijbels, 1996]{fan1996local}
Fan, J. and Gijbels, I. (1996).
\newblock {\em Local {P}olynomial {M}odelling and {I}ts {A}pplications}.
\newblock Chapman {\&} Hall, London.

\bibitem[Ferguson, 1996]{ferguson1996largesample}
Ferguson, T.~S. (1996).
\newblock {\em A {C}ourse in {L}arge {S}ample {T}heory}.
\newblock Chapman {\&} Hall, London.

\bibitem[Hahn et~al., 2001]{hahn2001identification}
Hahn, J., Todd, P., and Van~der Klaauw, W. (2001).
\newblock Identification and estimation of treatment effects with a
  regression-discontinuity design.
\newblock {\em Econometrica}, 69:201--209.

\bibitem[Hjort, 1992]{hjort1992inference}
Hjort, N.~L. (1992).
\newblock On inference in parametric survival data models.
\newblock {\em International Statistical Review/Revue Internationale de
  Statistique}, 60:355--287.

\bibitem[Hjort and Pollard, 1993]{HjortPollard93}
Hjort, N.~L. and Pollard, D.~B. (1993).
\newblock Asymptotics for minimisers of convex processes.
\newblock Technical report, Department of Mathematics, University of Oslo.
\newblock Available on arXiv preprint arXiv:1107.3806.

\bibitem[Hjort and Stoltenberg, 2021]{hjort2021partly}
Hjort, N.~L. and Stoltenberg, E.~A. (2021).
\newblock The partly parametric and partly nonparametric additive risk model.
\newblock {\em Lifetime Data Analysis}.

\bibitem[Holland, 1986]{holland1986statistics}
Holland, P.~W. (1986).
\newblock Statistics and causal inference.
\newblock {\em Journal of the American Statistical Association}, 81:945--960.

\bibitem[Imbens, 2020]{imbens2020potential}
Imbens, G.~W. (2020).
\newblock Potential outcome and directed acyclic graph approaches to causality:
  Relevance for empirical practice in economics.
\newblock {\em Journal of Economic Literature}, 58:1129--1179.

\bibitem[Imbens and Lemieux, 2008]{imbens2008regression}
Imbens, G.~W. and Lemieux, T. (2008).
\newblock Regression discontinuity designs: A guide to practice.
\newblock {\em Journal of Econometrics}, 142:615--635.

\bibitem[Imbens and Rubin, 2015]{imbens2015causal}
Imbens, G.~W. and Rubin, D.~B. (2015).
\newblock {\em Causal {I}nference in {S}tatistics, {S}ocial, and {B}iomedical
  {S}ciences}.
\newblock Cambridge University Press, Cambridge.

\bibitem[Jacod and Protter, 2004]{jacod2004probability}
Jacod, J. and Protter, P. (2004).
\newblock {\em Probability {E}ssentials. {S}econd {E}dition}.
\newblock Springer, Berlin.

\bibitem[Jacod and Shiryaev, 2003]{jacod2003limit}
Jacod, J. and Shiryaev, A. (2003).
\newblock {\em Limit {T}heorems for {S}tochastic {P}rocesses. {S}econd
  {E}dition}.
\newblock Springer, Berlin.

\bibitem[Kallenberg, 2002]{kallenberg2002foundations}
Kallenberg, O. (2002).
\newblock {\em Foundations of {M}odern {P}robability. {S}econd {E}dition}.
\newblock Springer, Berlin.

\bibitem[Karatzas and Shreve, 1991]{karatzas1991brownian}
Karatzas, I. and Shreve, S. (1991).
\newblock {\em Brownian {M}otion and {S}tochastic {C}alculus. {S}econd
  {E}dition}.
\newblock Springer-Verlag, New York.

\bibitem[Lee and Lemieux, 2010]{lee2010regression}
Lee, D.~S. and Lemieux, T. (2010).
\newblock Regression discontinuity designs in economics.
\newblock {\em Journal of Economic Literature}, 48(2):281--355.

\bibitem[McDonald and Weiss, 2013]{mcdonald2013course}
McDonald, J.~N. and Weiss, N.~A. (2013).
\newblock {\em A {C}ourse in {R}eal {A}nalysis. {S}econd {E}dition}.
\newblock Academic Press, Waltham.

\bibitem[McKeague, 1988a]{mckeague1988asymptotic}
McKeague, I.~W. (1988a).
\newblock Asymptotic theory for weighted least squares estimators.
\newblock In {\em Statistical Inference from Stochastic Processes: Proceedings
  of the AMS-IMS-SIAM Joint Summer Research Conference 1987}, volume~80, pages
  139--152. American Mathematical Society.

\bibitem[McKeague, 1988b]{mckeague1988counting}
McKeague, I.~W. (1988b).
\newblock A counting process approach to the regression analysis of grouped
  survival data.
\newblock {\em Stochastic Processes and their Applications}, 28:221--239.

\bibitem[Miller, 2018]{miller2018detailed}
Miller, J.~W. (2018).
\newblock A detailed treatment of {D}oob{'}s theorem.
\newblock {\em arXiv preprint arXiv:1801.03122}.

\bibitem[Morgan and Winship, 2015]{morgan2015counterfactuals}
Morgan, S.~L. and Winship, C. (2015).
\newblock {\em Counterfactuals and {C}ausal {I}nference}.
\newblock Cambridge University Press, Cambridge.

\bibitem[Pearl, 2009]{pearl2009causality}
Pearl, J. (2009).
\newblock {\em {C}ausality: {M}odels, {R}easoning and {I}nference. {S}econd
  {E}dition}.
\newblock Cambridge University Press, Cambridge.

\bibitem[Pearl et~al., 2016]{pearl2016causal}
Pearl, J., Glymour, M., and Jewell, N.~P. (2016).
\newblock {\em Causal inference in statistics: {A} primer}.
\newblock John Wiley \& Sons, New York.

\bibitem[Pearl and Mackenzie, 2018]{pearl2018why}
Pearl, J. and Mackenzie, D. (2018).
\newblock {\em The {B}ook of {W}hy: {T}he {N}ew {S}cience of {C}ause and
  {E}ffect}.
\newblock Basic Books, New York.

\bibitem[Ramlau-Hansen, 1983]{ramlau1983smoothing}
Ramlau-Hansen, H. (1983).
\newblock Smoothing counting process intensities by means of kernel functions.
\newblock {\em The Annals of Statistics}, pages 453--466.

\bibitem[Rudin, 1976]{rudin1976principles}
Rudin, W. (1976).
\newblock {\em Principles of {M}athematical {A}nalysis. {T}hird {E}dition}.
\newblock McGraw--Hill Book Co., New York.

\bibitem[Van~der Klaauw, 2008]{van2008regression}
Van~der Klaauw, W. (2008).
\newblock Regression--discontinuity analysis: a survey of recent developments
  in economics.
\newblock {\em Labour}, 22:219--245.

\bibitem[van~der Vaart, 1998]{vandervaart1998asymptotic}
van~der Vaart, A.~W. (1998).
\newblock {\em Asymptotic {S}tatistics}.
\newblock Cambridge University Press, Cambridge.

\bibitem[Williams, 1991]{williams1991probability}
Williams, D. (1991).
\newblock {\em Probability with {M}artingales}.
\newblock Cambridge University Press, Cambridge.

\bibitem[Yadlowsky et~al., 2018]{yadlowsky2018bounds}
Yadlowsky, S., Namkoong, H., Basu, S., Duchi, J., and Tian, L. (2018).
\newblock Bounds on the conditional and average treatment effect with
  unobserved confounding factors.
\newblock {\em arXiv preprint arXiv:1808.09521}.

\end{thebibliography}
\bibliographystyle{apalike}

\end{document}